\documentclass[12pt]{article}
\usepackage{amsmath}
\usepackage{graphicx,psfrag,epsf}
\usepackage{enumerate}
\usepackage{psfrag,epsf,amsmath}
\usepackage{url} % not crucial - just used below for the URL
\usepackage{amsmath,amssymb,amsfonts,amsthm,mathtools,mathrsfs}
\usepackage{psfrag,epsf,booktabs,float}

%\usepackage{setspace} 
%\doublespacing
%\linespread{1.5}

\usepackage{float}        % Para usar [H] nas tabelas
\usepackage{booktabs}     % Para \toprule, \midrule, \bottomrule
\usepackage{array}        % Melhorias nas colunas e alinhamento
\usepackage{makecell}     % Para \makecell e quebra de linha dentro de células
\usepackage{colortbl}     % Para \rowcolor
\usepackage{xcolor}       % Define cores, necessário para gray!10
\usepackage{amsmath}      % Para matemática avançada (displaystyle, frac, etc.)
\usepackage{amssymb}      % Para símbolos matemáticos
\usepackage{graphicx}     % Necessário para \resizebox
\usepackage[ruled,vlined]{algorithm2e}

\usepackage{amsmath, amssymb}
\usepackage{booktabs}
\usepackage{multirow}
\usepackage{longtable}
\usepackage{caption,xcolor}
%\usepackage[table]{xcolor}
%\usepackage{makecell}
%\usepackage{graphicx}
%\usepackage{float}

% Pacotes para tabelas bonitas
\usepackage{booktabs}  % Linhas bonitas
\usepackage{colortbl}  % Cores em tabelas

\usepackage{setspace}

%%%%%%%%%%%%%%%%%%%%%%%%%%%%  bigsubset

%%%%%%%%%%%%%%%%%%%%%%%%%%%%

\usepackage{bm,bbm}
\usepackage{color}
\usepackage{subfigure}
\usepackage[symbol]{footmisc}
\usepackage{scalefnt}
\usepackage{authblk} % author and affiliations
\usepackage{multirow,centernot}
\usepackage[colorlinks=true, citecolor=blue, urlcolor=blue]{hyperref}
\sloppy

\setcounter{secnumdepth}{4}

\usepackage{natbib}
\usepackage{dsfont}
\usepackage[title]{appendix}
\urlstyle{same}

\input cyracc.def

% % % % \documentclass[pdftex,12pt]{article}%
% % % % %\usepackage[pdftex,hidelinks]{hyperref}%
% % % % %\usepackage[pdftex]{graphicx}%
% % % % %\usepackage[normalem]{ulem}
% % % % % % % \usepackage{amsmath,amsthm,amsfonts,
% % % % % % % amsbsy,amssymb,upref,enumerate,bigstrut,
% % % % % % % color,mathtools,mathrsfs,float,bm,dsfont,scalefnt}
% % % % \usepackage{amsmath,amsthm,amsfonts,
% % % % amsbsy,amssymb,upref,enumerate,bigstrut,
% % % % color,mathtools,mathrsfs,float,bm,dsfont,scalefnt}
% % % % %%\usepackage{amsmath,amssymb,amsfonts,amsthm,mathtools}
% % % % %%\usepackage{dsfont}
% % % % %\usepackage{refcheck}
% % % % \usepackage{lipsum,lmodern}
% % % % \usepackage{stmaryrd}
% % % % \usepackage{authblk} % author and affiliations
% % % % \usepackage{natbib}
% % % % \usepackage{multirow}
% % % % \usepackage{subfigure}
% % % % %%\usepackage{epsfig}
% % % % %\usepackage{graphicx}
% % % % %\usepackage{epstopdf}
% % % % %\usepackage{epsfig}

\usepackage[left=1in,top=1.1in,right=0.5in,bottom=1in]{geometry}

%%%%%%%%%%%% definir sqrt alternativo
%\usepackage{letltxmacro}
%\makeatletter
%\let\oldr@@t\r@@t
%\def\r@@t#1#2{%
%	\setbox0=\hbox{$\oldr@@t#1{#2\,}$}\dimen0=\ht0
%	\advance\dimen0-0.2\ht0
%	\setbox2=\hbox{\vrule height\ht0 depth -\dimen0}%
%	{\box0\lower0.4pt\box2}}
%\LetLtxMacro{\oldsqrt}{\sqrt}
%\renewcommand*{\sqrt}[2][\ ]{\oldsqrt[#1]{#2}}
%\makeatother
%%%%%%%%%%%%%%%%%%%%%%%%%%%%%%%%%%%%%%%%%%%%%%%%%%%%%%%%%%%%%%%%%%%%%
\theoremstyle{definition}

\newtheorem{theorem}{Theorem}[section]

\newtheorem{corollary}[theorem]{Corollary}

\newtheorem{definition}{Definition}[section]

\newtheorem{proposition}[theorem]{Proposition}
\newtheorem{remark}[theorem]{Remark}
%\numberwithin{equation}{} % requires amsmath

\makeatletter
\def\@seccntformat#1{\@ifundefined{#1@cntformat}%
	{\csname the#1\endcsname\quad}%      default
	{\csname #1@cntformat\endcsname}%    enable individual control
}
\makeatother

%%%%%%%%%%%%%%%%%%%%%%%%%%%%%%%%%%%%%%%%%%%%%%%%%%%%%%%%%%%%%%%%%%%%%
\markright{{\scriptsize RWprewetting-13; version from \today%21.11.02
}}
%%%%%%%%%%%%%%%%%%%%%%%%%%%%%%%%%%%%%%%%%%%%%%%%%%%%%%%%%%%%%%%%%%%%%
\newif\ifShowComments
\ShowCommentstrue
\def\strutdepth{\dp\strutbox}
\def\druk#1{\strut\vadjust{\kern-\strutdepth
        {\vtop to \strutdepth{%
                \baselineskip\strutdepth\vss
                        \llap{\hbox{#1}\quad}\null}}}}

%%%%%%%%%%%%%%%%%%%%%%%%%%%%%%%%%%%%%%%%%%%%%%%%%%%%%%%%%%%%%%%%%%%%%

%\newcommand\scalemath[2]{\scalebox{#1}{\mbox{\ensuremath{\displaystyle #2}}}} % reduz tamanho de matriz e formulas mat

%%%%%%%%%%%%%%%%%%%%%%%%%%%%%%%%%%%%%%%%%%%%%%%%%%%%%%%%%%%%%%%%%%%%%

\title{\bf
%
%On the bias of the mth Gini index
%Bias analysis of the $m$th Gini index
Bias in estimating Theil, Atkinson, and dispersion indices for gamma mixture populations
}

\author{
\text{Jackson Assis}$^{1}$,
\text{Roberto Vila}$^{1}$
\, and
\text{Helton Saulo}$^{1,2}$\thanks{Corresponding author: Helton Saulo, email: {heltonsaulo@gmail.com}}
%\,\,and
%\text{Eduardo Nakano}$^{1}$
\\
{\small $^{1}$ Department of Statistics, University of Brasilia, Brasilia, Brazil}\\
{\small $^{2}$ Department of Economics, Federal University of Pelotas, Pelotas, Brazil}\\
}
%\date{}                     %% if you don't need date to appear
\setcounter{Maxaffil}{0}

%%%%%%%%%%%%%%%%%%%%%%%%%%%%%%%%%%%%%%%%%%%%%%%%%%%%%%%%%%%%%%%%%%%%%

\begin{document}
	\maketitle 	
	\begin{abstract}
This paper examines the finite-sample bias of estimators for the Theil and Atkinson indices, as well as for the variance-to-mean ratio (VMR), under the assumption that the population follows a finite mixture of gamma distributions with a common rate parameter. Using Mosimann’s proportion–sum independence theorem and the structural relationship between the gamma and Dirichlet distributions, these estimators were rewritten as functions of Dirichlet vectors, which enabled the derivation of closed-form analytical expressions for their expected values.
A Monte Carlo simulation study evaluates the performance of both the traditional and bias-corrected estimators across a range of mixture scenarios and sample sizes, revealing systematic bias induced by population heterogeneity and demonstrating the effectiveness of the proposed corrections, particularly in small and moderate samples. An empirical application to global per capita GDP data further illustrates the practical relevance of the methodology and confirms the suitability of gamma mixtures for representing structural economic heterogeneity.
\end{abstract}
	\smallskip
	\noindent
	{\small {\bfseries Keywords.} {Inequality; Gamma mixtures; Dirichlet-Gamma structure; Mosimann’s theorem; Theil and Atkinson; Monte Carlo; GDP data.}}
	\\
	{\small{\bfseries Mathematics Subject Classification (2010).} MSC 60E05 $\cdot$ MSC 62Exx $\cdot$ MSC 62Fxx.}
%	
%%\tableofcontents

Keywords:

\section{Introduction}

The measurement of inequality plays a central role in the applied social sciences, and indices such as those of Gini \citep{Gini1912}, Theil \citep{Theil1967}, and Atkinson \citep{Atkinson1970} have become standard references for quantifying the concentration of income and wealth. The construction of these measures is not neutral, since each index is based on conceptual choices about what should be regarded as inequality and how differences among individuals should be weighted. For this reason, the classical literature focused on discussing the ethical and axiomatic foundations underlying their formulation \citep{sen1973, Atkinson1970}. A systematic presentation of these bases can be found in \cite{Chakravarty2008}. In parallel, the debate gradually began to incorporate inferential aspects, with increasing attention to the behavior of estimators in finite samples and under different forms of population heterogeneity.

Among the indices that explicitly incorporate normative criteria for evaluating inequality, the contributions of Theil and Atkinson stand out, as both embed clear normative principles concerning inequality aversion and sensitivity to different regions of the distribution \citep{bourguignon1979, Cowell2000}. In addition, they satisfy fundamental properties such as invariance to proportional changes and the transfer principle (Pigou–Dalton), which makes them useful for group comparisons and for social welfare analyses \citep{Shorrocks1980}. Their conceptual structures, however, are not identical: while the Theil indices allow rigorous additive decomposition between and within groups, the Atkinson index, although it also permits component analyses, is based on the equally distributed equivalent income and does not admit the same exact form of decomposition.

Despite their desirable properties, these indices involve ratios and nonlinear transformations in their formulation, which makes rigorous statistical analysis in finite samples difficult. Most of the literature assumes homogeneous populations modeled by specific continuous distributions (such as the lognormal, Pareto, or gamma), meaning that the impact of structural heterogeneity or mixture models on estimator performance remains relatively underexplored.

When the hypothesis of population homogeneity is adopted, among the continuous positive distributions used in inequality analysis, the gamma distribution stands out as a particularly convenient choice. Its functional form belongs to the exponential family and admits closed-form moments, which facilitates the analytical treatment of measures based on ratios and logarithmic transformations, such as the Theil and Atkinson indices. Moreover, its shape and rate parameters provide flexibility to represent different degrees of dispersion and skewness, helping explain its extensive empirical use \citep{salem1974, McDonald1979, bourguignon1979}.

However, the simple gamma model presumes population homogeneity, a hypothesis that is often untenable when the phenomenon under study results from the composition of subgroups with distinct characteristics. In such cases, the observed distribution may display multimodality or heavier tails than those captured by a single gamma distribution. As pointed out by \cite{chotikapanich2008}, mixture models of gamma distributions offer a more flexible alternative, preserving parametric interpretability while accommodating structural heterogeneity.

This aspect is particularly relevant in the inferential domain. Even under the simple gamma distribution, recent studies have shown that commonly used estimators of inequality indices—especially those of Gini, Theil, and Atkinson-may exhibit non-negligible bias in finite samples; see \cite{Deltas2003}, \cite{VilaSaulo2025TheilAtkinsonGamma}, \cite{VilaSaulo2025GiniM}, and \cite{Shih2025}. If bias already emerges in homogeneous populations, its magnitude tends to increase in the presence of structural heterogeneity, given the nonlinear nature of these measures. Thus, studying estimator behavior under mixture models becomes a methodological necessity rather than a mere technical extension.

Recently, \cite{VilaSaulo2025GiniMixture} extended the analysis of inequality estimation to heterogeneous populations represented by finite mixtures of Gamma distributions. Building on the analytical approach of \cite{BaydilEtAl2025}, the authors obtained a closed-form formula for the bias of the Gini estimator in this more general setting and showed that heterogeneity among mixture components amplifies sample bias. In the same spirit, \cite{VilaSaulo2025TheilAtkinsonGamma} applied the same analytical approach to study the bias of the Theil and Atkinson index estimators under the assumption of a classical Gamma distribution. The authors obtained closed-form expressions for the respective biases and proposed explicit bias-corrected estimators.

Motivated by this recent progress, the present work contributes to the ongoing analytical characterization of bias. Its aim is to investigate the behavior of the estimators of the Theil and Atkinson indices and of the dispersion index (VMR) in populations whose distribution is represented by a finite mixture of Gamma distributions with a common rate parameter. On the theoretical front, probabilistic tools based on Mosimann’s proportion–sum independence theorem and on properties of the Dirichlet distribution are employed to derive closed-form expressions for the expected value of the estimators and, consequently, to analytically assess the bias introduced by the structural heterogeneity of the mixture. Monte Carlo simulation experiments are conducted to empirically evaluate the performance of the sample estimators and their corrected versions, which follow directly from the analytical bias expressions. An empirical application is developed using real socioeconomic data (GDP per capita), in which a gamma mixture distribution is fitted to the dataset. This step allows assessment of the adequacy of the proposed models and of the practical relevance of the bias corrections, complementing the theoretical and simulation analyses. The results highlight the applicability of the corrections derived and their usefulness in real socioeconomic contexts.

The rest of this paper unfolds as follows. Section \ref{sec:02} presents the necessary preliminaries, including notation, definitions, and key probabilistic results used throughout the paper. Section \ref{sec:theil-mixture-gama} derives the population versions of the Theil, Atkinson, and VMR indices under finite gamma mixture models. In Section \ref{Deriving estimator biases}, we establish closed-form analytical expressions for the finite-sample biases of the corresponding estimators. Section \ref{cap:simulacao} reports a Monte Carlo simulation study assessing the empirical performance of the estimators and their bias-corrected versions. Section \ref{cap:dados-reais} then presents an empirical application to global per capita GDP data, illustrating the practical relevance of the proposed corrections. Finally, Section \ref{concluding_remarks} provides concluding remarks and discusses potential directions for future research.

\section{Preliminary results and some definitions}\label{sec:02}

\begin{definition}[Mixture of gammas] \label{definition:mixture-gamas}
	A random variable $X$ follows a gamma mixture distribution with parameter vector $\bm{\theta} = (\pi_1, \dots, \pi_m, \alpha_1, \dots, \alpha_m, \lambda)^\top$, denoted by $X \sim \text{GM}(\bm{\theta})$, if its probability density function is given by
	\begin{equation}
		\label{eq:mixture-gama}
		f_X(x;\bm{\theta}) = \sum_{j=1}^{m} \pi_j f_{Z_j}(x;\alpha_j,\lambda), \quad x > 0,
	\end{equation}
	where $m \in \mathbb{N}$ is the number of components, $\pi_j$ are the mixing proportions, with $\sum_{j=1}^{m} \pi_j = 1$ and $\pi_j > 0$. Moreover, $f_{Z_j}(x;\alpha_j,\lambda)$ is the density of the gamma-distributed random variable $Z_j$ with shape parameter $\alpha_j>0$ and rate $\lambda>0$, denoted by $Z_j \sim \text{Gamma}(\alpha_j,\lambda)$, that is,
	\begin{equation*}
		%\label{eq:gama-componente}
		f_{Z_j}(x;\alpha_j,\lambda)
		=
		\frac{\lambda^{\alpha_j}}{\Gamma(\alpha_j)}\,
		x^{\alpha_j - 1} e^{-\lambda x}, \quad x > 0,
	\end{equation*}
	where $\Gamma(\cdot)$ denotes the (complete) gamma function.
\end{definition}

Figure 1 of \cite{VilaSaulo2025GiniMixture} illustrates how the gamma mixture distribution acts as a flexible framework that includes numerous important special cases.
\begin{remark}
	Note that for $m=2$, with $\boldsymbol{\theta}' = (\pi_1,\pi_2,\alpha_1,\alpha_2,\lambda)^\top$ and $\boldsymbol{\theta}'' = (\pi_2,\pi_1,\alpha_2,\alpha_1,\lambda)^\top$, we have $f(x;\bm{\theta}') = f(x;\bm{\theta}'')$ for all $x > 0$. In other words, the distribution of gamma mixtures is not identifiable. This lack of identifiability may lead to unstable parameter estimates, convergence issues in estimation algorithms, multimodality in the likelihood function, and misleading interpretations regarding the mixture components' structure.
	To address these challenges, authors such as \cite{mclachlan2000finite} and \cite{fruhwirth2006finite} propose strategies like imposing parameter constraints, adopting Bayesian approaches, and applying relabeling methods to realign the samples generated by MCMC algorithms and ensure more consistent inference.
\end{remark}

It is clear that $X$ in Definition~\ref{eq:mixture-gama} has the following representation
\begin{align}\label{representation-X}
	X
	=
	\sum_{j=1}^{m}\mathds{1}_{\{Y=j\}}Z_j,
\end{align}
where $Y\in\{1,\ldots,m\}$ is a discrete random variable  with $\mathbb{P}(Y=j)=\pi_j$, independent of $Z_j\sim{\rm Gamma}(\alpha_j,\lambda)$, for each $j=1,\ldots,m$, and $\mathds{1}_A$ is the indicator function of the event $A$.

Since the sequence of events $\{\{Y=j\}:j=1,\ldots, m\}$ is a partition of the sample space, applying the Law of total probability and the independence of $Y$ and $Z_j$, the  cumulative distribution function (CDF) of $X$ can be written as
\begin{align*}
	F_X(x;\boldsymbol{\theta})
	=
	\sum_{j=1}^{m}
	\pi_j
	F_{Z_j}(x;\alpha_j,\lambda).
\end{align*}

As a consequence, the following result is obtained:
\begin{proposition}
	\label{proposition:ev-gx}
	If $X \sim \text{GM}(\bm{\theta})$, then, for any Borel-measurable function $h$ we have
	\[
	\mathbb{E}[h(X)]
	=
	\sum_{j=1}^m\pi_j\mathbb{E}[h(Z_j)],
	\]
	where
	$Z_j\sim{\rm Gamma}(\alpha_j,\lambda)$, $j=1,\ldots,m$.
\end{proposition}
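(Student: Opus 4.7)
The plan is to prove this proposition directly from the stochastic representation~\eqref{representation-X} of $X$, rather than from the density in~\eqref{eq:mixture-gama}. This keeps the argument short and mirrors the probabilistic style the paper adopts elsewhere.

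First, I would observe that since the events $\{Y=j\}$, $j=1,\ldots,m$, form a partition of the sample space, exactly one indicator $\mathds{1}_{\{Y=j\}}$ is equal to $1$ with probability one. Hence on the event $\{Y=j\}$ we have $X=Z_j$, and more generally
\begin{align*}
h(X)
=
\sum_{j=1}^{m}\mathds{1}_{\{Y=j\}}\,h(Z_j),
\end{align*}
which holds pointwise for any Borel-measurable $h$ (no issue with the composition $h(0)$ arising, since only the $j$th summand is nonzero when $Y=j$). Taking expectations gives
\begin{align*}
\mathbb{E}[h(X)]
=
\sum_{j=1}^{m}\mathbb{E}\bigl[\mathds{1}_{\{Y=j\}}\,h(Z_j)\bigr].
\end{align*}

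Next, I would invoke the independence between $Y$ and each $Z_j$ stated just after~\eqref{representation-X} to factor each term as $\mathbb{P}(Y=j)\,\mathbb{E}[h(Z_j)]=\pi_j\,\mathbb{E}[h(Z_j)]$, yielding the claimed identity. The finiteness of $m$ makes the interchange of sum and expectation immediate, so no dominated/monotone convergence argument is needed; implicitly one either restricts to $h\geq 0$ or assumes $\mathbb{E}|h(X)|<\infty$ so that the sum is well defined.

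There is essentially no obstacle here: the only delicate point is ensuring the pointwise rewriting of $h(X)$ is valid for arbitrary Borel $h$, which is handled by the partition property of the events $\{Y=j\}$. As a sanity check, the same identity can be recovered analytically by substituting~\eqref{eq:mixture-gama} into $\mathbb{E}[h(X)]=\int_0^\infty h(x) f_X(x;\bm{\theta})\,\mathrm{d}x$ and swapping the finite sum with the integral; this alternative route could be mentioned as a one-line verification but the representation-based argument is cleaner and sets the stage for later use of~\eqref{representation-X} in the bias derivations.
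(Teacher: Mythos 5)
Your proof is correct. The paper does not write out a separate proof: it derives $F_X(x;\boldsymbol{\theta})=\sum_{j=1}^m\pi_j F_{Z_j}(x;\alpha_j,\lambda)$ from the law of total probability and then presents the proposition as an immediate consequence, i.e.\ the implicit argument is the analytic one, $\mathbb{E}[h(X)]=\int_0^\infty h(x)\,\mathrm{d}F_X(x)=\sum_{j=1}^m\pi_j\int_0^\infty h(x)\,\mathrm{d}F_{Z_j}(x)$, which is exactly the ``sanity check'' you relegate to the end. Your primary route instead works pointwise from the representation \eqref{representation-X}: since $\{Y=j\}$ partitions the sample space, $h(X)=\sum_{j=1}^m\mathds{1}_{\{Y=j\}}h(Z_j)$ almost surely, and the independence of $Y$ and $Z_j$ factors each expectation into $\pi_j\,\mathbb{E}[h(Z_j)]$. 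The two arguments are essentially equivalent in difficulty; yours has the small advantage of reusing the same conditioning-on-$Y$ machinery that the paper later needs for Theorems \ref{theorem:expected-value-tt} and \ref{Expected Value of the Atkinson estimator-1}, while the paper's version requires no independence structure beyond the mixture form of the distribution itself. Your remark that integrability of $h(X)$ (or nonnegativity of $h$) should be assumed for the identity to be meaningful is a fair caveat that the paper leaves implicit.
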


\begin{proposition}\label{proposition:ev-gx-1}
	If $X_1,\ldots,X_n$ is a random sample (i.i.d.) of size $n$ from $X \sim \text{GM}(\bm{\theta})$, then, for any Borel-measurable function $h$, we have
	\begin{align*}
		\mathbb{E}\left[h\left(\sum_{i=1}^{n}X_i\right)\right]
		=
		\sum_{1\leqslant j_1,\ldots,j_n\leqslant m} \pi_{j_1}\cdots \pi_{j_n}\:
		\mathbb{E}\left[h\left(Z^*_{j_1,\ldots,j_n}\right)\right],
	\end{align*}
	where $Z^*_{j_1,\ldots,j_n}\sim{\rm Gamma}(\sum_{i=1}^n\alpha_{j_i},\lambda)$.
\end{proposition}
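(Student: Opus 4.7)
The plan is to exploit the mixture representation \eqref{representation-X} applied independently to each sample unit, conditioning on the latent component labels, and then invoke the classical additivity property of gamma random variables sharing the same rate.

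First I would set up independent copies of the representation. By Definition~\ref{definition:mixture-gamas} and \eqref{representation-X}, for each $i=1,\ldots,n$ I write $X_i = \sum_{j=1}^m \mathds{1}_{\{Y_i = j\}} Z_{i,j}$, where $Y_1,\ldots,Y_n$ are i.i.d.\ with $\mathbb{P}(Y_i = j) = \pi_j$ and $\{Z_{i,j} : 1\leqslant i\leqslant n,\, 1\leqslant j\leqslant m\}$ are mutually independent, independent of the $Y_i$'s, with $Z_{i,j}\sim\mathrm{Gamma}(\alpha_j,\lambda)$. On the event $\{Y_1=j_1,\ldots,Y_n=j_n\}$ one has $\sum_{i=1}^{n}X_i = \sum_{i=1}^{n} Z_{i,j_i}$.

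Next I would condition on the vector of labels. By the law of total expectation,
\begin{align*}
\mathbb{E}\!\left[h\!\left(\sum_{i=1}^{n}X_i\right)\right]
= \sum_{1\leqslant j_1,\ldots,j_n\leqslant m}
\mathbb{P}(Y_1=j_1,\ldots,Y_n=j_n)\,
\mathbb{E}\!\left[h\!\left(\sum_{i=1}^{n}X_i\right)\,\bigg|\, Y_1=j_1,\ldots,Y_n=j_n\right].
\end{align*}
Independence of the $Y_i$'s gives $\mathbb{P}(Y_1=j_1,\ldots,Y_n=j_n)=\pi_{j_1}\cdots\pi_{j_n}$. For the conditional expectation, using independence of $Y$'s and $Z$'s, the conditional distribution of $\sum_{i=1}^n X_i$ given the labels equals the law of $\sum_{i=1}^n Z_{i,j_i}$, which is a sum of independent gammas with common rate $\lambda$. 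The standard convolution property of the gamma family then yields $\sum_{i=1}^n Z_{i,j_i} \sim \mathrm{Gamma}\!\left(\sum_{i=1}^n\alpha_{j_i},\lambda\right)$, matching the definition of $Z^*_{j_1,\ldots,j_n}$.

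There is really no serious obstacle here; the only subtlety is bookkeeping to ensure that the replicated labels $Y_1,\ldots,Y_n$ and the doubly-indexed gammas $Z_{i,j}$ really are mutually independent, so that both the factorization of the joint probability and the identification of the conditional law of the sum are valid. Combining the two displays above gives the stated formula, completing the proof.
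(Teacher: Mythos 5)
Your proof is correct and in substance identical to the paper's: both decompose the expectation over all component-label tuples $(j_1,\ldots,j_n)$ with weights $\pi_{j_1}\cdots\pi_{j_n}$ and then use the convolution property of gammas with common rate $\lambda$. The paper obtains this decomposition by expanding the product of mixture densities inside the $n$-fold integral rather than by conditioning on latent labels as you do, but this is the same computation in different notation; if anything, your version makes explicit the gamma-additivity step that the paper leaves implicit when it defines $Z^*_{j_1,\ldots,j_n}\equiv\sum_{i=1}^{n}Z_{j_i}$.
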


\begin{proof}
	The proof is immediate since,  by the i.i.d. nature of the variables $X_1,\ldots,X_n$, for any Borel-measurable function $h$,  we have
	\begin{align*}
%		\label{id-log}
		\mathbb{E}\left[h\left(\sum_{i=1}^{n}X_i\right)\right]
		&=
		\int_{(0,\infty)^n}
		h\left(\sum_{i=1}^{n}x_i\right) f_{X_1,\ldots,X_n}(\boldsymbol{x}) {\rm d}\boldsymbol{x}
		\nonumber
		\\[0,2cm]
		& =
		\sum_{1\leqslant j_1,\,\dots,\, j_n\leqslant m}
		\pi_{j_1}\cdots \pi_{j_n}
		\int_{(0,\infty)^n} h\left(\sum_{i=1}^{n}x_i\right) f_{Z_{j_1}}(x_1)\cdots f_{Z_{j_n}}(x_n) {\rm d}\boldsymbol{x}
		\nonumber
		\\[0,2cm]
		& =
		\sum_{1\leqslant j_1,\,\dots,\, j_n\leqslant m}
		\pi_{j_1}\cdots \pi_{j_n}
		\mathbb{E}\left[h\left(\sum_{i=1}^{n}Z_{j_i}\right)\right],
	\end{align*}
	where $Z_{j_1},\ldots,Z_{j_n}$ are independent gamma random variables with parameter vectors $(\alpha_{j_1},\lambda)^\top,\ldots,(\alpha_{j_n},\lambda)^\top$, respectively. By defining $Z^*_{j_1,\ldots,j_n}\equiv\sum_{i=1}^{n}Z_{j_i}$, the proof follows.
\end{proof}

\begin{proposition}
	\label{proposition:mean-variance}
	If $X \sim \text{GM}(\bm{\theta})$, then its $p$-th moment is given by
	\[
	\mathbb{E}[X^p]
	=
	{1\over\lambda^p}
	\sum_{j=1}^m\pi_j \,{\Gamma(p+\alpha_j)\over\Gamma(\alpha_j)},
	\quad
	p>-\alpha_{(1)},
	\]
	where $\alpha_{(1)}=\min\{\alpha_1,\ldots,\alpha_m\}$ and $\Gamma(\cdot)$ is the gamma function.

	In particular, we have
	\begin{align*}
		&\mu = \mathbb{E}[X]
		=
		\frac{1}{\lambda}
		\sum_{j=1}^{m} \pi_j \alpha_j,
		\\[0,2cm]
		&\sigma^2
		=
		\text{Var}(X)
		=
		\frac{1}{\lambda^2}
		\left[
		\sum_{j=1}^{m}
		\pi_j
		\alpha_j(\alpha_j+1)
		-
		\left(
		\sum_{j=1}^{m} \pi_j \alpha_j\right)^2
		\right].
	\end{align*}
\end{proposition}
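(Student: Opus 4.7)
The plan is to apply Proposition~\ref{proposition:ev-gx} directly with the Borel-measurable function $h(x) = x^p$ for $x > 0$, reducing the computation of $\mathbb{E}[X^p]$ to a weighted sum of $p$-th moments of pure gamma random variables $Z_j \sim \text{Gamma}(\alpha_j,\lambda)$. The $p$-th moment of a gamma variable is a standard identity: by the substitution $u = \lambda x$ in the integral defining $\mathbb{E}[Z_j^p]$ and recognizing the resulting integral as $\Gamma(p+\alpha_j)$, one obtains
\[
\mathbb{E}[Z_j^p] = \frac{1}{\lambda^p}\,\frac{\Gamma(p+\alpha_j)}{\Gamma(\alpha_j)},
\]
provided $p + \alpha_j > 0$. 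Summing with the mixing weights $\pi_j$ yields the claimed formula.

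The one delicate point to address is the range of validity. Since $\Gamma(p+\alpha_j)$ is finite if and only if $p + \alpha_j > 0$, i.e.\ $p > -\alpha_j$, and the claimed identity must hold simultaneously for every term in the sum, the admissible range is $p > -\min_{1\le j\le m}\alpha_j = -\alpha_{(1)}$. I would state this explicitly as the justification for the moment's existence, noting that for such $p$ the function $x \mapsto x^p$ is integrable with respect to each component density $f_{Z_j}$, hence with respect to the mixture density $f_X$ by linearity.

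For the specializations, I would take $p=1$ and use $\Gamma(1+\alpha_j) = \alpha_j\,\Gamma(\alpha_j)$ to get
\[
\mu = \mathbb{E}[X] = \frac{1}{\lambda}\sum_{j=1}^m \pi_j \alpha_j.
\]
For the variance, I would take $p=2$ and use $\Gamma(2+\alpha_j) = \alpha_j(\alpha_j+1)\,\Gamma(\alpha_j)$ to obtain
\[
\mathbb{E}[X^2] = \frac{1}{\lambda^2}\sum_{j=1}^m \pi_j\,\alpha_j(\alpha_j+1),
\]
then invoke $\text{Var}(X) = \mathbb{E}[X^2] - \mu^2$ to recover the stated expression.

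There is no real obstacle here: the result is essentially a one-line consequence of Proposition~\ref{proposition:ev-gx} together with the known $p$-th moment of a gamma distribution. The only care needed is in tracking the condition $p > -\alpha_{(1)}$, which I would verify by noting that each individual gamma moment exists precisely under $p > -\alpha_j$ and that taking the minimum over $j$ gives the joint condition for the mixture.
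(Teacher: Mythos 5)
Your proof is correct and follows exactly the paper's route: apply Proposition~\ref{proposition:ev-gx} with $h(x)=x^p$, use the standard $p$-th moment of a $\mathrm{Gamma}(\alpha_j,\lambda)$ variable, and specialize to $p=1,2$ for the mean and variance. Your explicit tracking of the validity condition $p>-\alpha_{(1)}$ is a welcome addition that the paper leaves implicit.
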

\begin{proof}
	The proof follows directly from Proposition~\ref{proposition:ev-gx} with $h(x)=x^p$ plus the fact that $Z_j\sim{\rm Gamma}(\alpha_j,\lambda)$.
\end{proof}

{
	In Statistics, Mosimann's proportion-sum independence theorem \citep{Mosimann1962} is a key result in the study of proportions, especially in relation to the Dirichlet distribution.

	\begin{theorem}[\cite{Mosimann1962}] \label{Mosimann}
		Let $\{Y_j:j=1,\ldots,n\}$
		be $n$ positive and mutually independent random variables. Define the random variable $D_j= Y_j/\sum_{i=1}^n Y_i$ for $j= 1,\ldots,n$. Then,
		each $D_j$ is distributed independently of $\sum_{i=1}^n Y_i$ iff $(D_1,\ldots,D_n)^\top\sim \text{Dirichlet}(\boldsymbol{a})$ on $\{(d_1,\ldots,d_n)^\top:d_j>0,1\leqslant j\leqslant n, \sum_{j=1}^n d_j=1\}$.
	\end{theorem}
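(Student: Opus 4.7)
Write $S = \sum_{i=1}^n Y_i$ so that $Y_j = D_j S$. I would prove each direction of the biconditional separately, with the forward direction carrying most of the weight via a Lukacs-style functional argument on Laplace transforms, and the reverse direction following by matching marginal densities and appealing to the gamma--Dirichlet relationship.

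\textbf{Forward direction} $(\Rightarrow)$. Assuming every $D_j$ is independent of $S$, I would pass to Laplace transforms. Let $L_j(s) = \mathbb{E}[e^{-sY_j}]$. By mutual independence of the $Y_j$'s, the joint Laplace transform of $(Y_1,\ldots,Y_n)$ at $(s_1,\ldots,s_n)$ equals $\prod_{j=1}^n L_j(s_j)$. Using the assumed independence of $(D_1,\ldots,D_n)$ from $S$, the same object can also be written as $\mathbb{E}[\phi_S(\sum_{j} s_j D_j)]$, where $\phi_S$ is the Laplace transform of $S$. Equating these two expressions and differentiating in the $s_j$'s would produce a multiplicative functional equation for the $L_j$'s. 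A Lukacs-type argument then forces $L_j(s) = (1 + s/\lambda)^{-a_j}$ for some common rate $\lambda > 0$, so that $Y_j \sim \text{Gamma}(a_j, \lambda)$. A Jacobian change of variables from $(Y_1,\ldots,Y_n)$ to $(D_1,\ldots,D_{n-1}, S)$ then delivers the $\text{Dirichlet}(a_1,\ldots,a_n)$ density for $(D_1,\ldots,D_n)$ (and, as a bonus, the independence from $S$ drops out explicitly).

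\textbf{Reverse direction} $(\Leftarrow)$. Assuming $(D_1,\ldots,D_n) \sim \text{Dirichlet}(\boldsymbol{a})$, the joint density of the independent $Y_j$'s factors as $\prod_{j=1}^n f_{Y_j}(y_j)$. After the Jacobian change of variables $(Y_1,\ldots,Y_n) \mapsto (D_1,\ldots,D_{n-1},S)$, the marginal density of $(D_1,\ldots,D_{n-1})$ becomes an integral of $s^{n-1} \prod_j f_{Y_j}(d_j s)$ against $s$. Matching this marginal to the prescribed Dirichlet density gives an integral equation whose only solution is $f_{Y_j}(y) = \lambda^{a_j} y^{a_j - 1} e^{-\lambda y} / \Gamma(a_j)$ for some common rate $\lambda$. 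Once this gamma form is in hand, the classical gamma-to-Dirichlet calculation delivers the independence of $(D_1,\ldots,D_n)$ from $S$.

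\textbf{Main obstacle.} The hardest step will be the Lukacs-type characterization in the forward direction: proving that the functional equation obtained from the two factorizations of the joint Laplace transform admits only gamma Laplace transforms with a shared rate parameter. This combines an analytic component (regularity of the $L_j$ on $(0, \infty)$) with a uniqueness argument pinning down the shape parameters and forcing the common $\lambda$. The inversion of the integral equation in the reverse direction is the secondary technical hurdle, but it can be handled with Mellin transforms or reduced to the same class of functional constraints treated in the forward direction.
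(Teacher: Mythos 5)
The paper offers no proof of this statement: it is quoted from Mosimann (1962) as a known characterization, so there is no in-paper argument to compare yours against. Judged on its own terms, your outline follows the classical Lukacs--Mosimann route, but it contains one concrete gap and leaves the genuinely hard step as an assertion.

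The gap is in the forward direction. The hypothesis is that \emph{each} $D_j$ is independent of $S=\sum_{i=1}^n Y_i$, i.e.\ componentwise independence only. Your rewriting of the joint Laplace transform as $\mathbb{E}\bigl[\phi_S\bigl(\sum_{j} s_j D_j\bigr)\bigr]$ requires the \emph{vector} $(D_1,\ldots,D_n)^\top$ to be independent of $S$, which is strictly stronger and is in fact part of the conclusion rather than the hypothesis. The standard repair works one index at a time: for fixed $j$, the pair $Y_j$ and $\sum_{i\neq j}Y_i$ is independent, and the assumption $D_j \perp S$ is exactly the hypothesis of Lukacs' two-variable theorem for that pair, which yields that $Y_j$ and $\sum_{i\neq j}Y_i$ are gamma with a common rate $\lambda_j$; running over $j$ and using mutual independence of the $Y_i$ forces all the $\lambda_j$ to coincide (this also needs non-degeneracy of the $Y_j$, which the statement tacitly assumes). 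Only after the gamma form is secured does the Jacobian computation deliver the Dirichlet law together with the full vector independence.

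In the reverse direction, your claim that the integral equation obtained by matching $\int_0^\infty s^{n-1}\prod_j f_{Y_j}(d_j s)\,\mathrm{d}s$ to the Dirichlet density admits \emph{only} the common-rate gamma solution is precisely the substantive content of that implication, and you assert it rather than prove it. It cannot be bypassed: if some non-gamma independent family produced Dirichlet proportions, Lukacs' theorem would give $D_j \not\perp S$ for that family and the stated equivalence would fail. Moreover, the Mellin-transform route you suggest is delicate because $Y_j$ and $S$ are not independent (the sum contains $Y_j$), so $\mathbb{E}[D_j^k]=\mathbb{E}[Y_j^k S^{-k}]$ does not factor a priori --- that factorization is essentially the independence one is trying to establish. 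This uniqueness argument is where the real work of Mosimann's converse lives, and it needs to be carried out, not invoked.
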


}

{
	The next technical result will be crucial for the derivation of estimator biases presented in Section \ref{Deriving estimator biases}.
	\begin{proposition} \label{proposition:moments-dirichlet-0}
		For \((D_1,\ldots,D_n)^\top \sim \text{Dirichlet}(\alpha_1, \ldots, \alpha_{n}) \), we have
		\begin{multline}
			\mathbb{E}\left[
			\left(\prod_{j=1}^n D_j^{c_j}\right)^r \log\left(\prod_{j=1}^n D_j^{c_j}\right)
			\right]
			=
			\frac{\Gamma(\sum_{l=1}^n \alpha_l)}{\Gamma\bigl(\sum_{l=1}^n (\alpha_l+ rc_l)\bigr)}
						\\[0,2cm]
						\times
			\prod_{j=1}^{n} \frac{\Gamma(\alpha_{j}+rc_j)}{\Gamma(\alpha_{j})}
			\left[
			c_j\psi(\alpha_{j}+rc_j)
			-
			\sum_{l=1}^n c_l
			\psi\left(\sum_{l=1}^n (\alpha_l+ rc_l)\right)
			\right],          \label{identity:expected-value-prod-pi-0}
		\end{multline}
		where $c_1,\ldots,c_n\geqslant0$, $r\geqslant0$ and $\psi(\cdot)$ denotes the digamma function.
		%where, \(\alpha_\Sigma = \sum_{j=1}^k \alpha_j\).
	\end{proposition}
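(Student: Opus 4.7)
The plan is to derive the identity by differentiating the standard Dirichlet moment-generating identity in a single scalar parameter. First, I would record that for $(D_1,\ldots,D_n)^\top\sim\text{Dirichlet}(\alpha_1,\ldots,\alpha_n)$ and exponents $s_j>-\alpha_j$, direct integration against the Dirichlet density (and recognition of the resulting normalizing constant) yields
\begin{equation*}
\mathbb{E}\left[\prod_{j=1}^n D_j^{s_j}\right]=\frac{\Gamma\left(\sum_{l=1}^n\alpha_l\right)}{\Gamma\left(\sum_{l=1}^n(\alpha_l+s_l)\right)}\prod_{j=1}^n\frac{\Gamma(\alpha_j+s_j)}{\Gamma(\alpha_j)}.
\end{equation*}

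Next I would specialize to $s_j=rc_j$ and view both sides as smooth functions of the scalar $r\geqslant0$. Setting
\begin{equation*}
u(r):=\mathbb{E}\left[\left(\prod_{j=1}^n D_j^{c_j}\right)^{\!r}\right]=\frac{\Gamma\left(\sum_{l=1}^n\alpha_l\right)}{\Gamma\left(\sum_{l=1}^n(\alpha_l+rc_l)\right)}\prod_{j=1}^n\frac{\Gamma(\alpha_j+rc_j)}{\Gamma(\alpha_j)},
\end{equation*}
I would observe that differentiating inside the expectation, via the elementary identity $\partial_r a^r=a^r\log a$ for $a>0$, produces exactly the left-hand side of the claim.

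On the closed-form side of $u(r)$, I would compute $u'(r)$ by taking the logarithmic derivative and applying the chain rule $\frac{d}{dr}\log\Gamma(z(r))=z'(r)\psi(z(r))$, yielding
\begin{equation*}
\frac{u'(r)}{u(r)}=\sum_{j=1}^n c_j\,\psi(\alpha_j+rc_j)-\left(\sum_{l=1}^n c_l\right)\psi\!\left(\sum_{l=1}^n(\alpha_l+rc_l)\right).
\end{equation*}
Multiplying by $u(r)$ reproduces the right-hand side of the proposition, with the understanding that the sum over $j$ in the bracket is meant to sit outside the product (the displayed formula appears to contain a small typographical misplacement of $\sum_j$).

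The only delicate point is justifying the interchange of $\partial_r$ with $\mathbb{E}$. I would handle this by dominated convergence on a compact neighborhood of any fixed $r_0\geqslant 0$: since $c_j\geqslant0$ and $D_j\in(0,1)$ almost surely, $\prod_j D_j^{c_j}\in(0,1]$, so $(\prod_j D_j^{c_j})^r$ is uniformly bounded in $r$ on such a neighborhood, while $|\log D_j|$ is Dirichlet-integrable because $\int_0^1|\log x|\,x^{\alpha-1}(1-x)^{\beta-1}\,\mathrm{d}x<\infty$ for all $\alpha,\beta>0$. The main (mild) obstacle is this domination bookkeeping; once settled, the digamma identity for the logarithmic derivative delivers the result immediately.
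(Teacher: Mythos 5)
Your proof is correct and follows essentially the same route as the paper's: both differentiate the Dirichlet mixed-moment formula $\mathbb{E}\left[\prod_{j=1}^n D_j^{s_j}\right]$ in a scalar parameter and use the digamma function as the logarithmic derivative of $\Gamma$, the only cosmetic difference being that the paper differentiates in an auxiliary parameter $p$ at $p=0$ with exponents $(p+r)c_j$ rather than directly in $r$. Your side remark about the displayed statement is also accurate: the bracket should carry $\sum_{j=1}^n$ and sit outside the product of gamma ratios, as confirmed by how the proposition is later applied with $c_j=\delta_{ji}$.
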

	\begin{proof}
		By using the identity
		$$
		x^r\log(x)
		=
		\lim_{p\to 0} {\partial x^{p+r}\over \partial p},
		\quad x>0, \ r\geqslant 0,
		$$
		we can write
		\begin{align}\label{main-id}
			\mathbb{E}\left[
			\left(\prod_{j=1}^n D_j^{c_j}\right)^r \log\left(\prod_{j=1}^n D_j^{c_j}\right)
			\right]
			=
			\lim_{p\to 0}
			{\partial \over\partial p}
			\mathbb{E}\left[
			\prod_{j=1}^n D_j^{(p+r)c_j}
			\right],
		\end{align}
		where the application of the dominated convergence theorem permitted the exchange of the order of expectation, limit, and differentiation.
		%As \( D_j \sim \text{Beta}\big(\alpha_j, \sum_{l=1}^k\alpha_l - \alpha_j\big) \), $j=1,\ldots,k$,
		Applying the well-established formula for the mixed moment of $(D_1,\ldots,D_n)^\top$ \citep[see Item (2.5) of][]{Kai2011}:
		\begin{align}
			\mathbb{E} \left[\prod_{j=1}^n D_j^{d_j} \right]
			=
			\frac{\Gamma(\sum_{l=1}^n \alpha_l)}{\Gamma\bigl(\sum_{l=1}^n (\alpha_l+ d_{l})\bigr)} \prod_{j=1}^{n} \frac{\Gamma(\alpha_{j}+d_{j})}{\Gamma(\alpha_{j})},
			\label{identity:expected-value-prod-pi}
		\end{align}
		the expectation on the right-hand side of \eqref{main-id} can be written as
		\begin{align*}
			\mathbb{E}\left[
			\prod_{j=1}^n D_j^{(p+r)c_j}
			\right]
			=
			\frac{\Gamma(\sum_{l=1}^n \alpha_l)}{\Gamma\bigl(\sum_{l=1}^n (\alpha_l+ (p+r)c_l)\bigr)} \prod_{j=1}^{n} \frac{\Gamma(\alpha_{j}+(p+r)c_j)}{\Gamma(\alpha_{j})}.
		\end{align*}
		Consequently, the corresponding partial derivative with respect to $p$ yields:
		\begin{multline}\label{der-crt-p}
			{\partial\over\partial p}
			\mathbb{E}\left[
			\prod_{j=1}^n D_j^{(p+r)c_j}
			\right]
			=
			\frac{\Gamma(\sum_{l=1}^n \alpha_l)}{\Gamma\bigl(\sum_{l=1}^n (\alpha_l+ (p+r)c_l)\bigr)}
			%\nonumber
			\\[0,2cm]
			\times
			\prod_{j=1}^{n} \frac{\Gamma(\alpha_{j}+(p+r)c_j)}{\Gamma(\alpha_{j})}
			\left[
			c_j\psi(\alpha_{j}+(p+r)c_j)
			-
			\sum_{l=1}^n c_l
			\psi\left(\sum_{l=1}^n (\alpha_l+ (p+r)c_l)\right)
			\right].
		\end{multline}
		Letting $p\to 0$ in \eqref{der-crt-p} and substituting into \eqref{main-id} completes the proof of Identity \eqref{identity:expected-value-prod-pi-0}.
	\end{proof}
}

\section{Theil, Atkinson and dispersion indices
	%for finite mixtures of gamma distributions
}
\label{sec:theil-mixture-gama}

\subsection{Theil index}

The Theil {(population)} index is a statistical measure of economic inequality, quantifying the divergence from a perfectly equal distribution of income.

\begin{definition}
	\label{def-1-1-1}
	The Theil $T$ index \citep{Theil1967} of a random variable $X$ with finite mean $\mathbb{E}[X]=\mu$ is defined as
	\begin{align*}%\label{Hoover-index}
		T_T= {\mathbb{E}\left[{X\over\mu}\, \log\left({X\over\mu}\right)\right]},
	\end{align*}
	and the Theil $L$ index is defined as
	\begin{align*}
		T_{L} = \mathbb{E} \left[ \log\left({\frac{\mu}{X}}\right) \right].
	\end{align*}
\end{definition}

\begin{proposition}
	\label{proposition:theil-tt}
	The Theil $T$ index of $X \sim \text{GM}(\bm{\theta})$ is given by
	\[ T_T
	=
	\frac{1}{\sum_{i=1}^m\pi_j\alpha_j}
	\left[
	\sum_{j=1}^{m}
	\pi_j\alpha_j \psi(\alpha_j)
	+
	1
	\right]
	-
	\log\left(\sum_{i=1}^m\pi_j\alpha_j\right),
	\]
	where
	%$\alpha_\pi \equiv \sum_{i=1}^m\pi_j\alpha_j$ and
	$\psi(\cdot)$ denotes the digamma function.
\end{proposition}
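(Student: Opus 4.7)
The plan is to peel apart $T_T$ into scale-invariant pieces and then invoke Proposition~\ref{proposition:ev-gx} to reduce everything to gamma-component calculations. Writing $\log(X/\mu)=\log X-\log\mu$ and using linearity together with $\mathbb{E}[X]=\mu$, I obtain the identity
\[
T_T
=
\frac{1}{\mu}\,\mathbb{E}[X\log X]\,-\,\log\mu.
\]
Applying Proposition~\ref{proposition:ev-gx} with $h(x)=x\log x$ then gives
\[
\mathbb{E}[X\log X]
=
\sum_{j=1}^{m}\pi_j\,\mathbb{E}[Z_j\log Z_j],
\qquad Z_j\sim\text{Gamma}(\alpha_j,\lambda),
\]
which shifts the burden to computing $\mathbb{E}[Z\log Z]$ for a single gamma variable.

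For this I would reuse the differentiation trick employed in the proof of Proposition~\ref{proposition:moments-dirichlet-0}, namely $x^p\log x=\partial_p\, x^p$. Since $\mathbb{E}[Z^p]=\Gamma(p+\alpha)/[\lambda^p\Gamma(\alpha)]$ by Proposition~\ref{proposition:mean-variance}, differentiating in $p$ and setting $p=1$ yields
\[
\mathbb{E}[Z\log Z]
=
\frac{\alpha}{\lambda}\bigl[\psi(\alpha+1)-\log\lambda\bigr]
=
\frac{\alpha\,\psi(\alpha)}{\lambda}+\frac{1}{\lambda}-\frac{\alpha\log\lambda}{\lambda},
\]
where I used the functional equation $\psi(\alpha+1)=\psi(\alpha)+1/\alpha$. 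Summing this over $j$ with weights $\pi_j$ and recalling $\sum_{j=1}^{m}\pi_j=1$ produces
\[
\mathbb{E}[X\log X]
=
\frac{1}{\lambda}\sum_{j=1}^{m}\pi_j\alpha_j\psi(\alpha_j)
+\frac{1}{\lambda}
-\frac{\log\lambda}{\lambda}\sum_{j=1}^{m}\pi_j\alpha_j.
\]

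Finally, I substitute $\mu=\lambda^{-1}\sum_{j=1}^{m}\pi_j\alpha_j$ from Proposition~\ref{proposition:mean-variance}. Dividing the previous display by $\mu$ makes the factor $1/\lambda$ cancel, and the remaining $-\log\lambda$ contribution combines with $-\log\mu=\log\lambda-\log\!\bigl(\sum_{j=1}^{m}\pi_j\alpha_j\bigr)$ to eliminate $\log\lambda$ entirely; this is in line with the well-known scale invariance of the Theil $T$ index. After collecting terms the claimed formula drops out. The only delicate point is the bookkeeping of the $\log\lambda$ terms, which I would verify side-by-side to confirm cancellation; all other steps are routine consequences of Proposition~\ref{proposition:ev-gx}, Proposition~\ref{proposition:mean-variance}, and differentiation under the expectation (justified, as in Proposition~\ref{proposition:moments-dirichlet-0}, by dominated convergence for $p$ in a neighborhood of $1$).
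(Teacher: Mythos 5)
Your proposal is correct and follows essentially the same route as the paper: decompose $T_T=\mu^{-1}\mathbb{E}[X\log X]-\log\mu$, reduce to the components via Proposition~\ref{proposition:ev-gx}, and apply $\mathbb{E}[Z\log Z]=(\alpha/\lambda)[\psi(\alpha)+1/\alpha-\log\lambda]$ together with $\mu=\lambda^{-1}\sum_{j=1}^m\pi_j\alpha_j$. The only cosmetic difference is that you derive the gamma identity for $\mathbb{E}[Z\log Z]$ by differentiating the moment function, whereas the paper simply quotes it; the $\log\lambda$ cancellation you flag does go through exactly as you describe.
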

\begin{proof}
	Note that $T_T$ in Definition \ref{def-1-1-1} can be written as
	\begin{align}\label{def_tt}
		T_T
		=
		\frac{1}{\mu} \,
		\mathbb{E}\bigl[X \log(X) \bigr] - \log (\mu).
	\end{align}
	From Proposition \ref{proposition:ev-gx} we have
	\begin{align}\label{def_tt-10}
		\mathbb{E}[X \log (X)]
		=
		\sum_{j=1}^m\pi_j\mathbb{E}[Z_j\log(Z_j)],
	\end{align}
	where
	$Z_j\sim{\rm Gamma}(\alpha_j,\lambda)$, for each $j=1,\ldots,m$.
	As $\mathbb{E}[Z\log(Z)]=(a/b)[\psi(a)+1/a-\log(b)$  for $Z\sim{\rm Gamma}(a,b)$, and $\mu = \sum_{j=1}^{m} \pi_j \alpha_j/{\lambda}$ (Proposition \ref{proposition:mean-variance}), by combining the above Identity \eqref{def_tt-10} with \eqref{def_tt}, the proof follows.
\end{proof}

\begin{proposition}
	\label{proposition:theil-tl}
	The Theil $L$ index of $X \sim \text{GM}(\bm{\theta})$ is given by
	\[
	T_L
	=
	\log\left(\sum_{i=1}^m\pi_j\alpha_j\right)
	-
	\sum_{j=1}^{m}\pi_j \psi(\alpha_j),
	\]
	where $\psi(\cdot)$ is the digamma function.
\end{proposition}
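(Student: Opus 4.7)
The plan is to follow essentially the same strategy used in the proof of Proposition~\ref{proposition:theil-tt}, but with the simpler integrand $\log(x)$ in place of $x\log(x)$.

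First I would split the expectation inside $T_L$ by writing
\begin{align*}
T_L = \mathbb{E}\left[\log\left(\frac{\mu}{X}\right)\right] = \log(\mu) - \mathbb{E}[\log(X)].
\end{align*}
Then I would apply Proposition~\ref{proposition:ev-gx} with $h(x) = \log(x)$ to obtain
\begin{align*}
\mathbb{E}[\log(X)] = \sum_{j=1}^m \pi_j\, \mathbb{E}[\log(Z_j)],
\end{align*}
where $Z_j \sim \mathrm{Gamma}(\alpha_j, \lambda)$.

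Next I would invoke the standard identity $\mathbb{E}[\log(Z)] = \psi(a) - \log(b)$ for $Z \sim \mathrm{Gamma}(a,b)$ (which can be derived from differentiating the gamma moment generating formula or from Identity~\eqref{identity:expected-value-prod-pi-0} of Proposition~\ref{proposition:moments-dirichlet-0} in the degenerate one-variable case). This yields
\begin{align*}
\mathbb{E}[\log(X)] = \sum_{j=1}^m \pi_j \bigl[\psi(\alpha_j) - \log(\lambda)\bigr] = \sum_{j=1}^m \pi_j \psi(\alpha_j) - \log(\lambda),
\end{align*}
since $\sum_{j=1}^m \pi_j = 1$.

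Finally I would substitute $\mu = \sum_{j=1}^m \pi_j \alpha_j / \lambda$ from Proposition~\ref{proposition:mean-variance} so that $\log(\mu) = \log\bigl(\sum_{j=1}^m \pi_j \alpha_j\bigr) - \log(\lambda)$, and combine the two pieces; the $\log(\lambda)$ terms cancel and the stated expression for $T_L$ follows. There is no real obstacle here — the proof is a direct linearity-plus-known-moment computation, and the only thing worth flagging is the convenient cancellation of $\log(\lambda)$, which is what makes $T_L$ depend only on the shape parameters and mixing proportions.
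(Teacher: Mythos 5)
Your proposal is correct and follows essentially the same route as the paper's own proof: decompose $T_L = \log(\mu) - \mathbb{E}[\log(X)]$, apply Proposition~\ref{proposition:ev-gx} with $h(x)=\log(x)$, use the identity $\mathbb{E}[\log(Z)]=\psi(a)-\log(b)$ for $Z\sim\mathrm{Gamma}(a,b)$, and substitute $\mu$ from Proposition~\ref{proposition:mean-variance}. Your explicit remark on the cancellation of the $\log(\lambda)$ terms is a nice touch but does not change the argument.
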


\begin{proof}
	As $T_L
	%=-\;\mathbb{E}[\log(X/\mu)]
	=\log(\mu) - \mathbb{E}[\log (X)]$,
	by applying  Proposition \ref{proposition:ev-gx} we have
	\begin{align}\label{def_tt-1}
		\mathbb{E}[\log (X)]
		=
		\sum_{j=1}^m\pi_j\mathbb{E}[\log(Z_j)],
	\end{align}
	where
	$Z_j\sim{\rm Gamma}(\alpha_j,\lambda)$, for each $j=1,\ldots,m$. As $\mathbb{E}[\log(Z)]=\psi(a)-\log(b)$  for $Z\sim{\rm Gamma}(a,b)$, and
	$\mu
	=
	\sum_{j=1}^{m} \pi_j \alpha_j/{\lambda}$ (Proposition \ref{proposition:mean-variance}), from the above identities the proof readily  follows.
\end{proof}

\begin{remark}
	Note that, unlike \cite{VilaSaulo2025TheilAtkinsonGamma}, it is not possible to  express $T_L$ as a function of $T_T$, since the digamma function is non-linear.
\end{remark}

\subsection{Atkinson index}

The Atkinson {(population)} index \citep{Atkinson1970} of a random variable \(X\) with finite mean $\mathbb{E}[X]=\mu$ is defined as
\begin{align}\label{i-1}
	A(\varepsilon)
	=
	1 - \frac{\mathbb{E}^{{1}/{(1-\varepsilon)}}[X^{1-\varepsilon}]}{\mu},
	\quad
	0\leqslant \varepsilon \neq 1.
\end{align}
{
	In this subsection we derive explicit forms for the Atkinson  index in the special cases $\varepsilon\to 1$ and $\varepsilon\to \infty$ with $X \sim \text{GM}(\bm{\theta})$.}

\begin{proposition}\label{pro-atkinson-1}
	When $\varepsilon\to 1$, the Atkinson index of $X \sim \text{GM}(\bm{\theta})$ is given by
	\[
	A(1) \equiv
	\lim_{\varepsilon \to 1} A(\varepsilon)
	=
	1 - \frac{1}{\sum_{j=1}^m \pi_j \alpha_j} \exp\left\{\sum_{j=1}^m \pi_j \psi(\alpha_j)\right\},
	\]
	where $\psi(\cdot)$ is the digamma function.
\end{proposition}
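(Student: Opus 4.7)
The plan is to compute $A(1)$ directly from its definition \eqref{i-1} by evaluating the limit $\lim_{\varepsilon\to 1}\mathbb{E}[X^{1-\varepsilon}]^{1/(1-\varepsilon)}$. Setting $s=1-\varepsilon$, this reduces to computing $\lim_{s\to 0}\mathbb{E}[X^s]^{1/s}$. Since $\mathbb{E}[X^0]=1$, the logarithm of this quantity is a $0/0$ indeterminate form:
\[
\log\lim_{s\to 0}\mathbb{E}[X^s]^{1/s}=\lim_{s\to 0}\frac{\log\mathbb{E}[X^s]}{s},
\]
and I would apply L'Hôpital's rule together with differentiation under the expectation sign to obtain $\lim_{s\to 0}\mathbb{E}[X^s \log X]/\mathbb{E}[X^s]=\mathbb{E}[\log X]$. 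Exponentiating gives the clean identity $\lim_{s\to 0}\mathbb{E}[X^s]^{1/s}=\exp(\mathbb{E}[\log X])$.

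Next, I would compute $\mathbb{E}[\log X]$ using Proposition \ref{proposition:ev-gx} with $h(x)=\log x$. This yields $\mathbb{E}[\log X]=\sum_{j=1}^m\pi_j\mathbb{E}[\log Z_j]$ for $Z_j\sim\text{Gamma}(\alpha_j,\lambda)$, and the well-known formula $\mathbb{E}[\log Z_j]=\psi(\alpha_j)-\log\lambda$ then gives
\[
\exp(\mathbb{E}[\log X])=\frac{1}{\lambda}\exp\biggl\{\sum_{j=1}^m\pi_j\psi(\alpha_j)\biggr\}.
\]
Combining this with Proposition \ref{proposition:mean-variance}, which gives $\mu=\sum_{j=1}^m\pi_j\alpha_j/\lambda$, the factors of $\lambda$ cancel in the ratio $\exp(\mathbb{E}[\log X])/\mu$, producing exactly the claimed expression for $A(1)$.

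The main obstacle is justifying the interchange of differentiation and expectation in the step $\frac{d}{ds}\mathbb{E}[X^s]=\mathbb{E}[X^s\log X]$ near $s=0$. I would handle this by a standard dominated convergence argument: on a small neighborhood $|s|<\delta$, the bound $|X^s \log X|\leqslant \max(X^{\delta},X^{-\delta})\,|\log X|$ can be dominated by an integrable envelope because the mixture density in \eqref{eq:mixture-gama} is a finite sum of gamma densities, and each gamma law has finite moments of all orders $p>-\alpha_{(1)}$ (by Proposition \ref{proposition:mean-variance}). Picking $\delta<\alpha_{(1)}$ secures integrability, validating both the differentiation under the integral and the L'Hôpital step. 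Once this technicality is dispatched, the remainder of the argument is a short algebraic simplification.
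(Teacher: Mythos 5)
Your proposal is correct and follows essentially the same route as the paper: rewrite the limit as $\exp\{\lim_{\varepsilon\to1}\log\mathbb{E}[X^{1-\varepsilon}]/(1-\varepsilon)\}$, apply L'H\^opital with differentiation under the expectation to get $\exp\{\mathbb{E}[\log X]\}$, and then evaluate $\mathbb{E}[\log X]$ via Proposition~\ref{proposition:ev-gx} and the gamma log-moment formula. Your added dominated-convergence justification (choosing $\delta<\alpha_{(1)}$) is a welcome bit of extra rigor that the paper leaves implicit, but it does not change the argument.
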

\begin{proof}
	{Simple calculations show that}
	\begin{align}
		\lim_{\varepsilon \to 1} \mathbb{E}^{1/(1-\varepsilon)}[X^{1-\varepsilon}]
		&=
		\exp\left\{
		\lim_{\varepsilon \to 1} \frac{1}{1-\varepsilon} \log \left(\mathbb{E}[X^{1-\varepsilon}]\right)
		\right\}
		\nonumber \\[0.2cm]
		&=
		\exp\left\{
		- \lim_{\varepsilon \to 1} \frac{\partial}{\partial\varepsilon} \log \left(\mathbb{E}[X^{1-\varepsilon}]\right)
		\right\}
		\nonumber \\[0.2cm]
		&=
		\exp\left\{
		- \lim_{\varepsilon \to 1} \frac{\partial}{\partial\varepsilon} \mathbb{E}[X^{1-\varepsilon}]
		\right\}
		\nonumber \\[0.2cm]
		&=
		\exp\left\{
		\mathbb{E}[\log(X)]
		\right\},
		\label{limit-1}
	\end{align}
	where in the second equality we used L'Hôpital's rule, and in the fourth equality the identity
	\[
	\frac{\partial}{\partial\varepsilon} x^{1-\varepsilon} = -x^{1-\varepsilon} \log(x), \quad x>0.
	\]

	Consequently, combining \eqref{i-1} with \eqref{limit-1}, we obtain
	\[
	A(1) = \lim_{\varepsilon \to 1} A(\varepsilon) = 1 - \frac{\exp\left\{\mathbb{E}[\log(X)]\right\}}{\mu}.
	\]
	Since, by \eqref{def_tt-1}, \(\mathbb{E}[\log (X)] = \sum_{j=1}^m \pi_j \left[\psi(\alpha_j) - \log(\lambda)\right]\) and \(\mu =\sum_{j=1}^m \pi_j \alpha_j/\lambda\) (Proposition \ref{proposition:mean-variance}), the proof follows.
\end{proof}

\begin{proposition}\label{pro-atkinson-infty}
	When $\varepsilon\to \infty$, the Atkinson index of $X \sim \text{GM}(\bm{\theta})$ is given by
	\[
	A(\infty) {\equiv }  \lim_{\varepsilon \to \infty} A(\varepsilon)
	= 1.
	\]
\end{proposition}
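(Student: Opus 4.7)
The plan is to reduce the claim to showing
\[
\lim_{\varepsilon\to\infty}\mathbb{E}^{1/(1-\varepsilon)}[X^{1-\varepsilon}] = 0,
\]
since then \eqref{i-1} immediately yields $A(\infty) = 1 - 0/\mu = 1$.

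The key observation I would exploit is that $X$ has essential infimum zero. Indeed, because each component $Z_j\sim{\rm Gamma}(\alpha_j,\lambda)$ has full support on $(0,\infty)$ and all mixing weights are strictly positive, for every $\delta>0$ we have
\[
\mathbb{P}(X\leq\delta) = \sum_{j=1}^{m}\pi_j F_{Z_j}(\delta;\alpha_j,\lambda) > 0.
\]
Fix $\delta>0$. For $\varepsilon>1$, the map $x\mapsto x^{1-\varepsilon}$ is decreasing on $(0,\infty)$, so the elementary bound
\[
\mathbb{E}[X^{1-\varepsilon}]
\;\geq\;
\mathbb{E}\bigl[X^{1-\varepsilon}\mathds{1}_{\{X\leq\delta\}}\bigr]
\;\geq\;
\delta^{1-\varepsilon}\,\mathbb{P}(X\leq\delta)
\]
holds. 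Raising to the negative power $1/(1-\varepsilon)$ reverses the inequality:
\[
\mathbb{E}^{1/(1-\varepsilon)}[X^{1-\varepsilon}]
\;\leq\;
\delta\,\mathbb{P}(X\leq\delta)^{1/(1-\varepsilon)}.
\]
As $\varepsilon\to\infty$, the exponent $1/(1-\varepsilon)\to 0$, so the right-hand side tends to $\delta$. This gives $\limsup_{\varepsilon\to\infty}\mathbb{E}^{1/(1-\varepsilon)}[X^{1-\varepsilon}] \leq \delta$, and since $\delta>0$ is arbitrary and the quantity is nonnegative, the limit equals $0$.

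The one delicate point I would need to address is that for $\varepsilon\geq 1+\alpha_{(1)}$, with $\alpha_{(1)}=\min_j\alpha_j$, the moment $\mathbb{E}[X^{1-\varepsilon}]$ actually diverges: by Definition~\ref{definition:mixture-gamas}, the mixture density behaves like a positive multiple of $x^{\alpha_{(1)}-1}$ near the origin, so $x^{1-\varepsilon}f_X(x)\sim c\,x^{\alpha_{(1)}-\varepsilon}$ is non-integrable at $0$ once $\varepsilon\geq 1+\alpha_{(1)}$. The inequality-based argument above elegantly sidesteps this issue: the lower bound remains valid regardless of whether the moment is finite, and with the natural convention $(+\infty)^{r}=0$ for $r<0$, the upper inequality on $\mathbb{E}^{1/(1-\varepsilon)}[X^{1-\varepsilon}]$ holds trivially in the degenerate regime, so the limit argument goes through unchanged.
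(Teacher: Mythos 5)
Your proof is correct, and it takes a genuinely different route from the paper's. The paper rewrites $A(\varepsilon)$ in terms of the $L^p$ norm of $1/X$, namely $A(\varepsilon)=1-\|1/X\|_{\varepsilon-1}^{-1}/\mu$, and then invokes the functional-analytic fact $\lim_{p\to\infty}\|Y\|_p=\|Y\|_\infty$ (cited from Brezis) together with the observation that $1/X$ is essentially unbounded, so that $\|1/X\|_{\varepsilon-1}\to\infty$ and the ratio vanishes. You instead prove the vanishing of $\mathbb{E}^{1/(1-\varepsilon)}[X^{1-\varepsilon}]$ directly via the truncation bound $\mathbb{E}[X^{1-\varepsilon}]\geq\delta^{1-\varepsilon}\,\mathbb{P}(X\leq\delta)$ and the fact that $\mathbb{P}(X\leq\delta)>0$ for every $\delta>0$. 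Both arguments hinge on the same underlying structural fact --- the essential infimum of $X$ is zero --- but yours is self-contained and elementary, avoiding the appeal to the $L^p\to L^\infty$ convergence theorem, and it applies verbatim to any positive random variable with essential infimum zero. You also handle explicitly the regime $\varepsilon\geq 1+\alpha_{(1)}$ where $\mathbb{E}[X^{1-\varepsilon}]$ diverges, a point the paper's proof passes over silently (there the norm $\|1/X\|_{\varepsilon-1}$ is simply $+\infty$ and its reciprocal is $0$, so both treatments agree under the natural convention). The paper's approach, for its part, buys a one-line conclusion once the norm identity is set up and makes the role of the essential supremum conceptually transparent.
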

\begin{proof}
	Note that the Atkinson index \eqref{i-1} can be written as
	\begin{align*}
		A(\varepsilon)
		=
		1
		-
		\frac{
			\mathbb{E}^{-{1}/{(\varepsilon-1)}}\left[\dfrac{1}{X^{\varepsilon-1}}\right]
		}{
			\mu
		},
		\quad	0\leqslant \varepsilon \neq 1.
	\end{align*}
	Using the norm notation in \(L^p\) spaces (Chapter 4 of \cite{Brezis2010}):
	\[
	\|Y\|_p \equiv\mathbb{E}^{1/p}[Y^p] = \left[\int_0^\infty y^p \, \mathrm{d}F_Y(y)\right]^{1/p},
	\quad
	p>0,
	\]
	the index \(A(\varepsilon)\) can be expressed as
	\begin{align}\label{i-2}
		A(\varepsilon)
		=
		1 - \frac{\displaystyle {\left\| \frac{1}{X} \right\|_{\varepsilon-1}^{-1}}}{\mu},
		\quad	0\leqslant \varepsilon \neq 1.
	\end{align}
	Using the well-known fact: \(\lim_{p \to \infty} \|Y\|_p = \|Y\|_\infty\) (see Exercise 4.6 of Chapter 4 in \cite{Brezis2010}), where
	\[
	\|Y\|_\infty \equiv \inf\{M \geqslant 0 : \mathbb{P}(|Y| \leqslant M)=1\}
	\]
	is the essential supremum, we have
	\begin{align}\label{limit-2}
		\lim_{\varepsilon \to \infty}
		\left\| \frac{1}{X} \right\|_{\varepsilon - 1}
		=
		\left\| \frac{1}{X} \right\|_{\infty}
		&=
		\inf \left\{ M \geqslant 0 : \mathbb{P}\left(\left| \frac{1}{X} \right| \leqslant M\right)=1 \right\}
		\nonumber
		\\[0,2cm]
		&
		=
		\inf \left\{ M \geqslant 0 : F_X\left( \frac{1}{M};\boldsymbol{\theta}\right)=0\right\}
		=
		\infty.
	\end{align}
	Consequently, combining \eqref{i-2} with \eqref{limit-2}, the proof follows.
\end{proof}

\subsection{Dispersion index}

In probability theory and statistics, the index of dispersion, or variance-to-mean ratio (VMR), is a normalized measure of variability used to assess whether observed occurrences are more clustered or more dispersed than expected under a standard statistical model

Proposition \ref{proposition:mean-variance} implies the following result.
\begin{corollary}%[Index of dispersion]
	\label{corollary:vmr}
	If $X \sim \text{GM}(\bm{\theta})$, then the ratio of the variance to the mean of the mixture of gammas with fixed $\lambda>0$ is given by
	\[\text{VMR}
	=
	\text{VMR}[X]
	\equiv
	\frac{\sigma^2}{\mu}
	=
	\dfrac{1}{
		\lambda \sum_{j=1}^m \pi_j \alpha_j}
	\left[
	{ \displaystyle
		\sum_{j=1}^m \pi_j \alpha_j(\alpha_j + 1) - \left( \sum_{j=1}^m \pi_j \alpha_j \right)^2
	}
	\right].
	\]
\end{corollary}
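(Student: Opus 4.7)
The plan is to observe that the statement is an immediate division: Proposition~\ref{proposition:mean-variance} already supplies closed-form expressions for both $\mu=\mathbb{E}[X]$ and $\sigma^2=\mathrm{Var}(X)$ under $X\sim\text{GM}(\bm{\theta})$, and the VMR is defined as $\sigma^2/\mu$. So no new probabilistic argument is needed; all the analytic heavy lifting (applying Proposition~\ref{proposition:ev-gx} with $h(x)=x^p$ to obtain the first two moments of the mixture) has already been carried out upstream.

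Concretely, I would substitute
\[
\sigma^2=\frac{1}{\lambda^{2}}\left[\sum_{j=1}^{m}\pi_j\alpha_j(\alpha_j+1)-\Bigl(\sum_{j=1}^{m}\pi_j\alpha_j\Bigr)^{\!2}\right],
\qquad
\mu=\frac{1}{\lambda}\sum_{j=1}^{m}\pi_j\alpha_j,
\]
into the definition $\text{VMR}\equiv\sigma^2/\mu$ and then cancel one power of $\lambda$ between numerator and denominator, pulling the remaining $(\sum_{j}\pi_j\alpha_j)^{-1}$ out front. This directly yields the claimed expression with the prefactor $1/\bigl[\lambda\sum_{j=1}^{m}\pi_j\alpha_j\bigr]$.

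There is no real obstacle here; the only thing to watch is algebraic bookkeeping so that the factor of $\lambda$ ends up to the first power in the denominator (not squared), and that the bracketed term is written as the second moment sum $\sum_{j}\pi_j\alpha_j(\alpha_j+1)$ minus the square of the first moment sum $(\sum_{j}\pi_j\alpha_j)^2$, matching the statement verbatim. Accordingly, the proof I would write is a single-line derivation labeled as a direct consequence of Proposition~\ref{proposition:mean-variance}, with no further machinery invoked.
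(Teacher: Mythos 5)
Your proposal is correct and matches the paper exactly: the paper states the corollary as an immediate consequence of Proposition~\ref{proposition:mean-variance}, with the same substitution of $\sigma^2$ and $\mu$ into $\sigma^2/\mu$ and cancellation of one factor of $\lambda$. No further comment is needed.
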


\section{Deriving estimator biases}\label{Deriving estimator biases}

\subsection{Bias of the Theil  index estimator}
\label{subsec:expected-value-theil}

The next results provide closed-form expressions for the expected value of the Theil index estimators $\widehat{T}_T$ and $\widehat{T}_L$, given by
\begin{align*}
	%\label{gini-estimadtor-def}
	\widehat{T}_T
	=
	\dfrac{
		\displaystyle
		\sum_{i=1}^{n}
		X_{i}
		\log\left(\frac{X_i}{\overline{X}}\right)
	}{\displaystyle
		\sum _{i=1}^{n} X_i},
	\quad
	n\in\mathbb{N},
\end{align*}
and
\begin{align*}
	%\label{def-index-L-est}
	\widehat{T}_{L}
	=
	{\frac{1}{n}}
	\sum_{i=1}^{n}
	\log\left({\frac{\overline{X}}{X_{i}}}\right),
	\quad
	n\in\mathbb{N},
\end{align*}
respectively, where $\overline{X}=\sum_{i=1}^{n}X_i/n$ is the sample mean and $X_1,\ldots,X_n$ are i.i.d. observations from \(X\sim \mathrm{GM}(\bm{\theta})\).

\begin{remark}\label{remark:rewrite-tt}
	Note that $\widehat{T}_T$ can be written as
	\begin{align*}
		\widehat{T}_T
		=
		\sum_{i=1}^n D_i \log(D_i) + \log(n),
		\quad
		D_i = {X_i\over S_n},
		\quad
		S_n = \sum_{i=1}^n X_i.
	\end{align*}
\end{remark}

\begin{remark}
	We can write $\widehat{T}_L$ as
	\begin{align}\label{remark:rewrite-tl}
		\widehat{T}_L = \log(\overline{X}) - \frac{1}{n} \sum_{i=1}^n \log(X_i).
	\end{align}
\end{remark}

\begin{theorem}
	%[Expected value of Thiel's $\widehat{T}_T$ estimator for mixtures of gamma]
	\label{theorem:expected-value-tt}
	If \(X_1,\dots,X_n\) are independent copies of \(X\sim \mathrm{GM}(\bm{\theta})\), according to the Definition \ref{definition:mixture-gamas}, then
	\begin{multline*}
	\mathbb{E}\left[\hat T_T\right]
	=
	\sum_{1\leqslant j_1,\, \dots,\, j_n\leqslant m}
	\pi_{j_1} \cdots \pi_{j_n} \,
	\frac{1}{\sum_{i=1}^n \alpha_{j_i}}
	\\[0,2cm]
	\times
	\left[
	\sum_{i=1}^n \alpha_{j_i} \psi(\alpha_{j_i})
	-
	\left( \sum_{i=1}^n \alpha_{j_i} \right) \psi\left( \sum_{i=1}^n \alpha_{j_i}\right)
	+
	n-1
	\right]
	+ \log(n),
	\end{multline*}
	where
	%$\alpha_\Sigma\equiv \sum_{i=1}^{n}\alpha_{j_i}$ and
	$\psi(\cdot)$ is the digamma function.
\end{theorem}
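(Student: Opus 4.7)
The plan is to build on Remark~\ref{remark:rewrite-tt}, which rewrites the estimator as $\widehat{T}_T = \sum_{i=1}^n D_i\log(D_i) + \log(n)$ with $D_i = X_i/S_n$ and $S_n = \sum_{i=1}^n X_i$, so that
\[
\mathbb{E}[\widehat{T}_T] = \sum_{i=1}^n \mathbb{E}[D_i\log(D_i)] + \log(n).
\]
The core of the argument is to evaluate each $\mathbb{E}[D_i\log(D_i)]$ in closed form by combining (i) a mixture decomposition on the component labels, (ii) Mosimann's theorem to identify the joint law of $(D_1,\dots,D_n)$, and (iii) Proposition~\ref{proposition:moments-dirichlet-0} specialized to a single coordinate.

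First, since the joint density factorizes as $f_{X_1,\dots,X_n}(\boldsymbol{x}) = \sum_{j_1,\dots,j_n} \pi_{j_1}\cdots\pi_{j_n}\,f_{Z_{j_1}}(x_1)\cdots f_{Z_{j_n}}(x_n)$, the same argument used in the proof of Proposition~\ref{proposition:ev-gx-1} extends verbatim to any Borel function $g$ of $(X_1,\dots,X_n)$, not just functions of $\sum_i X_i$. Taking $g(x_1,\dots,x_n) = (x_i/\sum_k x_k)\log(x_i/\sum_k x_k)$ yields
\[
\mathbb{E}[D_i\log(D_i)] = \sum_{1\leqslant j_1,\dots,j_n\leqslant m} \pi_{j_1}\cdots\pi_{j_n}\,\mathbb{E}\bigl[\widetilde{D}_i\log(\widetilde{D}_i)\bigr],
\]
where $\widetilde{D}_i = Z_{j_i}/\sum_k Z_{j_k}$ with independent $Z_{j_k}\sim\text{Gamma}(\alpha_{j_k},\lambda)$.

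Next, by Mosimann's theorem (Theorem~\ref{Mosimann}), the vector $(\widetilde{D}_1,\dots,\widetilde{D}_n)$ is distributed as $\text{Dirichlet}(\alpha_{j_1},\dots,\alpha_{j_n})$. I would then apply Proposition~\ref{proposition:moments-dirichlet-0} with $r=1$ and the degenerate choice $c_i=1$, $c_k=0$ for $k\neq i$, which collapses $\prod_k \widetilde{D}_k^{c_k}$ to $\widetilde{D}_i$ and reduces the product of gamma ratios to the single factor $\alpha_{j_i}/\sum_k\alpha_{j_k}$. This gives
\[
\mathbb{E}\bigl[\widetilde{D}_i\log(\widetilde{D}_i)\bigr] = \frac{\alpha_{j_i}}{\sum_k\alpha_{j_k}}\bigl[\psi(\alpha_{j_i}+1) - \psi(\textstyle\sum_k \alpha_{j_k}+1)\bigr].
\]

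Summing over $i$ and invoking the digamma recurrence $\psi(x+1)=\psi(x)+1/x$ converts $\sum_i \alpha_{j_i}\psi(\alpha_{j_i}+1)$ into $\sum_i \alpha_{j_i}\psi(\alpha_{j_i})+n$, and $(\sum_i\alpha_{j_i})\,\psi(\sum_k\alpha_{j_k}+1)$ into $(\sum_i\alpha_{j_i})\,\psi(\sum_k\alpha_{j_k})+1$, which together produce exactly the $+n-1$ offset inside the bracket of the theorem. Substituting back into the mixture sum and adding $\log(n)$ then matches the claimed expression. The main obstacle is the book-keeping: correctly specializing Proposition~\ref{proposition:moments-dirichlet-0} when most of the $c_k$ vanish (so that the product of bracketed factors collapses to a single term and the $\sum_l c_l$ multiplier becomes $1$), and tracking the digamma shifts carefully so that the residual $1/\alpha_{j_i}$ and $1/\sum_k\alpha_{j_k}$ contributions consolidate into the clean $n-1$ summand.
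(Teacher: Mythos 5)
Your proposal is correct and follows essentially the same route as the paper: both decompose over the component labels (you via the mixture form of the joint density, the paper via conditioning on the latent labels $Y_1,\dots,Y_n$ and the law of total expectation, which is the same computation), both invoke Mosimann's theorem to identify the conditional Dirichlet law of $(D_1,\dots,D_n)$, and both specialize Proposition~\ref{proposition:moments-dirichlet-0} with $r=1$ and $c_k=\delta_{ki}$ before applying $\psi(x+1)=\psi(x)+1/x$ to produce the $n-1$ term. The digamma bookkeeping you describe matches the paper's exactly.
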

\begin{proof}
	Since $X_1,\ldots,X_n$ are i.i.d. with the same distribution of $X \sim \text{GM}(\bm{\theta})$, and $X$ has the stochastic representation \eqref{representation-X}, then it is clear that $X_i$ admits the following representation:
	%we can define $X_i$ as
	\begin{align*}%\label{representation-X}
		X_i
		\equiv
		\sum_{j=1}^{m}\mathds{1}_{\{Y_i=j\}}Z_j^i,
		\quad
		i=1,\ldots,n,
	\end{align*}
	where $Y_1,\ldots,Y_n\in \{1,\ldots,m\}$ are i.i.d. discrete random variables  with $\mathbb{P}(Y_i=j)=\pi_j$, independent of $Z_j^i\sim{\rm Gamma}(\alpha_j,\lambda)$, for each $j=1,\ldots,m$.

	By using the notations adopted in Remark \ref{remark:rewrite-tt}, note that
	\begin{align}
		S_n \,\vert \, Y_1=j_1,\ldots,Y_n=j_n
		\stackrel{d}{=}
		\sum_{i=1}^n
		Z_{j_i}^i
		\sim{\rm Gamma}\left(\sum_{i=1}^{n}\alpha_{j_i},\lambda\right),
	\end{align}
	with $\stackrel{d}{=}$ meaning equality in distribution of random variables.
	%
	%    Considere \(J_1,\dots,J_n\) variáveis latentes independentes com \(\Pr(J_i=j)=\pi_j\). Agora, observe que $X_i$, condicionada a $J_i = j_i$, tem distribuição Gama conforme a Identidade~\eqref{eq:gama-componente} da Definição~\ref{definition:mixture-gamas}, o que implica que a soma \(S = \sum_{i=1}^n X_i\) condicionada a \(\mathbf{J} = \mathbf{j} \) também terá distribuição Gama, isto é,
	%    \[ S \mid \mathbf{J} = \mathbf{j} \sim \text{Gamma}(\alpha_\Sigma, \lambda), \quad \alpha_\Sigma = \sum_{i=1}^n \alpha_{j_i}. \]

	At this point, by defining \( D_i = X_i / S_n \), it follows from Theorem \ref{Mosimann} that, given \( Y_1 = j_1, \ldots, Y_n = j_n \), \( S_n \) and \((D_1, \ldots, D_n)^\top\) are independent. Furthermore,
	\begin{align}\label{cond-dirichlet}
		(D_1, \ldots, D_n)^\top \mid Y_1 = j_1, \ldots, Y_n = j_n \sim \text{Dirichlet}(\alpha_{j_1}, \ldots, \alpha_{j_n}).
	\end{align}
	%where \( D_n = 1 - \sum_{i=1}^{n-1} D_i \).

	Therefore, by using Remark \ref{remark:rewrite-tt}, the expected value of the estimator $\widehat{T}_T$, conditioned on $Y_1=j_1,\ldots,Y_n=j_n$, is
	\begin{align*}
	\mathbb{E}\left[\widehat{T}_T \, \Big\vert \,Y_1=j_1,\ldots,Y_n=j_n\right]
	=
	\sum_{i=1}^n \mathbb{E}[D_i \log(D_i) \mid Y_1=j_1,\ldots,Y_n=j_n] + \log(n).
	\end{align*}
	Since \eqref{cond-dirichlet} is satisfied, we use  Proposition \ref{proposition:moments-dirichlet-0} with $r=1$ and $c_j=\delta_{ji}$, where $\delta_{ji}$ is the Kronecker delta, to obtain
	\begin{align*}
	\mathbb{E}\left[\widehat{T}_T \, \Big\vert \,Y_1=j_1,\ldots,Y_n=j_n\right]
	=
	\frac{1}{\sum_{i=1}^{n}\alpha_{j_i}}
	\sum_{i=1}^n
	\alpha_{j_i}
	\left[ \psi(\alpha_{j_i}+1) - \psi\left(\sum_{i=1}^{n}\alpha_{j_i}+1\right) \right]
	+ \log(n).
	\end{align*}
	Finally, by using the well-known recurrence relation $\psi(x+1)=\psi(x)+1/x$ and, then, by applying the Law of total expectation the expression of the statement is obtained.
\end{proof}
Combining Proposition~\ref{proposition:theil-tt} and Theorem~\ref{theorem:expected-value-tt}, we obtain the following result.

\begin{corollary} %[Bias of the Thiel \(T\) Index for Gamma Mixtures with fixed \(\lambda\)]
	%\label{corollary:bias-tt}
	The bias of \(\widehat{T}_T\) relative to \(T_T\), denoted by \(\mathrm{Bias}(\widehat{T}_T, T_T)\), can be expressed as
	\begin{multline*}
		\mathrm{Bias}(\widehat{T}_T, T_T)
		=
		\sum_{1 \leqslant j_1, \dots, j_n \leqslant m}
		\pi_{j_1} \cdots \pi_{j_n}
		\,
		\frac{1}{\sum_{i=1}^n \alpha_{j_i}}
		\\[0,2cm]
		\times
		\left[
		\sum_{i=1}^n \alpha_{j_i} \psi(\alpha_{j_i})
		-
		\left( \sum_{i=1}^n \alpha_{j_i} \right) \psi\left( \sum_{i=1}^n \alpha_{j_i} \right)
		+
		n-1
		\right]
				+ \log(n)
		\\[0,2cm]
		- \left[
		\frac{1}{\sum_{j=1}^{m} \pi_j \alpha_j} \left(
		\sum_{j=1}^{m} \pi_j \alpha_j \psi(\alpha_j)
		+ 1 \right)
		- \log\left(\sum_{j=1}^{m} \pi_j \alpha_j \right) \right],
	\end{multline*}
	where \(\psi(\cdot)\) is the digamma function.
\end{corollary}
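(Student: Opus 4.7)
The plan is essentially bookkeeping: apply the definition
\[
\mathrm{Bias}(\widehat{T}_T, T_T) = \mathbb{E}[\widehat{T}_T] - T_T,
\]
and substitute the two closed-form expressions already available in the excerpt. First I would insert the formula for $\mathbb{E}[\widehat{T}_T]$ obtained in Theorem~\ref{theorem:expected-value-tt}, namely the multi-index sum indexed by $(j_1,\ldots,j_n)\in\{1,\ldots,m\}^n$, weighted by $\pi_{j_1}\cdots\pi_{j_n}$, whose bracket contains $\sum_{i=1}^n\alpha_{j_i}\psi(\alpha_{j_i})$, the term $-(\sum_{i=1}^n\alpha_{j_i})\psi(\sum_{i=1}^n\alpha_{j_i})$, and the constant $n-1$, all scaled by $1/\sum_{i=1}^n\alpha_{j_i}$, plus the additive $\log(n)$. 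Next I would subtract the population Theil $T$ expression from Proposition~\ref{proposition:theil-tt}, namely $[\sum_{j=1}^{m}\pi_j\alpha_j\psi(\alpha_j)+1]/\sum_{j=1}^{m}\pi_j\alpha_j - \log(\sum_{j=1}^{m}\pi_j\alpha_j)$, keeping it inside an outer bracket preceded by a minus sign so that the subtraction is explicit.

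Writing this subtraction as a single display yields precisely the formula in the statement. The only point requiring a moment of attention is the sign of the $\log(\sum_{j=1}^{m}\pi_j\alpha_j)$ contribution, which appears negatively inside the $T_T$ expression but, after the outer minus sign, effectively contributes positively to the bias; this is exactly what the bracket structure of the corollary records. I do not foresee any genuine obstacle here: as already noted in the remark following Proposition~\ref{proposition:theil-tl}, the nonlinearity of $\psi$ precludes any meaningful cancellation or simplification between the multi-index sum coming from $\mathbb{E}[\widehat{T}_T]$ and the single-index sum defining $T_T$, so the stated expression is the natural final form of the bias and no additional simplification step is warranted.
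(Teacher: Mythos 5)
Your proposal is correct and coincides with the paper's own (implicit) argument: the corollary is presented there as an immediate consequence of combining Proposition~\ref{proposition:theil-tt} and Theorem~\ref{theorem:expected-value-tt} via $\mathrm{Bias}(\widehat{T}_T,T_T)=\mathbb{E}[\widehat{T}_T]-T_T$, which is exactly the substitution you carry out. Your observation that the nonlinearity of $\psi$ precludes further cancellation is also consistent with the remark the authors make after Proposition~\ref{proposition:theil-tl}.
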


{
	\begin{remark}
		In the case where \(\alpha_j = \alpha \text{ for all } j=1,\ldots,m\), the corresponding bias is:
		\begin{align*}
			\mathrm{Bias}(\widehat{T}_T, T_T)
			=
			\log(n\alpha)
			-
			\frac{1}{n\alpha}
			-
			\psi\left(n \alpha\right)
			<0,
		\end{align*}
		which supports the recent finding stated in Corollary 3.6 of \cite{VilaSaulo2025TheilAtkinsonGamma}.
	\end{remark}
}

\begin{theorem}
	%[Valor esperado do estimador $\widehat{T}_L$ de Thiel]
	\label{theorem:expected-value-tl}
	If $X_1, \dots, X_n$ are independent copies of $X \sim \operatorname{GM}(\bm{\theta})$, then
	\[
	\mathbb{E}\left[\widehat{T}_L\right]
	=
	\sum_{1\leqslant j_1,\, \dots,\, j_n\leqslant m}
	\pi_{j_1}\cdots\pi_{j_n}
	\psi\left(\sum_{i=1}^n\alpha_{j_i}\right)
	-
	\log(n)
	-
	\sum_{j=1}^m \pi_j\psi(\alpha_j),
	\]
	where
	$\psi(\cdot)$ denotes the digamma function.
\end{theorem}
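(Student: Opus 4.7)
The plan is to start from the additive decomposition of $\widehat{T}_L$ stated in the remark preceding the theorem,
\[
\widehat{T}_L = \log(\overline{X}) - \frac{1}{n}\sum_{i=1}^n \log(X_i),
\]
take expectations, and exploit linearity. The problem then splits into computing $\mathbb{E}[\log(X_i)]$ for a single observation and $\mathbb{E}[\log(\overline{X})] = \mathbb{E}[\log(S_n)] - \log(n)$ for $S_n = \sum_{i=1}^n X_i$. Because the logarithm turns products into sums, no ratio of random variables has to be handled directly, so unlike in the proof of Theorem \ref{theorem:expected-value-tt} neither Mosimann's theorem nor Proposition \ref{proposition:moments-dirichlet-0} will be needed.

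For the second term I would apply Proposition \ref{proposition:ev-gx} with $h(x) = \log(x)$, together with the classical identity $\mathbb{E}[\log(Z)] = \psi(a) - \log(\lambda)$ for $Z \sim \mathrm{Gamma}(a,\lambda)$ that was already invoked in the proof of Proposition \ref{proposition:theil-tl}, obtaining
\[
\frac{1}{n}\sum_{i=1}^n \mathbb{E}[\log(X_i)] = \sum_{j=1}^m \pi_j\,\psi(\alpha_j) - \log(\lambda).
\]

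For the first term I would apply Proposition \ref{proposition:ev-gx-1}, again with $h(x) = \log(x)$. Since $Z^*_{j_1,\ldots,j_n} \sim \mathrm{Gamma}\bigl(\sum_{i=1}^n \alpha_{j_i},\,\lambda\bigr)$, the same gamma log-moment identity yields
\[
\mathbb{E}[\log(S_n)] = \sum_{1\leqslant j_1,\ldots,j_n \leqslant m} \pi_{j_1}\cdots \pi_{j_n}\,\psi\!\left(\sum_{i=1}^n \alpha_{j_i}\right) - \log(\lambda),
\]
where the $-\log(\lambda)$ separates cleanly because $\sum \pi_{j_1}\cdots \pi_{j_n} = 1$.

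Assembling the pieces, I would subtract $\log(n)$ to pass from $\log(S_n)$ to $\log(\overline{X})$, and then subtract the contribution from the single-observation term; the two $-\log(\lambda)$ pieces cancel and the stated closed form emerges. I do not anticipate any real obstacle here. The only minor bookkeeping point is to resist collapsing the multi-index sum in the $\psi\!\left(\sum_i \alpha_{j_i}\right)$ term, whose argument genuinely depends on the full index tuple, while correctly collapsing the constant $\log(\lambda)$ contributions that do not.
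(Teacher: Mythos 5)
Your proposal is correct and follows essentially the same route as the paper's own proof: the additive decomposition $\widehat{T}_L = \log(\overline{X}) - \tfrac{1}{n}\sum_i \log(X_i)$, Propositions \ref{proposition:ev-gx} and \ref{proposition:ev-gx-1} with $h(x)=\log(x)$, and the gamma log-moment identity, with the $\log(\lambda)$ terms cancelling. No discrepancies to report.
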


\begin{proof}

	By using Identity \eqref{remark:rewrite-tl} and the fact that $X_1, \dots, X_n$  are i.i.d. with $X$, we have
	\begin{align*}
		\mathbb{E}\left[\widehat{T}_L\right]
		=
		\mathbb{E}\left[
		\log\left(\sum_{i=1}^n X_i\right)
		\right]
		-
		\log(n)
		-
		\mathbb{E}[\log(X)].
	\end{align*}
	Applying Propositions \ref{proposition:ev-gx} and \ref{proposition:ev-gx-1}, the identity above can be written as follows:
	\begin{align*}
		\mathbb{E}\left[\widehat{T}_L\right]
		=
		\sum_{1\leqslant j_1,\, \dots,\, j_n\leqslant m}
		\pi_{j_1}\cdots\pi_{j_n}
		\mathbb{E}\left[
		\log\left(Z^*_{j_1,\ldots,j_n}\right)
		\right]
		-
		\log(n)
		-
		\sum_{j=1}^m
		\pi_j
		\mathbb{E}[\log(Z_j)].
	\end{align*}
	As $Z^*_{j_1,\ldots,j_n}\sim{\rm Gamma}(\sum_{i=1}^n \alpha_{j_i},\lambda)$ and $Z_{j}\sim{\rm Gamma}(\alpha_j,\lambda)$, for each $j=1,\ldots,m$, from the identity $\mathbb{E}[\log(Z)]=\psi(a)-\log(b)$  for $Z\sim{\rm Gamma}(a,b)$, the proofs readily follows.
\end{proof}

By combining Proposition \ref{proposition:theil-tl} and Theorem \ref{theorem:expected-value-tl} the following result follows.

\begin{corollary} %[Bias of the Thiel \(L\) Index for Gamma Mixtures with fixed \(\lambda\)]
	%\label{corollary:bias-tl}
	The bias of \(\widehat{T}_L\) relative to \(T_L\), denoted by \(\mathrm{Bias}(\widehat{T}_L, T_L)\), can be expressed as
	\[
	\mathrm{Bias}(\widehat{T}_L, T_L)
	=
	\sum_{1 \leqslant j_1, \dots, j_n \leqslant m}
	\pi_{j_1} \cdots \pi_{j_n} \, \psi\left( \sum_{i=1}^n \alpha_{j_i} \right) - \log(n) - \log\left( \sum_{j=1}^m \pi_j \alpha_j \right),
	\]
	where \(\psi(\cdot)\) is the digamma function.
\end{corollary}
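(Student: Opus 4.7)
The plan is to compute the bias directly from its definition, $\mathrm{Bias}(\widehat{T}_L, T_L) = \mathbb{E}[\widehat{T}_L] - T_L$, by substituting the two closed-form expressions already available. Proposition \ref{proposition:theil-tl} gives a formula for the population Theil $L$ index, and Theorem \ref{theorem:expected-value-tl} gives a formula for the expectation of its plug-in estimator, both as explicit finite sums involving the mixing proportions $\pi_j$, the shape parameters $\alpha_j$, and the digamma function. Since both quantities are in closed form, the proof reduces to a purely algebraic simplification.

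Concretely, I would substitute
\[
\mathbb{E}[\widehat{T}_L] = \sum_{1\leqslant j_1,\ldots,j_n\leqslant m} \pi_{j_1}\cdots\pi_{j_n}\, \psi\left(\sum_{i=1}^n \alpha_{j_i}\right) - \log(n) - \sum_{j=1}^m \pi_j\,\psi(\alpha_j)
\]
and
\[
T_L = \log\left(\sum_{j=1}^m \pi_j \alpha_j\right) - \sum_{j=1}^m \pi_j\,\psi(\alpha_j)
\]
into the difference $\mathbb{E}[\widehat{T}_L] - T_L$. The key observation is that the term $\sum_{j=1}^m \pi_j\,\psi(\alpha_j)$ appears with equal coefficient and opposite sign in the two quantities, so it cancels exactly upon subtraction. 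What remains is the sum over multi-indices $(j_1,\ldots,j_n)$ of digamma values at $\sum_{i=1}^n \alpha_{j_i}$, minus $\log(n)$, minus $\log(\sum_{j=1}^m \pi_j \alpha_j)$, which is precisely the stated expression.

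There is no substantive obstacle here; all of the probabilistic content lies in Theorem \ref{theorem:expected-value-tl} (via Propositions \ref{proposition:ev-gx} and \ref{proposition:ev-gx-1}) and in Proposition \ref{proposition:theil-tl}. The only care needed is sign bookkeeping to verify that the two occurrences of $\sum_{j=1}^m \pi_j\,\psi(\alpha_j)$ indeed cancel rather than double, and that the log and $\log(n)$ terms are placed correctly in the final identity.
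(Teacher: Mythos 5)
Your proposal is correct and matches the paper exactly: the corollary is obtained by subtracting the expression for $T_L$ in Proposition \ref{proposition:theil-tl} from the expression for $\mathbb{E}[\widehat{T}_L]$ in Theorem \ref{theorem:expected-value-tl}, with the terms $\sum_{j=1}^m \pi_j\,\psi(\alpha_j)$ cancelling as you note. The paper gives no further detail, so your algebraic bookkeeping is precisely the intended argument.
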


{
	\begin{remark}
		Observe that when \(\alpha_j = \alpha \text{ for all } j=1,\ldots,m\), the resulting bias becomes:
		\begin{align*}
			\mathrm{Bias}(\widehat{T}_L, T_L)
			=
			\psi\left(n \alpha\right)-\log(n\alpha)<0.
		\end{align*}
		Thus, Proposition 3.11 of \cite{VilaSaulo2025TheilAtkinsonGamma} is verified.
	\end{remark}
}

\subsection{Bias of the Atkinson index estimator}\label{Bias of the Atkinson index estimator}

{
	The Atkinson index estimator is defined as:
	\begin{align*}
		\widehat{A(\varepsilon)}
		=
		1-
		\dfrac{{\dfrac{1}{n} \displaystyle  \left(\sum_{i=1}^n X_i^{1-\varepsilon}\right)^{1/(1-\varepsilon)}}}
		{\dfrac{1}{n} \displaystyle \sum_{i=1}^n X_i}, \quad
		0\leqslant \varepsilon\neq 1, \
		n \in \mathbb{N},
	\end{align*}
	where $X_1,\ldots, X_n$ are i.i.d. observations from the population $X$.

	In this subsection we are interested in calculating exactly the expectation of $\widehat{A(\varepsilon)}$ for the cases $\varepsilon\to 1$ and $\varepsilon\to \infty$ where \(X_1, \ldots, X_n\) are independent copies of \(X \sim \mathrm{GM}(\bm{\theta})\).

	\begin{definition}
		The generalized mean (also called the power mean) of a set of positive real numbers $x_1,\ldots,x_n$ with exponent
		$p\in\mathbbm{R}\backslash\{0\}$ is defined as:
		\begin{align*}
			M_p(x_1,\ldots,x_n)
			\equiv
			\left({\dfrac{1}{n}}\sum _{i=1}^{n}x_{i}^{p}\right)^{{1}/{p}}.
		\end{align*}
	\end{definition}
	Using the definition and well-known limits \citep{Bullen2003}:
	\begin{align*}
		\lim_{p\to p_0}
		\displaystyle M_{p}(x_{1},\ldots,x_{n})
		=
		\begin{cases}
			\sqrt[n]{x_1\cdots x_n},
			& \text{if} \ p_0=0,
			\\[0,3cm]
			\min\{x_1,\ldots,x_n\}, & \text{if} \ p_0=-\infty,
		\end{cases}
	\end{align*}
	we have
}
%
%For \(\varepsilon=1\), the Atkinson index estimator is given by
\[
\widehat{A(1)}
\equiv \lim_{\varepsilon\to 1}\widehat{A(\varepsilon)}
=
1 - \dfrac{\displaystyle \left(\prod_{i=1}^n X_i\right)^{1/n}}{\dfrac{1}{n} \displaystyle \sum_{i=1}^n X_i}, \quad n \in \mathbb{N},
\]
and
%For \(\varepsilon = +\infty\), the Atkinson index estimator is given by
\[
\widehat{A(\infty)}
\equiv \lim_{\varepsilon\to \infty}\widehat{A(\varepsilon)}
=
1 -
\dfrac{\min\{X_1, \dots, X_n\}}{\dfrac{1}{n} \displaystyle \sum_{i=1}^n X_i}, \quad n \in \mathbb{N}.
\]

\begin{theorem}%[Expected Value of the Atkinson estimator \(\widehat{A(1)}\) for Gamma Mixtures with fixed \(\lambda\)]
	\label{Expected Value of the Atkinson estimator-1}
	If \(X_1, \dots, X_n\) are independent copies of \(X \sim \text{GM}(\bm{\theta})\), then
	\[
	\mathbb{E}\left[\widehat{A(1)}\right]
	=
	1
	-
	n
	\sum_{1 \leqslant j_1, \dots, j_n \leqslant m}
	\pi_{j_1} \cdots \pi_{j_n} \,
	\dfrac{\displaystyle 1}{\sum_{i=1}^n \alpha_{j_i}} \,
	\prod_{i=1}^n
	\dfrac{\Gamma(\alpha_{j_i} + {1\over n})}{\Gamma(\alpha_{j_i})}.
	\]
	%where \(\Gamma(\cdot)\) is the Gamma function.
\end{theorem}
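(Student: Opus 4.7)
The plan is to reduce the expectation of the ratio $(\prod X_i)^{1/n}/\overline{X}$ to a pure Dirichlet moment by exploiting the scale/proportion independence guaranteed by Mosimann's theorem, in the same conditioning-on-labels manner used in Theorem~\ref{theorem:expected-value-tt}.

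First I would write
\[
\mathbb{E}\bigl[\widehat{A(1)}\bigr] = 1 - n\,\mathbb{E}\!\left[\frac{(\prod_{i=1}^n X_i)^{1/n}}{\sum_{i=1}^n X_i}\right],
\]
so the whole task reduces to computing the expectation on the right-hand side. Using the stochastic representation \eqref{representation-X} componentwise, I would introduce i.i.d.\ latent labels $Y_1,\ldots,Y_n$ with $\mathbb{P}(Y_i=j)=\pi_j$, so that conditional on $\{Y_1=j_1,\ldots,Y_n=j_n\}$ the $X_i$'s become independent $\mathrm{Gamma}(\alpha_{j_i},\lambda)$ variables $Z_{j_i}^i$. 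By the law of total expectation it then suffices to evaluate
\[
\mathbb{E}\!\left[\frac{(\prod_{i=1}^n Z_{j_i}^i)^{1/n}}{\sum_{i=1}^n Z_{j_i}^i}\,\bigg|\, Y_1=j_1,\ldots,Y_n=j_n\right]
\]
for each label configuration, and then average against $\pi_{j_1}\cdots\pi_{j_n}$.

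The key step is the Mosimann reduction. Setting $S=\sum_{i=1}^n Z_{j_i}^i$ and $D_i=Z_{j_i}^i/S$, Theorem~\ref{Mosimann} gives $(D_1,\ldots,D_n)\sim\mathrm{Dirichlet}(\alpha_{j_1},\ldots,\alpha_{j_n})$ independently of $S$. Writing $Z_{j_i}^i = S\,D_i$ produces the identity
\[
\frac{(\prod_{i=1}^n Z_{j_i}^i)^{1/n}}{S} = \frac{S\,(\prod_{i=1}^n D_i)^{1/n}}{S} = \prod_{i=1}^n D_i^{1/n},
\]
so the conditional expectation is simply $\mathbb{E}[\prod_{i=1}^n D_i^{1/n}]$ under a Dirichlet law. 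This is the crucial cancellation: the nuisance factor $S$ disappears, and no ratio of gamma sums survives.

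The remaining step is mechanical: apply the Dirichlet mixed-moment identity \eqref{identity:expected-value-prod-pi} with $d_i=1/n$ to get
\[
\mathbb{E}\!\left[\prod_{i=1}^n D_i^{1/n}\right] = \frac{\Gamma(\sum_{l=1}^n \alpha_{j_l})}{\Gamma(\sum_{l=1}^n \alpha_{j_l}+1)}\prod_{i=1}^n \frac{\Gamma(\alpha_{j_i}+1/n)}{\Gamma(\alpha_{j_i})} = \frac{1}{\sum_{l=1}^n \alpha_{j_l}}\prod_{i=1}^n \frac{\Gamma(\alpha_{j_i}+1/n)}{\Gamma(\alpha_{j_i})},
\]
using $\Gamma(x+1)=x\Gamma(x)$. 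Taking the total expectation over the labels yields the stated formula. I do not anticipate a genuine obstacle here: once the Mosimann decoupling is invoked, everything is closed form, and the only care needed is in bookkeeping the $n$-fold sum over $(j_1,\ldots,j_n)\in\{1,\ldots,m\}^n$ and remembering the factor of $n$ that comes from replacing $\overline{X}=S/n$ by $S$ in the denominator.
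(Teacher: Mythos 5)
Your proposal is correct and follows essentially the same route as the paper: rewrite $\widehat{A(1)}=1-n\bigl(\prod_{i=1}^n D_i\bigr)^{1/n}$ with $D_i=X_i/S_n$, condition on the latent labels so that $(D_1,\ldots,D_n)^\top$ is Dirichlet$(\alpha_{j_1},\ldots,\alpha_{j_n})$, apply the mixed-moment identity \eqref{identity:expected-value-prod-pi} with $d_i=1/n$, simplify via $\Gamma(x+1)=x\Gamma(x)$, and average over the labels. The only cosmetic difference is that you spell out the cancellation of $S_n$ explicitly (and note, correctly, that the Mosimann independence of $S_n$ and the proportions is not actually needed once $S_n$ cancels), whereas the paper absorbs this into the initial rewriting of $\widehat{A(1)}$.
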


\begin{proof}
	Note that \(\widehat{A(1)}\) can be written as
	\[
	\widehat{A(1)} = 1 - n \left( \prod_{i=1}^n D_i \right)^{1/n},
	\]
	{where  $D_i = {X_i/S_n}$ and
		$S_n = \sum_{i=1}^n X_i.$} By \eqref{cond-dirichlet},
	given \(Y_1=j_1, \ldots, Y_n=j_n\) the vector $(D_1, \ldots, D_n)^\top$ has Dirichlet distribution with parameter vector $(\alpha_{j_1}, \dots, \alpha_{j_n})^\top$.
	%\[
	% (D_1, \ldots, D_n)  \mid Y_1=j_1, \ldots, Y_n=j_n \sim %\mathrm{Dirichlet}(\alpha_{j_1}, \dots, \alpha_{j_n}).
	%\]
	%Thus,
	%\[
	%\mathbb{E}\left[\widehat{A(1)} \, \Big\vert \, Y_1=j_1, \ldots, Y_n=j_n\right]
	%    =
	%    1 - n \, \mathbb{E} \left[ \left( \prod_{i=1}^n D_i \right)^{1/n} \Bigg| \,  Y_1=j_1, \ldots, Y_n=j_n \right].
	%\]
	Hence, by applying Identity \eqref{identity:expected-value-prod-pi}, with \(d_i = 1/n\), we have
	\begin{align*}
		\mathbb{E}\left[\widehat{A(1)} \, \Big\vert \, Y_1=j_1, \ldots, Y_n=j_n\right]
		&=
		1 - n \, \mathbb{E} \left[ \left( \prod_{i=1}^n D_i \right)^{1/n} \Bigg| \,  Y_1=j_1, \ldots, Y_n=j_n \right]
		\\[0,2cm]
		&= 1 - n \, \frac{\Gamma\left(\sum_{i=1}^n \alpha_{j_i}\right)}{\Gamma\left(\sum_{i=1}^n \alpha_{j_i} + 1\right)} \prod_{i=1}^n \frac{\Gamma(\alpha_{j_i} + {1\over n})}{\Gamma(\alpha_{j_i})}
		\\[0,2cm]
		&= 1 - \frac{n}{\sum_{i=1}^n \alpha_{j_i}} \prod_{i=1}^n \frac{\Gamma(\alpha_{j_i} + {1\over n})}{\Gamma(\alpha_{j_i})},
	\end{align*}
	where the last line follows from the well-known identity: $\Gamma(x+1)=x\Gamma(x)$.
	Finally, applying the Law of total probability, the stated result follows immediately.
\end{proof}

{
	From Theorem \ref{Expected Value of the Atkinson estimator-1} and Proposition \ref{pro-atkinson-infty}, it follows that:
	\begin{corollary}
		The bias of \(\widehat{A(1)} \) relative to \({A(1)} \), denoted by \(\mathrm{Bias}(\widehat{A(1)}, {A(1)} )\), is given by
		\begin{multline*}
			\mathrm{Bias}(\widehat{A(1)}, {A(1)} )
			=
			-
			n
			\sum_{1 \leqslant j_1, \dots, j_n \leqslant m}
			\pi_{j_1} \cdots \pi_{j_n} \,
			\dfrac{\displaystyle 1}{\sum_{i=1}^n \alpha_{j_i}}
						\\[0,2cm]
						\times
			\prod_{i=1}^n
			\dfrac{\Gamma(\alpha_{j_i} + {1\over n})}{\Gamma(\alpha_{j_i})}
			+
			\frac{1}{\sum_{j=1}^m \pi_j \alpha_j} \exp\left\{\sum_{j=1}^m \pi_j \psi(\alpha_j)\right\}.
		\end{multline*}
	\end{corollary}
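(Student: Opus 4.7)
The plan is simply to apply the definition of bias, $\mathrm{Bias}(\widehat{A(1)}, A(1)) = \mathbb{E}[\widehat{A(1)}] - A(1)$, and combine the two closed-form expressions already established: the expectation of the estimator, furnished by Theorem~\ref{Expected Value of the Atkinson estimator-1}, and the population Atkinson index at $\varepsilon \to 1$, furnished by Proposition~\ref{pro-atkinson-1}. (I note in passing that the sentence preceding the corollary cites Proposition~\ref{pro-atkinson-infty}, but that result only asserts $A(\infty)=1$; the quantity actually needed is $A(1)$, so when invoking the result I would point the reference at Proposition~\ref{pro-atkinson-1}.)

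First I would write down the two ingredients explicitly. From Theorem~\ref{Expected Value of the Atkinson estimator-1},
\[
\mathbb{E}[\widehat{A(1)}] = 1 - n\sum_{1\leqslant j_1,\ldots,j_n\leqslant m}\pi_{j_1}\cdots\pi_{j_n}\,\frac{1}{\sum_{i=1}^n\alpha_{j_i}}\prod_{i=1}^n\frac{\Gamma(\alpha_{j_i}+1/n)}{\Gamma(\alpha_{j_i})},
\]
and from Proposition~\ref{pro-atkinson-1},
\[
A(1) = 1 - \frac{1}{\sum_{j=1}^m \pi_j \alpha_j}\exp\left\{\sum_{j=1}^m \pi_j \psi(\alpha_j)\right\}.
\]

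Then I would subtract the second expression from the first: the additive constants $1$ cancel, the large sum in $\mathbb{E}[\widehat{A(1)}]$ carries over with its minus sign (yielding the first term of the target formula), and the exponential term in $A(1)$ flips sign, producing the positive exponential contribution. This reproduces the displayed expression verbatim. There is no genuine obstacle here: the corollary is a direct algebraic consequence of the two referenced results, so the proof amounts to bookkeeping, together with the minor cross-reference correction highlighted above.
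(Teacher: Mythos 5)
Your proposal is correct and matches the paper's (implicit) argument exactly: the corollary is presented as an immediate consequence of subtracting the expression for $A(1)$ in Proposition~\ref{pro-atkinson-1} from the expectation in Theorem~\ref{Expected Value of the Atkinson estimator-1}, with the constants $1$ cancelling. Your observation that the text preceding the corollary mistakenly cites Proposition~\ref{pro-atkinson-infty} instead of Proposition~\ref{pro-atkinson-1} is a valid catch of a typographical cross-reference error in the paper.
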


	\begin{remark}
		When \(\alpha_j = \alpha \text{ for all } j=1,\ldots,m\), the resulting bias becomes:
		\begin{align*}
			\mathrm{Bias}(\widehat{A(1)}, {A(1)} )
			=
			\dfrac{\displaystyle 1}{\alpha}
			\left[
			\exp\left\{\psi(\alpha)\right\}
			-
			\dfrac{\Gamma^n(\alpha + {1\over n})}{\Gamma^n(\alpha)}
			\right],
		\end{align*}
		which confirms the validity of recent result in Corollary 3.17 of \cite{VilaSaulo2025TheilAtkinsonGamma}.
	\end{remark}

}

\begin{theorem}%[Expected Value of the Atkinson estimator \(\widehat{A(\infty)}\) for Gamma Mixtures with fixed \(\lambda\)]
	\label{Expected Value of the Atkinson estimator}
	If \(X_1, \dots, X_n\) are independent copies of \(X \sim \text{GM}(\bm{\theta})\), then
	{
		\begin{multline*}
			\mathbb{E}\left[\widehat{A(\infty)}\right]
			=
			1
			-
			n
			\sum_{\substack{k_1+\cdots+k_m=n \\[0,1cm] k_1,\ldots,k_m\geqslant 0}}
			\binom{n}{k_1,\ldots,k_m}
			\pi_1^{k_1}\cdots \pi_m^{k_m}
			\,
			\dfrac{1}{\sum_{j=1}^m\alpha_j k_j}
			\\[0,2cm]
			\times
			\int_0^\infty
			\dfrac{
				\Gamma^{k_1}(\alpha_1,u)}{\Gamma^{k_1}(\alpha_1)}
			\cdots
			\dfrac{
				\Gamma^{k_m}(\alpha_m,u)
			}{\Gamma^{k_m}(\alpha_m)} \,
			{\rm d}u,
		\end{multline*}
	}
	where
	\begin{align*}
		{\displaystyle {n \choose k_{1},\ldots ,k_{m}}={\frac {n!}{k_{1}!\cdots k_{m}!}}}
	\end{align*}
	is a multinomial coefficient. The sum is over all nonnegative integers $k_1,\ldots,k_m$ with total sum $n$.
	%
	%    \[
	%        \mathbb{E}[\widehat{A(\infty)}] =
	%        1 - n \sum_{1 \leq j_1, \dots, j_n \leq m}
	%        \pi_{j_1} \cdots \pi_{j_n}
	%        \sum_{k=1}^n
	%\frac{\alpha_{j_k}}{\sum_{i=1}^n \alpha_{j_i}}
	%        \frac{B_{1/n}(\alpha_{j_k} + 1, \sum_{i=1}^n \alpha_{j_i} - \alpha_{j_k})}{B(\alpha_{j_k}, \sum_{i=1}^n \alpha_{j_i} - \alpha_{j_k})}.
	%    \]
\end{theorem}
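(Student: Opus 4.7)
The plan is to compute $\mathbb{E}[\widehat{A(\infty)}]$ by conditioning on the latent component indicators $Y_1,\ldots,Y_n$ and then exploiting Mosimann's proportion–sum independence (Theorem \ref{Mosimann}), exactly as in the proof of Theorem \ref{Expected Value of the Atkinson estimator-1}. The first step is to rewrite
\[
\widehat{A(\infty)}
=
1 - n\,\min\{D_1,\ldots,D_n\},
\quad D_i = \frac{X_i}{S_n},\ S_n=\sum_{i=1}^n X_i,
\]
using the identity $\min\{X_1,\ldots,X_n\}=S_n\min\{D_1,\ldots,D_n\}$. Conditional on $Y_1=j_1,\ldots,Y_n=j_n$, the $X_i$'s are independent $\mathrm{Gamma}(\alpha_{j_i},\lambda)$ variables, so by Theorem \ref{Mosimann} the random variables $\min\{D_1,\ldots,D_n\}$ and $S_n$ are conditionally independent, and $S_n\mid Y's\sim\mathrm{Gamma}(\sum_{i=1}^n\alpha_{j_i},\lambda)$.

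The key identity to be exploited is the multiplicative factorization
\[
\mathbb{E}\bigl[\min\{X_1,\ldots,X_n\}\bigm| Y's\bigr]
=
\mathbb{E}\bigl[\min\{D_1,\ldots,D_n\}\bigm| Y's\bigr]\cdot
\mathbb{E}[S_n\mid Y's],
\]
which isolates the quantity of interest. The left-hand side is computed through the tail-integral formula $\mathbb{E}[\min X_i\mid Y's]=\int_0^\infty \mathbb{P}(X_i>u,\,\forall i\mid Y's)\,{\rm d}u$, which, by conditional independence of the $X_i$'s and the definition of the upper incomplete gamma function, becomes $\int_0^\infty \prod_{i=1}^n [\Gamma(\alpha_{j_i},\lambda u)/\Gamma(\alpha_{j_i})]\,{\rm d}u$. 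The change of variables $v=\lambda u$ pulls out a factor $1/\lambda$, which cancels against the $1/\lambda$ appearing in $\mathbb{E}[S_n\mid Y's]=(\sum_i\alpha_{j_i})/\lambda$. Dividing yields
\[
\mathbb{E}\bigl[\min\{D_1,\ldots,D_n\}\bigm| Y_1=j_1,\ldots,Y_n=j_n\bigr]
=
\frac{1}{\sum_{i=1}^n\alpha_{j_i}}
\int_0^\infty
\prod_{i=1}^n
\frac{\Gamma(\alpha_{j_i},v)}{\Gamma(\alpha_{j_i})}\,
{\rm d}v.
\]

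To finish, I would apply the law of total expectation,
\[
\mathbb{E}[\widehat{A(\infty)}]
=
1-n\sum_{1\leqslant j_1,\ldots,j_n\leqslant m}
\pi_{j_1}\cdots\pi_{j_n}\,
\mathbb{E}[\min\{D_1,\ldots,D_n\}\mid Y_1=j_1,\ldots,Y_n=j_n],
\]
and then regroup the sum over ordered tuples $(j_1,\ldots,j_n)$ according to their multiplicity vector $(k_1,\ldots,k_m)$, where $k_j=\#\{i:j_i=j\}$. The number of tuples with a given multiplicity vector is the multinomial coefficient $\binom{n}{k_1,\ldots,k_m}$, the product $\pi_{j_1}\cdots\pi_{j_n}$ collapses to $\pi_1^{k_1}\cdots\pi_m^{k_m}$, the normalization becomes $\sum_{j=1}^m\alpha_j k_j$, and the integrand becomes $\prod_{j=1}^m[\Gamma(\alpha_j,v)/\Gamma(\alpha_j)]^{k_j}$, giving exactly the stated formula.

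The main obstacle here is not any single deep calculation but rather the correct bookkeeping: one must be careful to justify the tail-integral representation of $\mathbb{E}[\min X_i\mid Y's]$ (which is immediate by Fubini since the integrand is nonnegative) and to verify that the combinatorial collapse of the ordered sum over $(j_1,\ldots,j_n)$ into the multinomial sum preserves both the coefficient $1/\sum_i\alpha_{j_i}$ and the integrand factor-by-factor. Everything else reduces to the independence of gammas, the incomplete gamma tail $\Pr(Z>u)=\Gamma(\alpha,\lambda u)/\Gamma(\alpha)$ for $Z\sim\mathrm{Gamma}(\alpha,\lambda)$, and the Mosimann decoupling already used elsewhere in the paper.
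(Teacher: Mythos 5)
Your proof is correct, but it takes a genuinely different route from the paper's. You condition on the latent labels $Y_1,\dots,Y_n$, invoke the Mosimann/gamma--Dirichlet decoupling to factor $\mathbb{E}[\min_i X_i \mid Y\text{'s}] = \mathbb{E}[\min_i D_i \mid Y\text{'s}]\,\mathbb{E}[S_n \mid Y\text{'s}]$, and then need only the one-dimensional tail integral $\mathbb{E}[\min_i X_i\mid Y\text{'s}]=\int_0^\infty\prod_i\Gamma(\alpha_{j_i},\lambda u)/\Gamma(\alpha_{j_i})\,\mathrm{d}u$; the multinomial coefficient then arises from regrouping ordered label tuples by their multiplicities. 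The paper instead works unconditionally with the mixture: it represents the denominator via $1/S_n=\int_0^\infty e^{-S_n z}\,\mathrm{d}z$ and the minimum via the layer-cake formula $\min_i X_i=\int_0^\infty\mathds{1}_{\cap_i\{X_i\geqslant t\}}\,\mathrm{d}t$, reduces to a double integral of $\mathbb{E}^n[\mathds{1}_{\{X\geqslant t\}}e^{-Xz}]$, expands by the multinomial theorem, and collapses the double integral with the substitution $u=(z+\lambda)t$ (the $z$-integration producing the factor $1/\sum_j\alpha_jk_j$ that you obtain from $\mathbb{E}[S_n]$). Your version is shorter, is more consistent with the machinery used for $\widehat T_T$, $\widehat{A(1)}$ and $\widehat{\mathrm{VMR}}$ elsewhere in the paper, and makes the disappearance of $\lambda$ transparent; the paper's version avoids any conditioning at the cost of a double integral. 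One point you should make explicit: Theorem \ref{Mosimann} as stated gives independence of each \emph{individual} $D_j$ from $S_n$, whereas $\min\{D_1,\dots,D_n\}$ is a function of the whole vector, so you need the joint independence of $(D_1,\dots,D_n)^\top$ from $S_n$ --- this is the standard gamma--Dirichlet factorization and is exactly the form the paper itself invokes in \eqref{cond-dirichlet}, so it is a citation issue rather than a gap.
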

\begin{proof}
	{
		By using the identity
		\begin{align*}
			\int_{0}^\infty \exp(-\xi z){\rm d}z={1\over \xi},
			\quad \xi>0,
		\end{align*}
		with $\xi=\sum_{i=1}^{n}X_i$, we get
		\begin{align}\label{min-formula-0}
			\mathbb{E}
			\left[
			\dfrac{\min\{X_1, \dots, X_n\}}{\displaystyle \sum_{i=1}^n X_i}
			\right]
			=
			\mathbb{E}\left[\min\{X_{1},\ldots,X_{n}\}\int_0^\infty\exp\left\{-\left(\sum_{i=1}^{n}X_i\right)z\right\}{\rm d}z\right].
		\end{align}
		By applying the identity
		\begin{align*}%\label{min-formula}
			\min\{X_{1},\ldots,X_{n}\}
			=
			\int_0^\infty
			\mathds{1}_{\bigcap_{i=1}^{n} \{X_i\geqslant t\}}
			{\rm d}t,
		\end{align*}
		the expression in \eqref{min-formula-0} becomes
		\begin{align*}
			\int_0^\infty
			\int_0^\infty
			\mathbb{E}\left[
			\mathds{1}_{\bigcap_{i=1}^{n} \{X_i\geqslant t\}}
			\exp\left\{-\left(\sum_{i=1}^{n}X_i\right)z\right\}\right]
			{\rm d}t
			{\rm d}z,
		\end{align*}
		where the interchange of integrals is justified by Tonelli's theorem. Since $X_1,\ldots, X_n$  are i.i.d. with the same distribution of $X \sim \mathrm{GM}(\bm{\theta})$, the last expression simplifies to
		\begin{align}\label{integraldouble}
			\int_0^\infty
			\int_0^\infty
			\mathbb{E}^n\left[
			\mathds{1}_{\{X\geqslant t\}}
			\exp\left(-Xz\right)\right]
			{\rm d}t
			{\rm d}z.
		\end{align}
		When $X\sim\text{GM}(\bm{\theta})$, direct computation shows
		\begin{align*}
			\mathbb{E}\left[
			\mathds{1}_{\{X\geqslant t\}}
			\exp\left(-Xz\right)
			\right]
			=
			\sum_{j=1}^m
			\pi_j \,
			{\lambda^{\alpha_j} \Gamma(\alpha_j,(z+\lambda)t)\over (z+\lambda)^{\alpha_j} \Gamma(\alpha_j)}.
		\end{align*}
		Consequently, by using the above identity and then the Multinomial theorem, the integral in \eqref{integraldouble} can be written as
		\begin{align*}
			&\int_0^\infty
			\int_0^\infty
			\left[
			\sum_{j=1}^m
			\pi_j \,
			{\lambda^{\alpha_j}\over (z+\lambda)^{\alpha_j} \Gamma(\alpha_j)}
			\,
			\Gamma(\alpha_j,(z+\lambda)t)
			\right]^n
			{\rm d}t
			{\rm d}z
			\\[0,2cm]
			&=
			\sum_{\substack{k_1+\cdots+k_m=n \\[0,1cm] k_1,\ldots,k_m\geqslant 0}}
			\binom{n}{k_1,\ldots,k_m}
			\prod_{j=1}^m \pi_j^{k_j}
			\int_0^\infty
			\int_0^\infty
			\prod_{j=1}^m
			\dfrac{\lambda^{\alpha_j k_j}\Gamma^{k_j}(\alpha_j,(z+\lambda)t)}{(z+\lambda)^{\alpha_j k_j} \Gamma^{k_j}(\alpha_j)}
			{\rm d}t
			{\rm d}z.
		\end{align*}
		Now, making the change of variable $u=(z+\lambda)t$, the proof of theorem follows.
	}
	%
	%    Since \( \widehat{A(\infty)} \) can be rewritten as
	%    \[
	%    \widehat{A(\infty)} = 1 - n \min\{ D_1, \ldots, D_n \},
	%    \]
	%    it follows that
	%    \[
	%    \mathbb{E}[\widehat{A(\infty)}] = 1 - n \, \mathbb{E}\bigl[\min\{ D_1, \ldots, D_n \}\bigr].
	%    \]
	%
	%    Given \(Y_1 = j_1, \dots, Y_n = j_n\), it follows that
	%    \[
	%    D_1, \ldots, D_n \mid Y_1 = j_1, \dots, Y_n = j_n \sim \text{Dirichlet}(\alpha_{j_1},\dots,\alpha_{j_n}).
	%    \]
	%
	%    Now, from Identity~\eqref{identity:minimum-dirichlet} in Proposition~\ref{proposition:moments-dirichlet}, it follows that
	%    \[
	%    \mathbb{E}\left[ \min\{ D_1, \ldots, D_n \} \mid Y_1 = j_1, \dots, Y_n = j_n \right] =
	%    \sum_{k=1}^n \frac{\alpha_{j_k}}{\sum_{i=1}^n \alpha_{j_i}} \frac{B_{1/n}(\alpha_{j_k} + 1, \sum_{i=1}^n \alpha_{j_i} - \alpha_{j_k})}{B(\alpha_{j_k}, \sum_{i=1}^n \alpha_{j_i} - \alpha_{j_k})}.
	%    \]
	%
	%    Finally, applying the law of total probability over the latent partition \((Y_1, \ldots, Y_n)\), the result follows immediately.
	%
\end{proof}

{
	By combining Theorem \ref{Expected Value of the Atkinson estimator} and Proposition \ref{pro-atkinson-1} we have:
	\begin{corollary}\label{bias-At}
		The bias of \(\widehat{A(\infty)} \) relative to \({A(\infty)} \), denoted by \(\mathrm{Bias}(\widehat{A(\infty)}, {A(\infty)} )\), is given by
		\begin{multline*}
			\mathrm{Bias}(\widehat{A(\infty)}, {A(\infty)} )
			=
			-
			n
			\sum_{\substack{k_1+\cdots+k_m=n \\[0,1cm] k_1,\ldots,k_m\geqslant 0}}
			\binom{n}{k_1,\ldots,k_m}
			\pi_1^{k_1}\cdots \pi_m^{k_m}
			\,
			\dfrac{1}{\sum_{j=1}^m\alpha_j k_j}
			\\[0,2cm]
			\times
			\int_0^\infty
			\dfrac{
				\Gamma^{k_1}(\alpha_1,u)}{\Gamma^{k_1}(\alpha_1)}
			\cdots
			\dfrac{
				\Gamma^{k_m}(\alpha_m,u)
			}{\Gamma^{k_m}(\alpha_m)} \,
			{\rm d}u.
		\end{multline*}
	\end{corollary}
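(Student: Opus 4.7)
The plan is straightforward arithmetic: by definition, $\mathrm{Bias}(\widehat{A(\infty)}, A(\infty)) = \mathbb{E}[\widehat{A(\infty)}] - A(\infty)$, and both quantities on the right have already been identified in the excerpt.

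First I would invoke Proposition \ref{pro-atkinson-infty} to replace $A(\infty)$ by $1$. Then I would substitute the closed form for $\mathbb{E}[\widehat{A(\infty)}]$ given by Theorem \ref{Expected Value of the Atkinson estimator}, namely
\begin{align*}
\mathbb{E}[\widehat{A(\infty)}] = 1 - n \sum_{\substack{k_1+\cdots+k_m=n \\ k_1,\ldots,k_m\geqslant 0}} \binom{n}{k_1,\ldots,k_m} \pi_1^{k_1}\cdots\pi_m^{k_m} \, \frac{1}{\sum_{j=1}^m \alpha_j k_j} \int_0^\infty \prod_{j=1}^m \frac{\Gamma^{k_j}(\alpha_j,u)}{\Gamma^{k_j}(\alpha_j)} \, \mathrm{d}u.
\end{align*}
The constant term $1$ from $\mathbb{E}[\widehat{A(\infty)}]$ then cancels exactly the $A(\infty) = 1$ being subtracted, and what remains is precisely the negative multinomial sum displayed in the statement of the corollary.

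There is no real obstacle here since both ingredients have been proven earlier in the section; the corollary is simply bookkeeping that packages the difference in a form emphasizing the sign and the structural dependence on the mixing weights $\pi_j$, the shape parameters $\alpha_j$, and the sample size $n$. The only minor subtlety worth flagging is that the expression is in fact independent of the rate parameter $\lambda$, which is consistent with the scale invariance of the Atkinson index and serves as a sanity check on the manipulation.
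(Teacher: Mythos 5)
Your proposal is correct and matches the paper's (implicit) argument exactly: the corollary is obtained by subtracting $A(\infty)=1$ (Proposition \ref{pro-atkinson-infty}) from the expression for $\mathbb{E}[\widehat{A(\infty)}]$ in Theorem \ref{Expected Value of the Atkinson estimator}, so the two constant terms cancel and only the negative multinomial sum remains. Note that the paper's preamble cites Proposition \ref{pro-atkinson-1} here, which appears to be a typo for Proposition \ref{pro-atkinson-infty}; your version uses the correct ingredient.
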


	\begin{remark}
		Note that the integrals in Theorem \ref{Expected Value of the Atkinson estimator} and Corollary \ref{bias-At} do not have closed-form expressions in terms of standard mathematical functions, and therefore require numerical methods for evaluation.
	\end{remark}

	\begin{remark}
		The scale invariance property of estimators $\widehat{T}_T$, $\widehat{T}_L$, $\widehat{A(1)}$ and $\widehat{A(\infty)}$ implies that their respective expectations are unaffected by the rate $\lambda$, as shown in Theorems \ref{theorem:expected-value-tt}, \ref {theorem:expected-value-tl}, \ref{Expected Value of the Atkinson estimator-1} and \ref{Expected Value of the Atkinson estimator}, respectively.
	\end{remark}

}

\subsection{Bias of the  dispersion index estimator}\label{Bias of the  dispersion index estimator}

The sampling estimator of the variance-to-mean ratio (dispersion  index) is given by

\begin{equation*}
	\widehat{\mathrm{VMR}}
	\equiv
	\frac{S^2}{\overline{X}}
	=
	\frac{1}{n - 1}
	\sum_{i=1}^{n} \frac{(X_i - \overline{X})^2}{\overline{X}}, \quad n \geqslant 2,
\end{equation*}
{
	where $X_1,\ldots, X_n$ denote i.i.d. samples drawn from the population  $X$, $\overline{X}$ is the sample mean and $S^2$ is the sample variance.
}

\begin{theorem}%[Expected Value of the Estimator of the Variance-to-Mean Ratio]
	\label{theorem:ev-vmr}
	If \(X_1, \dots, X_n\) are independent copies of \(X \sim \text{GM}(\bm{\theta})\), then
	\begin{align*}
	\mathbb{E}\left[\widehat{\mathrm{VMR}}\right]
		=
	\frac{n}{(n-1)\lambda}
	\sum_{1\leqslant j_1, \dots, j_n \leqslant m}
	\pi_{j_1} \cdots \pi_{j_n}
	\left[
	\dfrac{1}{\sum_{i=1}^n \alpha_{j_i} + 1}
	\sum_{i=1}^{n} {\alpha_{j_i}(\alpha_{j_i} + 1)}
	-
	{1\over n}
	{\sum_{i=1}^n \alpha_{j_i}}
	\right].
	\end{align*}
\end{theorem}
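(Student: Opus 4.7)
The plan is to reduce the estimator to an expression involving the sum $S_n=\sum_{i=1}^n X_i$ and the Dirichlet proportions $D_i=X_i/S_n$, so that the conditional independence granted by Mosimann's theorem can be exploited after conditioning on the mixture indicators $Y_1,\ldots,Y_n$. First I would apply the well-known identity $(n-1)S^2=\sum_{i=1}^n X_i^2-\frac{1}{n}S_n^{\,2}$ to rewrite
\[
\widehat{\mathrm{VMR}}
=\frac{n}{n-1}\left[\sum_{i=1}^n\frac{X_i^2}{S_n}-\frac{S_n}{n}\right]
=\frac{n\,S_n}{n-1}\left[\sum_{i=1}^n D_i^2-\frac{1}{n}\right],
\]
using $X_i^2/S_n=S_n D_i^2$.

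Next I would condition on $(Y_1,\ldots,Y_n)=(j_1,\ldots,j_n)$. Using the stochastic representation \eqref{representation-X} for each $X_i$, the conditional law of $S_n$ is $\mathrm{Gamma}(\sum_{i=1}^n\alpha_{j_i},\lambda)$, while by Theorem~\ref{Mosimann} (exactly as invoked in the proof of Theorem~\ref{theorem:expected-value-tt}) the vector $(D_1,\ldots,D_n)^\top$ is conditionally $\mathrm{Dirichlet}(\alpha_{j_1},\ldots,\alpha_{j_n})$ and, crucially, \emph{independent} of $S_n$. This independence lets me factor the conditional expectation:
\[
\mathbb{E}\!\left[\widehat{\mathrm{VMR}}\,\Big|\,Y_1=j_1,\ldots,Y_n=j_n\right]
=\frac{n}{n-1}\,\mathbb{E}[S_n\mid\cdot]\left(\sum_{i=1}^n\mathbb{E}[D_i^2\mid\cdot]-\frac{1}{n}\right).
\]

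The gamma factor contributes $\mathbb{E}[S_n\mid\cdot]=\lambda^{-1}\sum_{i=1}^n\alpha_{j_i}$, and the Dirichlet moments are handled by applying formula \eqref{identity:expected-value-prod-pi} with exponents $d_i=2$ and $d_\ell=0$ for $\ell\neq i$, giving
\[
\mathbb{E}[D_i^2\mid\cdot]=\frac{\alpha_{j_i}(\alpha_{j_i}+1)}{\bigl(\sum_k\alpha_{j_k}\bigr)\bigl(\sum_k\alpha_{j_k}+1\bigr)},
\]
so that the factor $\sum_k\alpha_{j_k}$ in the numerator from $\mathbb{E}[S_n\mid\cdot]$ cancels neatly against the same factor appearing in the denominators of the Dirichlet second moments. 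After simplification the conditional expectation becomes
\[
\frac{n}{(n-1)\lambda}\left[\frac{\sum_{i=1}^n\alpha_{j_i}(\alpha_{j_i}+1)}{\sum_{i=1}^n\alpha_{j_i}+1}-\frac{1}{n}\sum_{i=1}^n\alpha_{j_i}\right].
\]
Finally, applying the law of total expectation against the probabilities $\mathbb{P}(Y_1=j_1,\ldots,Y_n=j_n)=\pi_{j_1}\cdots\pi_{j_n}$ yields the displayed formula.

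I do not expect serious obstacles: the only subtlety is recognizing the right algebraic rearrangement of $\widehat{\mathrm{VMR}}$ so that the $(S_n,\mathbf{D})$ decomposition applies, and then carefully tracking the cancellation of $\sum_i\alpha_{j_i}$ between the gamma mean and the Dirichlet second-moment denominator so that the factor $\sum_i\alpha_{j_i}+1$ is the one that survives.
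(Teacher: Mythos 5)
Your proposal is correct and follows essentially the same route as the paper: the same rearrangement of $\widehat{\mathrm{VMR}}$ into $\frac{n}{n-1}S_n\bigl(\sum_{i=1}^n D_i^2-\frac{1}{n}\bigr)$, the same conditioning on the mixture labels with Mosimann's independence, the same Dirichlet second-moment computation via \eqref{identity:expected-value-prod-pi}, and the law of total expectation. The only cosmetic difference is that you invoke the mixed-moment formula \eqref{identity:expected-value-prod-pi} directly, whereas the paper cites Proposition~\ref{proposition:moments-dirichlet-0}; your choice is if anything the more natural one since no logarithmic factor is involved.
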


\begin{proof}
	Notice that, using the notations of Remark~\ref{remark:rewrite-tt}, the estimator $\widehat{\mathrm{VMR}}$  can be written as
	\[
	\widehat{\mathrm{VMR}}
	=
	\frac{n}{n-1} \,
	S_n
	\sum_{i=1}^{n} \left(D_i - \frac{1}{n} \right)^2
	=
	\frac{n}{n-1} \,
	S_n
	\left(\sum_{i=1}^{n} D_i^2 - \frac{1}{n} \right).
	\\[0.2em]
	\]

	Given \(Y_1=j_1,\ldots,Y_n=j_n\), \(S_n\) and \((D_1, \ldots, D_{n})^\top\) are independent (Theorem \ref{Mosimann}). Hence,
	\begin{align*}
		\mathbb{E}\left[\widehat{\mathrm{VMR}} \, \Big\vert \, Y_1=j_1,\ldots,Y_n=j_n \right]
		& =
		\frac{n}{n-1}  \,
		\mathbb{E}\left[S_n \mid Y_1=j_1, \ldots, Y_n=j_n \right] \\[0,2cm]
		&\times \mathbb{E}\left[ \sum_{i=1}^{n} D_i^2 - \frac{1}{n} \,\Bigg \vert \, Y_1=j_1,\ldots,Y_n=j_n \right].
	\end{align*}

	As \( S_n \mid Y_1=j_1,\dots,Y_n=j_n \sim \text{Gamma}(\sum_{i=1}^n \alpha_{j_i}, \lambda)\), we have \(\mathbb{E}[S_n \mid Y_1=j_1,\dots,Y_n=j_n ] = \sum_{i=1}^n \alpha_{j_i} / \lambda\). Consequently, by using Proposition~\ref{proposition:moments-dirichlet-0}
	with $r=1$ and $c_j=2\delta_{ji}$,
	%
	%\(c_i=2\) and \(c_j=0\) for \(i \neq j\),
	we get
	\begin{multline*}
	\mathbb{E}\left[\widehat{\mathrm{VMR}} \, \Big\vert\, Y_1=j_1,\ldots,Y_n=j_n \right]
	\\[0,2cm]
	=
	\frac{n}{n-1} \,
	\frac{\sum_{i=1}^n \alpha_{j_i}}{\lambda}
	\left[
	\dfrac{1}{(\sum_{i=1}^n \alpha_{j_i})(\sum_{i=1}^n \alpha_{j_i} + 1)}
	\sum_{i=1}^{n} {\alpha_{j_i}(\alpha_{j_i} + 1)} - \frac{1}{n}
	\right].
		\end{multline*}
	By applying the Law of total expectation, the result follows.
\end{proof}

As a direct result of Corollary~\ref{corollary:vmr} and Theorem~\ref{theorem:ev-vmr}, we obtain the following result.
\begin{corollary}
	%[Bias of the Variance-to-Mean Ratio Estimator for Gamma Mixture with fixed \(\lambda\)]
	The bias of \(\widehat{\mathrm{VMR}}\) relative to \(\mathrm{VMR}\), denoted by \(\mathrm{Bias}(\widehat{\mathrm{VMR}}, \mathrm{VMR})\), is given by
	\begin{multline*}
		\mathrm{Bias}(\widehat{\mathrm{VMR}}, \mathrm{VMR})
		=
		\: \frac{n}{(n - 1)\lambda}
		\sum_{1\leqslant j_1, \dots, j_n \leqslant m}
		\pi_{j_1} \cdots \pi_{j_n}
		\left[
		\dfrac{1}{\sum_{i=1}^{n} \alpha_{j_i} + 1 }
		{ \sum_{i=1}^{n} \alpha_{j_i} (\alpha_{j_i} + 1) }
		- \frac{1}{n} \sum_{i=1}^{n} \alpha_{j_i}
		\right]
		\\[0,2cm]
		-
		\dfrac{1}{\lambda \sum_{j=1}^{m} \pi_j \alpha_j}
		\left[
		{\sum_{j=1}^{m} \pi_j \alpha_j (\alpha_j + 1) - \left( \sum_{j=1}^{m} \pi_j \alpha_j \right)^2}
		\right].
	\end{multline*}
\end{corollary}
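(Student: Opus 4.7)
The plan is to exploit the definition of bias, $\mathrm{Bias}(\widehat{\mathrm{VMR}},\mathrm{VMR}) = \mathbb{E}[\widehat{\mathrm{VMR}}] - \mathrm{VMR}$, and then substitute the two closed-form expressions already available in the excerpt. Concretely, Theorem~\ref{theorem:ev-vmr} supplies the first summand, namely
\[
\frac{n}{(n-1)\lambda}\sum_{1\leqslant j_1,\dots,j_n\leqslant m}\pi_{j_1}\cdots\pi_{j_n}\left[\dfrac{1}{\sum_{i=1}^n \alpha_{j_i}+1}\sum_{i=1}^{n}\alpha_{j_i}(\alpha_{j_i}+1)-\frac{1}{n}\sum_{i=1}^n\alpha_{j_i}\right],
\]
while Corollary~\ref{corollary:vmr} supplies the population VMR to be subtracted. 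Since both quantities are already in closed form, no further computation is required beyond arranging the sign.

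First I would recall the two identities side by side and simply take their difference, matching the structure of the stated expression. I would double-check that the prefactor $1/\lambda$ appears in both terms (so no further simplification via a common factor is needed) and that the sign of the population index is correctly inherited from the subtraction. No case analysis or asymptotic argument is required here because the underlying integrability and exchange-of-sum-and-expectation issues were already resolved in the proof of Theorem~\ref{theorem:ev-vmr} via the Mosimann independence theorem and Proposition~\ref{proposition:moments-dirichlet-0}, and in Proposition~\ref{proposition:mean-variance} for the population VMR.

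The main obstacle, such as it is, is merely cosmetic: verifying that the summation index conventions in Theorem~\ref{theorem:ev-vmr} (multi-index $j_1,\ldots,j_n$ over $\{1,\ldots,m\}^n$) and in Corollary~\ref{corollary:vmr} (single index $j$ over components of the mixture) are transcribed consistently, and that the $-\frac{1}{n}\sum_{i=1}^n\alpha_{j_i}$ tail from the estimator's expectation is kept distinct from the $-\left(\sum_{j=1}^m\pi_j\alpha_j\right)^2$ term coming from $\mathrm{Var}(X)$ in the population index. Once the algebraic bookkeeping is done, the identity in the statement follows immediately, so the proof reduces to the single sentence: \emph{combine Theorem~\ref{theorem:ev-vmr} with Corollary~\ref{corollary:vmr}}.
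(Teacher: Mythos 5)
Your proposal is correct and matches the paper exactly: the paper states this corollary as a direct consequence of Theorem~\ref{theorem:ev-vmr} and Corollary~\ref{corollary:vmr}, obtained by subtracting the population $\mathrm{VMR}$ from the closed-form expectation of $\widehat{\mathrm{VMR}}$. No further argument is given or needed.
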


\begin{remark}
	Note that when \(\alpha_j = \alpha \text{ for all } j=1,\ldots,m\), the resulting bias is:
	\[
	\mathrm{Bias}(\widehat{\mathrm{VMR}}, \mathrm{VMR})
	=
	\frac{n\alpha}{\lambda(n\alpha + 1)}
	-
	\frac{1}{\lambda}
	=
	- \frac{1}{\lambda(n\alpha + 1)} < 0,
	\]
	which corroborates the recent result reported in \cite{Shih2025}.
\end{remark}

In Table \ref{table:0-1}, we summarize all the biases of the estimators obtained in Subsections \ref{subsec:expected-value-theil}, \ref{Bias of the Atkinson index estimator} and \ref{Bias of the  dispersion index estimator}.
\begin{table}[H]
	\centering
	\caption{Theil, Atkinson and dispersion estimators and their respective biases.}
	\label{table:0-1}
	\hspace*{-1cm}
	\resizebox{\linewidth}{!}
	{
		\begin{tabular}[t]{cc}
			\toprule
			Estimator &  Bias  \\
			\midrule
			\rowcolor{gray!10}
			$\widehat{T}_T$
			&
			\makecell{
				$\displaystyle
				\sum_{1 \leqslant j_1, \dots, j_n \leqslant m}
				\pi_{j_1} \cdots \pi_{j_n}
				\,
				\frac{1}{\sum_{i=1}^n \alpha_{j_i}}
				\left[
				\sum_{i=1}^n \alpha_{j_i} \psi(\alpha_{j_i})
				-
				\left( \sum_{i=1}^n \alpha_{j_i} \right) \psi\left( \sum_{i=1}^n \alpha_{j_i} \right)
				+
				n-1
				\right]  \hspace{0.5cm}$
				\\[0,8cm]
				$\displaystyle
				+
				\log(n)
				- \left[
				\frac{1}{\sum_{j=1}^{m} \pi_j \alpha_j} \left(
				\sum_{j=1}^{m} \pi_j \alpha_j \psi(\alpha_j)
				+ 1 \right)
				- \log\left(\sum_{j=1}^{m} \pi_j \alpha_j \right) \right]
				$}
			\\[3.0ex] \addlinespace
			$\widehat{T}_L$
			&
			$\displaystyle
			\sum_{1 \leqslant j_1, \dots, j_n \leqslant m}
			\pi_{j_1} \cdots \pi_{j_n} \, \psi\left( \sum_{i=1}^n \alpha_{j_i} \right) - \log(n) - \log\left( \sum_{j=1}^m \pi_j \alpha_j \right)
			$
			\\[3.0ex] \addlinespace \rowcolor{gray!10}
			$\widehat{A(1)}$
			&
			$ \displaystyle
			-
			n
			\sum_{1 \leqslant j_1, \dots, j_n \leqslant m}
			\pi_{j_1} \cdots \pi_{j_n} \,
			\dfrac{\displaystyle 1}{\sum_{i=1}^n \alpha_{j_i}} \,
			\prod_{i=1}^n
			\dfrac{\Gamma(\alpha_{j_i} + {1\over n})}{\Gamma(\alpha_{j_i})}
			+
			\frac{1}{\sum_{j=1}^m \pi_j \alpha_j} \exp\left\{\sum_{j=1}^m \pi_j \psi(\alpha_j)\right\}
			$
			\\[3.0ex]  \addlinespace
			$\widehat{A(\infty)}$
			&
			$ \displaystyle
			-
			n
			\sum_{\substack{k_1+\cdots+k_m=n \\[0,1cm] k_1,\ldots,k_m\geqslant 0}}
			\binom{n}{k_1,\ldots,k_m}
			\pi_1^{k_1}\cdots \pi_m^{k_m}
			\,
			\dfrac{1}{\sum_{j=1}^m\alpha_j k_j}
			\int_0^\infty
			\dfrac{
				\Gamma^{k_1}(\alpha_1,u)}{\Gamma^{k_1}(\alpha_1)}
			\cdots
			\dfrac{
				\Gamma^{k_m}(\alpha_m,u)
			}{\Gamma^{k_m}(\alpha_m)} \,
			{\rm d}u
			$
			\\[3.0ex] \addlinespace \rowcolor{gray!10}
			$\widehat{\mathrm{VMR}}$
			&
			\makecell{
				$\displaystyle
				\frac{n}{(n - 1)\lambda}
				\sum_{1\leqslant j_1, \dots, j_n \leqslant m}
				\pi_{j_1} \cdots \pi_{j_n}
				\left[
				\dfrac{1}{\sum_{i=1}^{n} \alpha_{j_i} + 1 }
				{ \sum_{i=1}^{n} \alpha_{j_i} (\alpha_{j_i} + 1) }
				- \frac{1}{n} \sum_{i=1}^{n} \alpha_{j_i}
				\right] \hspace{1.7cm} $
				\\[0,8cm]
				$\displaystyle
				-
				\dfrac{1}{\lambda \sum_{j=1}^{m} \pi_j \alpha_j}
				\left[
				{\sum_{j=1}^{m} \pi_j \alpha_j (\alpha_j + 1) - \left( \sum_{j=1}^{m} \pi_j \alpha_j \right)^2}
				\right]
				$
			}
			\\[3.0ex]
			\bottomrule
		\end{tabular}
	}
	\hspace*{-1cm}
\end{table}

\section{Illustrative simulation study}\label{cap:simulacao}

In this section, a Monte Carlo simulation study is conducted to empirically assess the performance of the estimators under different sample sizes and population settings. We compare the traditional and bias-corrected versions of the estimators of the inequality indices (Theil-T, Theil-L, Atkinson with $\varepsilon = 1$, Atkinson in the limit $\varepsilon \to \infty$, and the dispersion index VMR), examining how the analytical corrections affect the estimators in terms of bias and mean squared error (MSE). Particular attention is given to scenarios involving small and moderate samples, where bias tends to be more pronounced.

A two-component mixture was considered with fixed weights $(\pi_1, \pi_2) = (0.60; 0.40)$, shape parameter $\alpha_1 = 0.5$ for the first component, and $\alpha_2 \in \{0.5; 1; 2; 3; 4; 5\}$ for the second, keeping the rate parameter fixed at $\lambda = 1$. This configuration ranges from a homogeneous case ($\alpha_2 = 0.5$) to more concentrated distributions ($\alpha_2 = 5.0$). Sample sizes varied from $n = 10$ to $n = 100$, and the number of replications was fixed at $N_{\text{rep}} = 1{,}000$. Observations were generated according to Definition~\ref{definition:mixture-gamas}, where each $X_i$ comes from the $k$-th component with probability $\pi_k$. For each sample, the mixture parameters were estimated by maximum likelihood, yielding $(\widehat{\pi}_1, \widehat{\pi}_2, \widehat{\alpha}_1, \widehat{\alpha}_2, \widehat{\lambda})$, and the indices and their corrected versions were computed. In all cases, the bias-corrected estimator associated with an index $\theta$
is defined by
\begin{equation}
    \widehat{\theta}^{\,bc}
    \;=\;
    \widehat{\theta}
    \;-\;
    \widehat{\mathrm{Bias}}(\theta),
\end{equation}
where $\widehat{\theta}$ denotes the traditional (plug-in) estimator
and $\widehat{\mathrm{Bias}}(\theta)$ is the analytical bias from
Table~\ref{table:0-1}, evaluated at the estimated mixture parameters
$(\widehat{\pi}_j,\widehat{\alpha}_j,\widehat{\lambda})$.

The entire procedure—from sample generation, parameter estimation, index computation, application of the corrections, and final evaluation—is summarized in Algorithm~\ref{alg:simulacao}, which is run identically for all indices considered, varying only the functional $\theta$ of interest.

\vspace{6pt}
\begin{algorithm}
    \caption{Monte Carlo simulation for bias-corrected estimators.}
    \label{alg:simulacao}

    \KwIn{
        Number of replications $N_{\text{rep}} = 1{,}000$;\\
        Sample sizes $n \in \{10, 11, \ldots, 100\}$;\\
        True parameters $(\pi_1, \pi_2) = (0.6; 0.4)$, $\alpha_1 = 0.5$,
        $\alpha_2 \in \{0.5, 1, 2, 3, 5\}$ and $\lambda = 1$.
    }

    \KwOut{
        Bias, MSE and Monte Carlo standard errors of the traditional and corrected estimators.
    }

    \BlankLine

    \ForEach{combination of $(n, \alpha_2)$}{
        \For{$r \gets 1$ \KwTo $N_{\text{rep}}$}{
            Generate $X_1^{(r)}, \ldots, X_n^{(r)}$ from the Gamma mixture\;
            Estimate the mixture parameters via maximum likelihood\;

            \ForEach{$\theta \in \{\text{Theil-}T, \text{Theil-}L, A(1), A(\infty), \text{VMR}\}$}{
                Compute the traditional estimator $\widehat{\theta}^{(r)}$\;
                Compute the correction term $\widehat{B}^{(r)}(\theta)$\;
                Compute the corrected estimator
                $\widehat{\theta}_{bc}^{(r)} = \widehat{\theta}^{(r)} - \widehat{B}^{(r)}(\theta)$\;
            }
        }

        Compute bias and MSE\;

        Compute the Monte Carlo standard errors\;
    }

    \Return{
        The estimates of bias, MSE and Monte Carlo standard errors for
        $\widehat{\theta}$ and $\widehat{\theta}_{bc}$.
    }
\end{algorithm}

\subsection{Estimation of the gamma mixture parameters}

The simulation requires, in each replication, the estimation of the parameters of the gamma mixture distribution
$\mathrm{MG}(\pi,\alpha,\lambda)$ from the samples generated according to
Definition~\ref{definition:mixture-gamas}. Estimation was carried out by maximum likelihood, via direct numerical optimization of the log-likelihood function of the mixture.

The parametrization adopted imposes, in a continuous and differentiable way, all restrictions required by the model specification, thereby avoiding common numerical difficulties in mixture estimation. To ensure that $\pi_1 \in (0,1)$, the logistic transformation
$\pi_1 = \exp(\eta)/(1+\exp(\eta))$ was used. Positivity of the shape and rate parameters was enforced through box constraints in the optimization algorithm. In addition, the ordering convention is followed, in which the shape parameters are reparametrized in terms of $(a,\delta)$, with $a>0$ and $\delta \in \mathbb{R}$. This formulation guarantees a consistent ordering of the components throughout the optimization, avoiding numerically unstable solutions or the need for posterior reordering.

Optimization was performed using the L-BFGS-B method (\emph{Limited-memory
	Broyden--Fletcher--Goldfarb--Shanno with Bounds}) via the \texttt{optim} function in
\textsf{R}, specifying positive lower bounds for $a$ and $\lambda$. Initial values were defined from a pre-specified vector on the reparametrized scale, and convergence was assessed using the standard criterion of the method, based on the relative stabilization of the likelihood function. For each replication, the final solution corresponds to the local maximum obtained from this starting point.

\subsection{Computational implementation difficulties}

During the development of the routines for computing estimator biases, implemented in the \texttt{R} environment (version~4.4.2) \citep{RCore2024}, it was observed that the main practical limitation lay in the computational cost associated with summations over all possible compositions of the sample size. These expressions involve high-order combinatorial terms whose cardinality grows exponentially with the number of components and with the sample size $n$. As a consequence, direct evaluation of the theoretical formulas quickly becomes infeasible, especially in scenarios with greater heterogeneity and when one seeks to estimate the bias for several parameter configurations.

Moreover, the use of factorials in expressions involving combinatorial coefficients such as $\binom{n}{k_1, k_2, \ldots, k_m}$ led to numerical overflow in operations involving factorials, since the values of $n!$ grow superexponentially. This behavior makes direct computation impossible even for moderately large samples, producing undefined or inaccurate results under standard floating-point arithmetic in \texttt{R}.

To circumvent these difficulties, a twofold optimization strategy was adopted. First, the summations were reorganized so as to group terms with identical combinatorial weights, reducing redundancy and the effective number of operations. To illustrate, in the scenario studied with only two components, it is possible to show that the multinomial sum can be rewritten as
\[
\sum_{\substack{k_1+\cdots+k_m=n \\[0.1cm] k_1,\ldots,k_m\geqslant 0}}
        \binom{n}{k_1,\ldots,k_m}
        \pi_1^{k_1}\cdots \pi_m^{k_m}
        \;\overset{m=2}{=}\;
        \sum_{k=0}^n \binom{n}{k}\,\pi_1^{k}(1-\pi_1)^{n-k}.
\]
In the general case, the number of terms in the sum on the left-hand side is given by the number of integer compositions of $n$ into $m$ parts, that is,
\[
N_{\text{comp}} = \binom{n+m-1}{m-1},
\]
which implies a computational cost of order $\mathcal{O}\!\left(\binom{n+m-1}{m-1}\right)$.
However, by exploiting symmetries among the compositions and grouping terms with identical combinatorial weights, the number of operations is reduced to $\mathcal{O}(n)$ when $m=2$.
This simplification yields a reduction of one to two orders of magnitude in execution time, without any loss of analytical exactness.

The second optimization strategy concerns replacing the direct computation of factorials with the log-gamma function, \texttt{lgamma}, which provides a numerically stable way to compute $\log\big(\Gamma(x)\big)$ without overflow. This reformulation preserves arithmetic accuracy and makes it possible to work with much larger values of $n$, transforming multiplications and divisions of large numbers into additions and subtractions in the logarithmic scale.

These modifications were decisive for making the algorithm executable in reasonable time, enabling bias estimation under multiple parameter configurations and sample sizes. In summary, the reformulation of the sums and the use of logarithmic representations constitute a compromise between theoretical exactness and numerical efficiency, a key aspect for the success of the implementation.

\subsection{Simulation results}

The simulation results are presented in Figures~\ref{fig:fig_indices_alpha} and~\ref{fig:fig_indices_n}, which show the average behavior of bias and MSE of the estimators as functions of $\alpha_2$ and $n$, respectively. Table~\ref{ap:mcse} complements these figures by listing, for each parameter combination, the values of $\theta$ and $\hat{\theta}$, the empirical bias, the MSE and the standard errors associated with these estimates, thereby allowing assessment of the numerical accuracy of the simulations and confirming the general trends observed in the graphs.

Figure~\ref{fig:fig_indices_alpha} shows the average bias (left column) and MSE (right column) of the estimators as a function of $\alpha_2$, keeping $\alpha_1 = 0.5$ and $n = 15$ fixed. For the Theil-T index, the traditional estimator (dashed line) exhibits a substantial negative bias for small $\alpha_2$, which decreases as heterogeneity increases. The corrected version (solid line) practically eliminates this bias, remaining close to zero across the entire range, a feature also confirmed by the tabulated values in Table~\ref{ap:mcse}, whose Monte Carlo standard errors remain small for all parameter combinations. The same behavior is observed for the Theil-L index, where the correction substantially reduces the magnitude of the bias and stabilizes the estimates. Similar results are found for the Atkinson index with $\varepsilon = 1$, whose bias is reduced and becomes practically null after correction. In the limit case $\varepsilon \to \infty$, a slight sign reversal (“supercorrection”) is observed for small $\alpha_2$. This effect is expected because $A(\infty)$ depends strongly on the sample minimum, which exhibits high variability when $n$ is small; in such situations, the analytical correction based on population expected values may temporarily overcompensate the bias, a phenomenon that disappears as $\alpha_2$ increases or as sample size grows. Finally, for the dispersion index (VMR), the correction significantly reduces the negative bias, although some variability persists as $\alpha_2$ increases, a pattern likewise reflected in the values presented in Table~\ref{ap:mcse}.

Figure~\ref{fig:fig_indices_n} shows the average bias (left column) and MSE (right column) of the estimators as a function of the sample size. For the Theil and Atkinson $(\varepsilon = 1)$ indices, one initially observes a negative bias in small samples, indicating \textit{systematic underestimation} of inequality when sample sizes are small. This bias is gradually corrected as $n$ increases, becoming practically null from $n = 30$ onward. The Theil-L estimator exhibits greater variability in small samples, reflecting its sensitivity to the lower tail of the distribution. The Atkinson index $(\varepsilon \to \infty)$, which emphasizes inequality among the poorest individuals, shows bias closer to zero more markedly, suggesting lower susceptibility to sample fluctuations. Finally, the VMR exhibits a more oscillatory pattern in small samples, but converges quickly as $n$ grows. The MSE reinforces the same conclusions, showing a monotone reduction in this metric as $n$ increases, evidencing simultaneous gains in precision and stability. The difference between the traditional estimator (dashed line) and the bias-corrected estimator (solid line) is relevant only for small samples: for instance, for $n=15$, the ratio $\mathrm{MSE}_{\text{corr}} / \mathrm{MSE}_{\text{trad}}$ typically lies between $0.80$ and $0.90$ for the Theil and Atkinson indices, indicating a relative gain in precision of 10\% to 20\%. This effect becomes practically negligible in larger samples, where both estimators converge to the same asymptotic limit.

It should be noted that the bias and MSE estimates presented in the figures are subject to the inherent variability of the simulation procedure. For $N_{\text{sim}} = 1{,}000$, these standard errors are typically between $10^{-3}$ and $10^{-2}$, which are small compared to the differences observed between the traditional and corrected estimators. Therefore, the reported curves are numerically stable and do not exhibit distortions due to the finite number of simulations. The full set of standard error values, for different numbers of replications, is provided in Table~\ref{ap:mcse}, where one can verify that the qualitative conclusions remain valid even when $N_{\text{sim}}$ is increased by an order of magnitude.

In general, the Monte Carlo results indicate that all estimators are consistent and asymptotically unbiased, with increasing efficiency as the sample size grows. The inequality measures remain robust under the gamma mixture model, with satisfactory performance already in moderate samples ($n \approx 30$).

\begin{figure}[!ht]
	\centering
	\includegraphics[width=.82\textwidth]{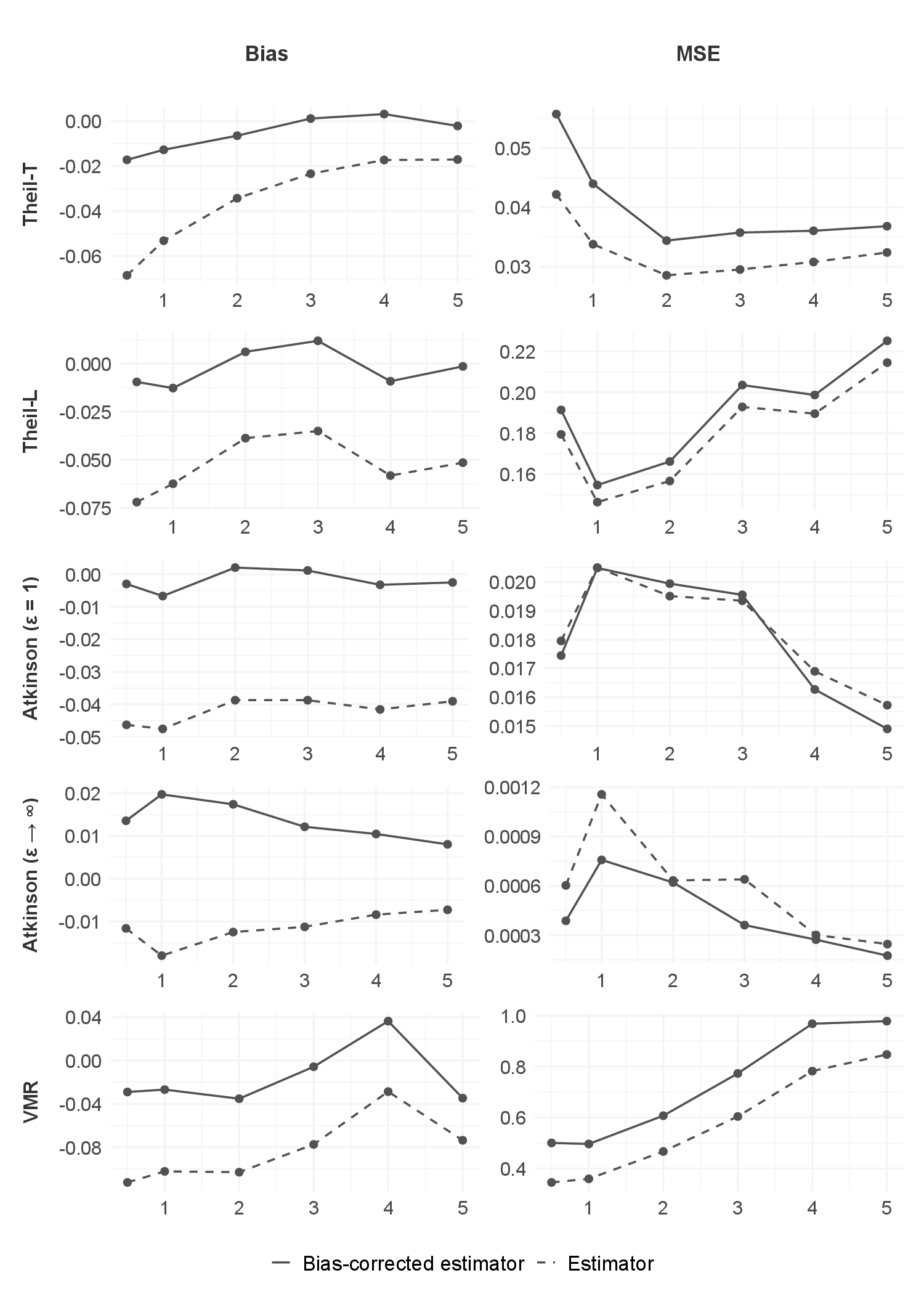}
	\caption{Bias (left) and Mean Squared Error (MSE) (right) of the indices as a function of the parameter $\alpha_2$ for $\alpha_1=0.5$ and $n=15$.}
	\label{fig:fig_indices_alpha}
\end{figure}

\begin{figure}[!ht]
	\centering
	\includegraphics[width=.82\textwidth]{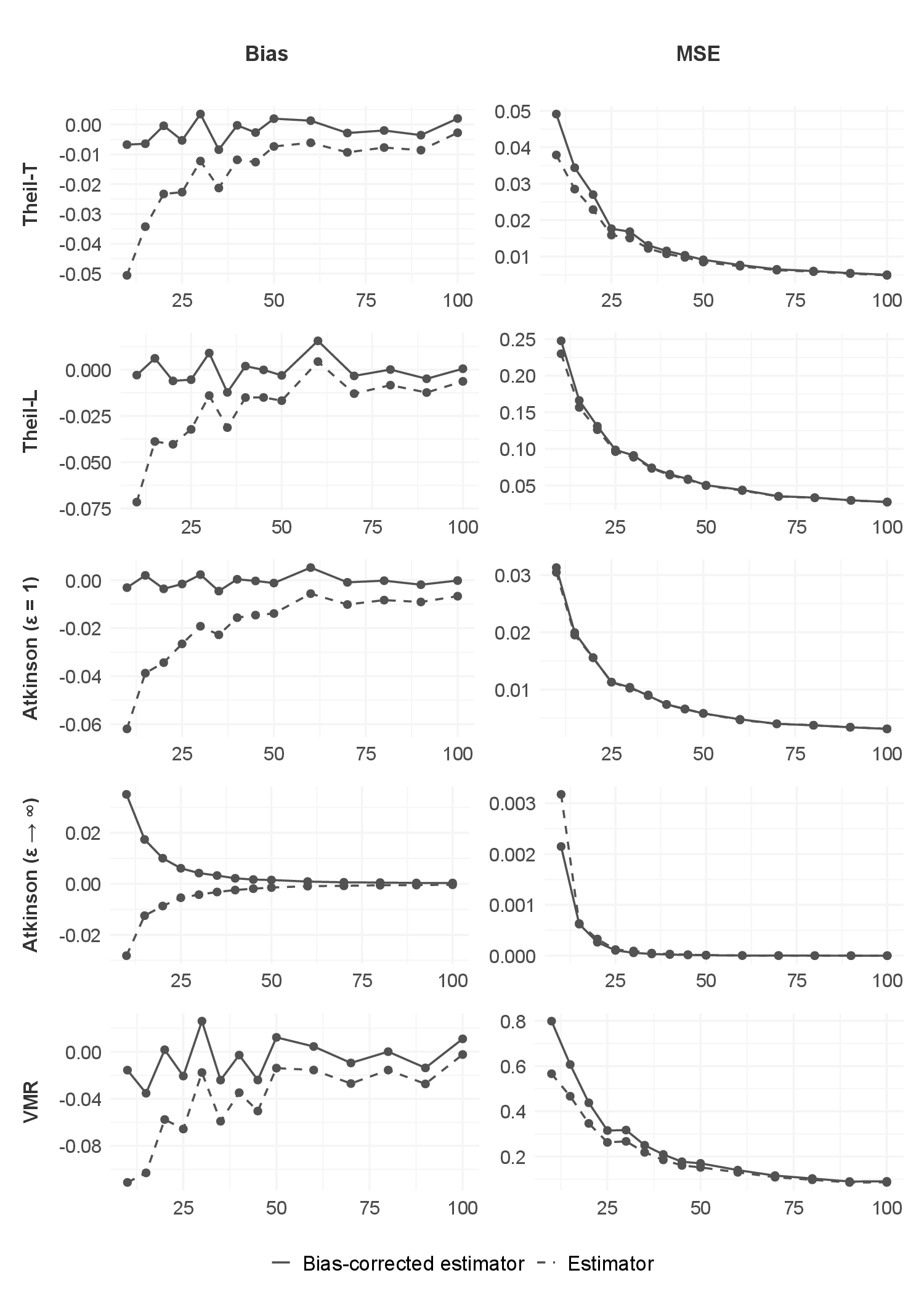}
	\caption{Bias (left) and Mean Squared Error (MSE) (right) of the indices as a function of the sample size $n$ for $\alpha_1=0.5$ and $\alpha_2={2.0}$.}
	\label{fig:fig_indices_n}
\end{figure}

{
\renewcommand{\arraystretch}{0.8}   % more compact rows
\setlength{\tabcolsep}{2.5pt}      % less horizontal padding between columns
\footnotesize
\begin{longtable}{lrcrcrrrccrrrc}
\caption{{\small Results of the Monte Carlo simulations for each index and for each combination of $\alpha_2$ and $n$, showing the true value $\theta$, the mean of the estimates $\hat{\theta}$, the empirical bias $\mathrm{B}(\hat{\theta})$, the mean squared error (MSE), and the standard errors of the bias and of the MSE, both for the original estimator and for the bias-corrected estimator. The results refer to the scenario with $R = 1{,}000$ replications and $\alpha_1 = 0{.}5$.}}
\label{ap:mcse}\\
	\toprule
	\multirow{2}{*}{Index} &
	\multirow{2}{*}{$\alpha_2$} &
	\multirow{2}{*}{$\theta$} &
	\multirow{2}{*}{$n$} &
	\multicolumn{5}{c}{Estimator} &
	\multicolumn{5}{c}{Bias-corrected estimator} \\
	\cmidrule(rr){5-9}
	\cmidrule(rr){10-14}
	&&&&
	$\hat{\theta}$ &
	$\mathrm{B}(\hat{\theta})$ &
	$\mathrm{MSE}(\hat{\theta})$ &
	$\mathrm{EP}(\mathrm{B})$ &
	$\mathrm{EP}(\mathrm{MSE})$ &
	$\hat{\theta}$ &
	$\mathrm{B}(\hat{\theta})$ &
	$\mathrm{MSE}(\hat{\theta})$ &
	$\mathrm{EP}(\mathrm{B})$ &
	$\mathrm{EP}(\mathrm{MSE})$ \\
	\midrule
	\multirow{24}{*}{Theil-T} & \multirow{4}{*}{0.5} & \multirow{4}{*}{0.7296} & 10 & 0.6385 & -0.0912 & 0.0652 & 0.0075 & 0.0027 & 0.7083 & -0.0213 & 0.0932 & 0.0096 & 0.0059 \\
	& & & 25 & 0.6941 & -0.0355 & 0.0287 & 0.0052 & 0.0014 & 0.7266 & -0.0031 & 0.0358 & 0.0060 & 0.0023 \\
	& & & 50 & 0.7178 & -0.0118 & 0.0153 & 0.0039 & 0.0007 & 0.7353 & 0.0057 & 0.0176 & 0.0042 & 0.0009 \\
	& & & 100 & 0.7165 & -0.0131 & 0.0073 & 0.0027 & 0.0003 & 0.7253 & -0.0043 & 0.0077 & 0.0028 & 0.0004 \\

	\cmidrule(lr){2-14} & \multirow{4}{*}{1.0} & \multirow{4}{*}{0.6139} & 10 & 0.5407 & -0.0732 & 0.0456 & 0.0063 & 0.0019 & 0.5951 & -0.0188 & 0.0615 & 0.0078 & 0.0038 \\
	& & & 25 & 0.5814 & -0.0325 & 0.0205 & 0.0044 & 0.0009 & 0.6046 & -0.0093 & 0.0237 & 0.0049 & 0.0012 \\
	& & & 50 & 0.6015 & -0.0125 & 0.0105 & 0.0032 & 0.0006 & 0.6141 & 0.0002 & 0.0116 & 0.0034 & 0.0007 \\
	& & & 100 & 0.6075 & -0.0064 & 0.0057 & 0.0024 & 0.0003 & 0.6141 & 0.0002 & 0.0060 & 0.0024 & 0.0003 \\

	\cmidrule(lr){2-14} & \multirow{4}{*}{2.0} & \multirow{4}{*}{0.5858} & 10 & 0.5378 & -0.0479 & 0.0433 & 0.0064 & 0.0021 & 0.5800 & -0.0058 & 0.0583 & 0.0076 & 0.0043 \\
	& & & 25 & 0.5656 & -0.0202 & 0.0177 & 0.0042 & 0.0008 & 0.5824 & -0.0034 & 0.0196 & 0.0044 & 0.0009 \\
	& & & 50 & 0.5797 & -0.0060 & 0.0095 & 0.0031 & 0.0005 & 0.5888 & 0.0030 & 0.0101 & 0.0032 & 0.0005 \\
	& & & 100 & 0.5842 & -0.0016 & 0.0045 & 0.0021 & 0.0002 & 0.5888 & 0.0031 & 0.0047 & 0.0022 & 0.0002 \\

	\cmidrule(lr){2-14} & \multirow{4}{*}{3.0} & \multirow{4}{*}{0.6067} & 10 & 0.5617 & -0.0450 & 0.0441 & 0.0065 & 0.0021 & 0.5996 & -0.0072 & 0.0579 & 0.0076 & 0.0042 \\
	& & & 25 & 0.6015 & -0.0052 & 0.0190 & 0.0044 & 0.0010 & 0.6163 & 0.0095 & 0.0214 & 0.0046 & 0.0013 \\
	& & & 50 & 0.6002 & -0.0066 & 0.0092 & 0.0030 & 0.0004 & 0.6068 & 0.000* & 0.0096 & 0.0031 & 0.0005 \\
	& & & 100 & 0.6030 & -0.0037 & 0.0045 & 0.0021 & 0.0002 & 0.6064 & -0.0003 & 0.0046 & 0.0021 & 0.0002 \\

	\cmidrule(lr){2-14} & \multirow{4}{*}{4.0} & \multirow{4}{*}{0.6322} & 10 & 0.5983 & -0.0339 & 0.0471 & 0.0068 & 0.0023 & 0.6353 & 0.0031 & 0.0635 & 0.0080 & 0.0043 \\
	& & & 25 & 0.6200 & -0.0122 & 0.0179 & 0.0042 & 0.0008 & 0.6305 & -0.0017 & 0.0192 & 0.0044 & 0.0009 \\
	& & & 50 & 0.6295 & -0.0027 & 0.0096 & 0.0031 & 0.0004 & 0.6343 & 0.0020 & 0.0099 & 0.0031 & 0.0005 \\
	& & & 100 & 0.6298 & -0.0024 & 0.0045 & 0.0021 & 0.0002 & 0.6321 & -0.0002 & 0.0046 & 0.0021 & 0.0002 \\

	\cmidrule(lr){2-14} & \multirow{4}{*}{5.0} & \multirow{4}{*}{0.6554} & 10 & 0.6377 & -0.0177 & 0.0483 & 0.0069 & 0.0022 & 0.6712 & 0.0158 & 0.0648 & 0.0080 & 0.0042 \\
	& & & 25 & 0.6493 & -0.0061 & 0.0198 & 0.0044 & 0.0009 & 0.6571 & 0.0016 & 0.0212 & 0.0046 & 0.0010 \\
	& & & 50 & 0.6534 & -0.0021 & 0.0101 & 0.0032 & 0.0005 & 0.6563 & 0.0009 & 0.0103 & 0.0032 & 0.0005 \\
	& & & 100 & 0.6537 & -0.0017 & 0.0049 & 0.0022 & 0.0002 & 0.6551 & -0.0003 & 0.0050 & 0.0022 & 0.0002 \\

	\cmidrule(lr){1-14}
	\multirow{24}{*}{Theil-L} & \multirow{4}{*}{0.5} & \multirow{4}{*}{1.2704} & 10 & 1.1753 & -0.0950 & 0.2930 & 0.0169 & 0.0145 & 1.2681 & -0.0022 & 0.3257 & 0.0181 & 0.0175 \\
	& & & 25 & 1.2346 & -0.0357 & 0.1186 & 0.0108 & 0.0054 & 1.2728 & 0.0025 & 0.1239 & 0.0111 & 0.0060 \\
	& & & 50 & 1.2587 & -0.0116 & 0.0602 & 0.0078 & 0.0026 & 1.2783 & 0.0079 & 0.0618 & 0.0079 & 0.0028 \\
	& & & 100 & 1.2657 & -0.0047 & 0.0296 & 0.0054 & 0.0013 & 1.2753 & 0.0049 & 0.0300 & 0.0055 & 0.0013 \\

	\cmidrule(lr){2-14} & \multirow{4}{*}{1.0} & \multirow{4}{*}{1.0523} & 10 & 0.9570 & -0.0953 & 0.2200 & 0.0145 & 0.0123 & 1.0290 & -0.0233 & 0.2384 & 0.0154 & 0.0144 \\
	& & & 25 & 1.0243 & -0.0280 & 0.0924 & 0.0096 & 0.0043 & 1.0536 & 0.0013 & 0.0955 & 0.0098 & 0.0046 \\
	& & & 50 & 1.0370 & -0.0153 & 0.0438 & 0.0066 & 0.0021 & 1.0520 & -0.0003 & 0.0445 & 0.0067 & 0.0022 \\
	& & & 100 & 1.0470 & -0.0053 & 0.0231 & 0.0048 & 0.0011 & 1.0545 & 0.0022 & 0.0233 & 0.0048 & 0.0011 \\

	\cmidrule(lr){2-14} & \multirow{4}{*}{2.0} & \multirow{4}{*}{1.1043} & 10 & 1.0531 & -0.0512 & 0.2639 & 0.0162 & 0.0136 & 1.1219 & 0.0176 & 0.2903 & 0.0170 & 0.0158 \\
	& & & 25 & 1.0696 & -0.0347 & 0.1021 & 0.0100 & 0.0046 & 1.0962 & -0.0081 & 0.1046 & 0.0102 & 0.0049 \\
	& & & 50 & 1.0927 & -0.0116 & 0.0511 & 0.0071 & 0.0023 & 1.1062 & 0.0019 & 0.0519 & 0.0072 & 0.0024 \\
	& & & 100 & 1.1067 & 0.0024 & 0.0259 & 0.0051 & 0.0012 & 1.1135 & 0.0092 & 0.0261 & 0.0051 & 0.0012 \\

	\cmidrule(lr){2-14} & \multirow{4}{*}{3.0} & \multirow{4}{*}{1.2145} & 10 & 1.1229 & -0.0915 & 0.2618 & 0.0159 & 0.0145 & 1.1937 & -0.0207 & 0.2817 & 0.0168 & 0.0165 \\
	& & & 25 & 1.1883 & -0.0261 & 0.1109 & 0.0105 & 0.0053 & 1.2162 & 0.0018 & 0.1144 & 0.0107 & 0.0057 \\
	& & & 50 & 1.2104 & -0.0040 & 0.0551 & 0.0074 & 0.0031 & 1.2239 & 0.0094 & 0.0561 & 0.0075 & 0.0032 \\
	& & & 100 & 1.2067 & -0.0078 & 0.0268 & 0.0052 & 0.0012 & 1.2134 & -0.0011 & 0.0270 & 0.0052 & 0.0012 \\

	\cmidrule(lr){2-14} & \multirow{4}{*}{4.0} & \multirow{4}{*}{1.3175} & 10 & 1.2327 & -0.0848 & 0.2998 & 0.0171 & 0.0128 & 1.3097 & -0.0078 & 0.3262 & 0.0181 & 0.0153 \\
	& & & 25 & 1.2838 & -0.0337 & 0.1133 & 0.0106 & 0.0047 & 1.3120 & -0.0055 & 0.1161 & 0.0108 & 0.0049 \\
	& & & 50 & 1.3062 & -0.0113 & 0.0587 & 0.0077 & 0.0026 & 1.3200 & 0.0024 & 0.0595 & 0.0077 & 0.0026 \\
	& & & 100 & 1.3134 & -0.0041 & 0.0314 & 0.0056 & 0.0014 & 1.3201 & 0.0026 & 0.0316 & 0.0056 & 0.0014 \\

	\cmidrule(lr){2-14} & \multirow{4}{*}{5.0} & \multirow{4}{*}{1.4086} & 10 & 1.3288 & -0.0797 & 0.3043 & 0.0173 & 0.0137 & 1.4115 & 0.0029 & 0.3335 & 0.0183 & 0.0166 \\
	& & & 25 & 1.3806 & -0.0280 & 0.1177 & 0.0108 & 0.0053 & 1.4096 & 0.0011 & 0.1208 & 0.0110 & 0.0056 \\
	& & & 50 & 1.3992 & -0.0093 & 0.0622 & 0.0079 & 0.0027 & 1.4132 & 0.0046 & 0.0631 & 0.0079 & 0.0028 \\
	& & & 100 & 1.3955 & -0.0131 & 0.0279 & 0.0053 & 0.0012 & 1.4023 & -0.0062 & 0.0280 & 0.0053 & 0.0012 \\

	\cmidrule(lr){1-14}
	\multirow{24}{*}{$\mathrm{A}(1)$} & \multirow{4}{*}{0.5} & \multirow{4}{*}{0.7193} & 10 & 0.6504 & -0.0689 & 0.0312 & 0.0051 & 0.0016 & 0.7130 & -0.0063 & 0.0309 & 0.0056 & 0.0014 \\
	& & & 25 & 0.6924 & -0.0268 & 0.0106 & 0.0032 & 0.0005 & 0.7190 & -0.0003 & 0.0104 & 0.0032 & 0.0005 \\
	& & & 50 & 0.7075 & -0.0117 & 0.0051 & 0.0022 & 0.0002 & 0.7210 & 0.0018 & 0.0050 & 0.0022 & 0.0002 \\
	& & & 100 & 0.7138 & -0.0055 & 0.0024 & 0.0015 & 0.0001 & 0.7206 & 0.0014 & 0.0024 & 0.0015 & 0.0001 \\

	\cmidrule(lr){2-14} & \multirow{4}{*}{1.0} & \multirow{4}{*}{0.6509} & 10 & 0.5791 & -0.0718 & 0.0324 & 0.0052 & 0.0014 & 0.6376 & -0.0132 & 0.0327 & 0.0057 & 0.0013 \\
	& & & 25 & 0.6249 & -0.0260 & 0.0124 & 0.0034 & 0.0006 & 0.6504 & -0.0005 & 0.0126 & 0.0036 & 0.0005 \\
	& & & 50 & 0.6379 & -0.0130 & 0.0056 & 0.0023 & 0.0003 & 0.6508 & -0.0001 & 0.0056 & 0.0024 & 0.0002 \\
	& & & 100 & 0.6450 & -0.0059 & 0.0028 & 0.0017 & 0.0001 & 0.6516 & 0.0007 & 0.0028 & 0.0017 & 0.0001 \\

	\cmidrule(lr){2-14} & \multirow{4}{*}{2.0} & \multirow{4}{*}{0.6686} & 10 & 0.6090 & -0.0596 & 0.0332 & 0.0054 & 0.0015 & 0.6676 & -0.0009 & 0.0351 & 0.0059 & 0.0014 \\
	& & & 25 & 0.6399 & -0.0286 & 0.0126 & 0.0034 & 0.0006 & 0.6645 & -0.0041 & 0.0126 & 0.0035 & 0.0006 \\
	& & & 50 & 0.6564 & -0.0122 & 0.0056 & 0.0023 & 0.0002 & 0.6692 & 0.0007 & 0.0056 & 0.0024 & 0.0002 \\
	& & & 100 & 0.6651 & -0.0035 & 0.0029 & 0.0017 & 0.0001 & 0.6716 & 0.0031 & 0.0029 & 0.0017 & 0.0001 \\

	\cmidrule(lr){2-14} & \multirow{4}{*}{3.0} & \multirow{4}{*}{0.7031} & 10 & 0.6363 & -0.0668 & 0.0303 & 0.0051 & 0.0015 & 0.6946 & -0.0086 & 0.0300 & 0.0055 & 0.0014 \\
	& & & 25 & 0.6791 & -0.0241 & 0.0105 & 0.0032 & 0.0005 & 0.7033 & 0.0002 & 0.0104 & 0.0032 & 0.0005 \\
	& & & 50 & 0.6939 & -0.0092 & 0.0048 & 0.0022 & 0.0002 & 0.7062 & 0.0031 & 0.0049 & 0.0022 & 0.0002 \\
	& & & 100 & 0.6968 & -0.0063 & 0.0025 & 0.0016 & 0.0001 & 0.7030 & -0.0001 & 0.0025 & 0.0016 & 0.0001 \\

	\cmidrule(lr){2-14} & \multirow{4}{*}{4.0} & \multirow{4}{*}{0.7322} & 10 & 0.6685 & -0.0637 & 0.0282 & 0.0049 & 0.0013 & 0.7263 & -0.0059 & 0.0274 & 0.0052 & 0.0011 \\
	& & & 25 & 0.7076 & -0.0246 & 0.0097 & 0.0030 & 0.0005 & 0.7306 & -0.0016 & 0.0094 & 0.0031 & 0.0004 \\
	& & & 50 & 0.7212 & -0.0110 & 0.0046 & 0.0021 & 0.0002 & 0.7327 & 0.0005 & 0.0046 & 0.0021 & 0.0002 \\
	& & & 100 & 0.7269 & -0.0053 & 0.0024 & 0.0015 & 0.0001 & 0.7326 & 0.0004 & 0.0024 & 0.0015 & 0.0001 \\

	\cmidrule(lr){2-14} & \multirow{4}{*}{5.0} & \multirow{4}{*}{0.7555} & 10 & 0.6972 & -0.0583 & 0.0259 & 0.0047 & 0.0015 & 0.7529 & -0.0026 & 0.0249 & 0.0050 & 0.0013 \\
	& & & 25 & 0.7341 & -0.0214 & 0.0081 & 0.0028 & 0.0005 & 0.7564 & 0.0008 & 0.0079 & 0.0028 & 0.0004 \\
	& & & 50 & 0.7456 & -0.0099 & 0.0039 & 0.0020 & 0.0002 & 0.7566 & 0.0011 & 0.0038 & 0.0020 & 0.0002 \\
	& & & 100 & 0.7489 & -0.0067 & 0.0018 & 0.0013 & 0.0001 & 0.7544 & -0.0012 & 0.0017 & 0.0013 & 0.0001 \\

	\cmidrule(lr){1-14}
	\multirow{24}{*}{$\mathrm{A}(\infty)$} & \multirow{4}{*}{0.5} & \multirow{4}{*}{1.0000} & 10 & 0.9741 & -0.0259 & 0.0028 & 0.0015 & 0.0004 & 1.0290 & 0.0290 & 0.0015 & 0.0008 & 0.0001 \\
	& & & 25 & 0.9953 & -0.0047 & 0.0001 & 0.0003 & 0.000* & 1.0047 & 0.0047 & 0.0001 & 0.0002 & 0.000* \\
	& & & 50 & 0.9988 & -0.0012 & 0.000* & 0.0001 & 0.000* & 1.0010 & 0.0010 & 0.000* & 0.0001 & 0.000* \\
	& & & 100 & 0.9997 & -0.0003 & 0.000* & 0.000* & 0.000* & 1.0002 & 0.0002 & 0.000* & 0.000* & 0.000* \\

	\cmidrule(lr){2-14} & \multirow{4}{*}{1.0} & \multirow{4}{*}{1.0000} & 10 & 0.9635 & -0.0365 & 0.0040 & 0.0016 & 0.0003 & 1.0383 & 0.0383 & 0.0024 & 0.0010 & 0.0001 \\
	& & & 25 & 0.9922 & -0.0078 & 0.0003 & 0.0005 & 0.000* & 1.0075 & 0.0075 & 0.0001 & 0.0003 & 0.000* \\
	& & & 50 & 0.9978 & -0.0022 & 0.000* & 0.0001 & 0.000* & 1.0019 & 0.0019 & 0.000* & 0.0001 & 0.000* \\
	& & & 100 & 0.9995 & -0.0005 & 0.000* & 0.000* & 0.000* & 1.0005 & 0.0005 & 0.000* & 0.000* & 0.000* \\

	\cmidrule(lr){2-14} & \multirow{4}{*}{2.0} & \multirow{4}{*}{1.0000} & 10 & 0.9696 & -0.0304 & 0.0041 & 0.0018 & 0.0005 & 1.0333 & 0.0333 & 0.0019 & 0.0009 & 0.0001 \\
	& & & 25 & 0.9943 & -0.0057 & 0.0002 & 0.0003 & 0.000* & 1.0063 & 0.0063 & 0.0001 & 0.0003 & 0.000* \\
	& & & 50 & 0.9985 & -0.0015 & 0.000* & 0.0001 & 0.000* & 1.0013 & 0.0013 & 0.000* & 0.0001 & 0.000* \\
	& & & 100 & 0.9996 & -0.0004 & 0.000* & 0.000* & 0.000* & 1.0002 & 0.0002 & 0.000* & 0.000* & 0.000* \\

	\cmidrule(lr){2-14} & \multirow{4}{*}{3.0} & \multirow{4}{*}{1.0000} & 10 & 0.9774 & -0.0226 & 0.0022 & 0.0013 & 0.0003 & 1.0295 & 0.0295 & 0.0016 & 0.0009 & 0.0001 \\
	& & & 25 & 0.9959 & -0.0041 & 0.0001 & 0.0002 & 0.000* & 1.0042 & 0.0042 & 0.000* & 0.0002 & 0.000* \\
	& & & 50 & 0.9990 & -0.0010 & 0.000* & 0.0001 & 0.000* & 1.0008 & 0.0008 & 0.000* & 0.0001 & 0.000* \\
	& & & 100 & 0.9997 & -0.0003 & 0.000* & 0.000* & 0.000* & 1.0002 & 0.0002 & 0.000* & 0.000* & 0.000* \\

	\cmidrule(lr){2-14} & \multirow{4}{*}{4.0} & \multirow{4}{*}{1.0000} & 10 & 0.9823 & -0.0177 & 0.0012 & 0.0009 & 0.0002 & 1.0251 & 0.0251 & 0.0012 & 0.0007 & 0.0001 \\
	& & & 25 & 0.9965 & -0.0035 & 0.0001 & 0.0002 & 0.000* & 1.0031 & 0.0031 & 0.000* & 0.0002 & 0.000* \\
	& & & 50 & 0.9992 & -0.0008 & 0.000* & 0.0001 & 0.000* & 1.0007 & 0.0007 & 0.000* & 0.000* & 0.000* \\
	& & & 100 & 0.9998 & -0.0002 & 0.000* & 0.000* & 0.000* & 1.0001 & 0.0001 & 0.000* & 0.000* & 0.000* \\

	\cmidrule(lr){2-14} & \multirow{4}{*}{5.0} & \multirow{4}{*}{1.0000} & 10 & 0.9840 & -0.0160 & 0.0012 & 0.0010 & 0.0002 & 1.0201 & 0.0201 & 0.0008 & 0.0006 & 0.0001 \\
	& & & 25 & 0.9971 & -0.0029 & 0.000* & 0.0002 & 0.000* & 1.0023 & 0.0023 & 0.000* & 0.0001 & 0.000* \\
	& & & 50 & 0.9993 & -0.0007 & 0.000* & 0.000* & 0.000* & 1.0005 & 0.0005 & 0.000* & 0.000* & 0.000* \\
	& & & 100 & 0.9998 & -0.0002 & 0.000* & 0.000* & 0.000* & 1.0001 & 0.0001 & 0.000* & 0.000* & 0.000* \\

	\cmidrule(lr){1-14}
	\multirow{24}{*}{$\mathrm{VMR}$} & \multirow{4}{*}{0.5} & \multirow{4}{*}{1.0000} & 10 & 0.8609 & -0.1391 & 0.4264 & 0.0202 & 0.0369 & 0.9544 & -0.0456 & 0.6198 & 0.0249 & 0.0687 \\
	& & & 25 & 0.9286 & -0.0714 & 0.2351 & 0.0152 & 0.0277 & 0.9866 & -0.0134 & 0.3190 & 0.0179 & 0.0488 \\
	& & & 50 & 0.9718 & -0.0282 & 0.1342 & 0.0116 & 0.0117 & 1.0053 & 0.0053 & 0.1611 & 0.0127 & 0.0161 \\
	& & & 100 & 0.9728 & -0.0272 & 0.0696 & 0.0083 & 0.0047 & 0.9899 & -0.0101 & 0.0763 & 0.0087 & 0.0058 \\

	\cmidrule(lr){2-14} & \multirow{4}{*}{1.0} & \multirow{4}{*}{1.0857} & 10 & 0.9243 & -0.1614 & 0.3815 & 0.0189 & 0.0251 & 1.0056 & -0.0801 & 0.5317 & 0.0229 & 0.0459 \\
	& & & 25 & 1.0237 & -0.0620 & 0.2311 & 0.0151 & 0.0211 & 1.0691 & -0.0167 & 0.2874 & 0.0170 & 0.0342 \\
	& & & 50 & 1.0696 & -0.0162 & 0.1308 & 0.0114 & 0.0085 & 1.0957 & 0.0100 & 0.1479 & 0.0122 & 0.0108 \\
	& & & 100 & 1.0690 & -0.0167 & 0.0688 & 0.0083 & 0.0035 & 1.0828 & -0.0029 & 0.0733 & 0.0086 & 0.0040 \\

	\cmidrule(lr){2-14} & \multirow{4}{*}{2.0} & \multirow{4}{*}{1.4909} & 10 & 1.3737 & -0.1172 & 0.6434 & 0.0251 & 0.0400 & 1.4621 & -0.0288 & 0.8866 & 0.0298 & 0.0726 \\
	& & & 25 & 1.4096 & -0.0813 & 0.2791 & 0.0165 & 0.0151 & 1.4505 & -0.0404 & 0.3193 & 0.0178 & 0.0193 \\
	& & & 50 & 1.4773 & -0.0136 & 0.1705 & 0.0131 & 0.0118 & 1.5016 & 0.0107 & 0.1883 & 0.0137 & 0.0146 \\
	& & & 100 & 1.4949 & 0.0040 & 0.0821 & 0.0091 & 0.0038 & 1.5075 & 0.0165 & 0.0858 & 0.0093 & 0.0041 \\

	\cmidrule(lr){2-14} & \multirow{4}{*}{3.0} & \multirow{4}{*}{2.0000} & 10 & 1.8531 & -0.1469 & 0.8766 & 0.0293 & 0.0626 & 1.9453 & -0.0547 & 1.1630 & 0.0341 & 0.1111 \\
	& & & 25 & 1.9939 & -0.0061 & 0.4104 & 0.0203 & 0.0270 & 2.0413 & 0.0413 & 0.4813 & 0.0219 & 0.0353 \\
	& & & 50 & 1.9738 & -0.0262 & 0.2014 & 0.0142 & 0.0106 & 1.9946 & -0.0054 & 0.2147 & 0.0147 & 0.0118 \\
	& & & 100 & 1.9922 & -0.0078 & 0.0999 & 0.0100 & 0.0049 & 2.0029 & 0.0029 & 0.1031 & 0.0102 & 0.0052 \\

	\cmidrule(lr){2-14} & \multirow{4}{*}{4.0} & \multirow{4}{*}{2.5474} & 10 & 2.3781 & -0.1692 & 1.0873 & 0.0326 & 0.0677 & 2.4771 & -0.0703 & 1.4307 & 0.0378 & 0.1189 \\
	& & & 25 & 2.5114 & -0.0360 & 0.4675 & 0.0216 & 0.0268 & 2.5467 & -0.0007 & 0.5302 & 0.0230 & 0.0365 \\
	& & & 50 & 2.5421 & -0.0052 & 0.2464 & 0.0157 & 0.0118 & 2.5574 & 0.0100 & 0.2576 & 0.0161 & 0.0128 \\
	& & & 100 & 2.5503 & 0.0030 & 0.1189 & 0.0109 & 0.0058 & 2.5572 & 0.0099 & 0.1213 & 0.0110 & 0.0059 \\
	\cmidrule(lr){2-14} & \multirow{4}{*}{5.0} & \multirow{4}{*}{3.1130} & 10 & 3.0522 & -0.0608 & 1.5423 & 0.0392 & 0.0942 & 3.1433 & 0.0302 & 1.9668 & 0.0444 & 0.1688 \\
	& & & 25 & 3.0927 & -0.0203 & 0.5514 & 0.0235 & 0.0313 & 3.1158 & 0.0027 & 0.6020 & 0.0245 & 0.0353 \\
	& & & 50 & 3.1020 & -0.0110 & 0.3029 & 0.0174 & 0.0154 & 3.1091 & -0.0039 & 0.3124 & 0.0177 & 0.0161 \\
	& & & 100 & 3.1188 & 0.0058 & 0.1419 & 0.0119 & 0.0074 & 3.1218 & 0.0088 & 0.1441 & 0.0120 & 0.0076 \\
	\bottomrule
\end{longtable}
\vspace*{-8pt}
\hspace*{-20pt}
{\small * True value below the precision limit ($<0{.}0001$).}

}

\clearpage

\section{Application to real data}\label{cap:dados-reais}

The data used correspond to the Gross Domestic Product (GDP) per capita\footnote{Indicator *GDP (Gross Domestic Product) per capita, PPP (constant 2021 international \$)*, which measures per-capita GDP adjusted by Purchasing Power Parity (PPP) and expressed in constant 2021 international dollars. Source: *World Bank Open Data*~\citep{WorldbankGDP2024}.} for countries in North America and Oceania. We considered observations for the most recent available year (2023), totaling $n = 18$ countries. Values were expressed in thousands of dollars to facilitate interpretation and comparison across the economies analyzed.

Figure~\ref{fig:hist-pib} shows a histogram for the GDP data set. It reveals a right-skewed distribution, with most countries concentrated in the lower GDP-per-capita range. The gap between the mean (22.99 thousand dollars) and the median (9.68 thousand dollars) confirms this asymmetry, while the high standard deviation (25.50 thousand dollars) indicates substantial dispersion and reinforces the heterogeneity among economies. A mild bimodality is also observed, with one large group of lower-income countries and a smaller group composed of developed economies.

\begin{figure}[!ht]
\centering
\includegraphics[width=0.60\textwidth]{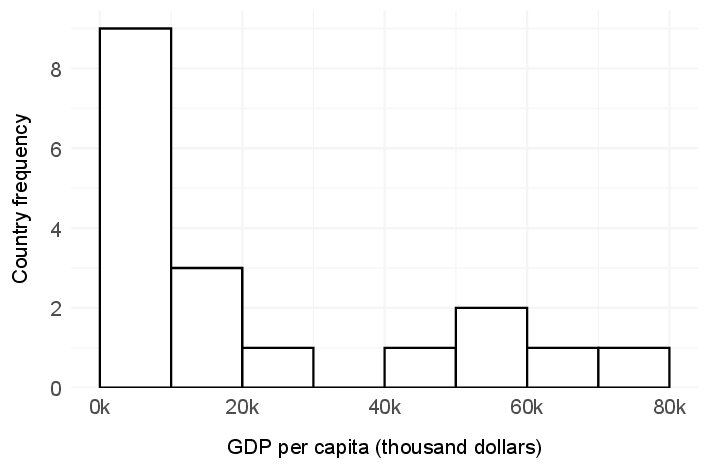}
\caption{Distribution of GDP per capita (in thousands of dollars) for countries in North America and Oceania in 2023.}
\label{fig:hist-pib}
\end{figure}

Figure~\ref{fig:hist-pib-with-curve} highlights this bimodal pattern through the fit of a mixture of gamma distributions, which is suitable for capturing both the separation between groups and the internal asymmetry of each component. The estimated parameters are: mixture proportions $\hat{\pi}_1 = 0.7221$ and $\hat{\pi}_2 = 0.2779$, shape parameters $\hat{\alpha}_1 = 2.7966$ and $\hat{\alpha}_2 = 21.9126$, and a common rate parameter $\hat{\lambda} = 0.3528$. The estimated means and standard deviations of each component, computed as $\hat{\mu}_j = \hat{\alpha}_j / \hat{\lambda}$ and $\hat{\sigma}_j = \sqrt{\hat{\alpha}_j}/\hat{\lambda}$, are approximately 9.9 thousand and 5.9 thousand dollars for the first group and 77.4 thousand and 16.5 thousand dollars for the second. These results indicate that about 72\% of countries belong to a lower-income group with higher relative variability, while the remaining countries form a more homogeneous, high-income cluster, reflecting the structural duality in the distribution.

\begin{figure}[!ht]
\centering
\includegraphics[width=0.60\textwidth]{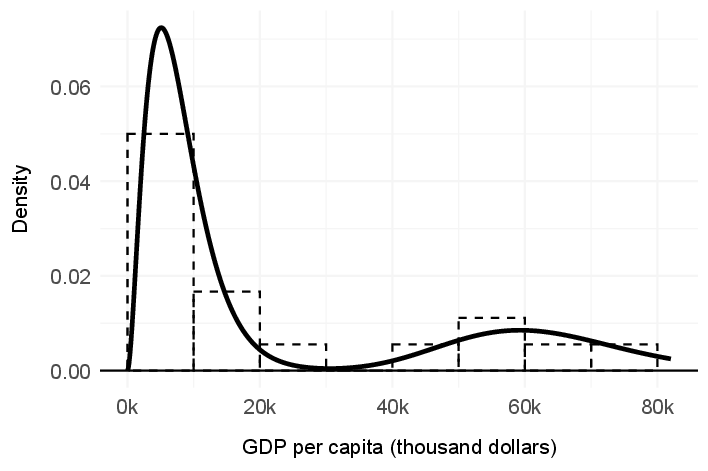}
\caption{Empirical distribution of GDP per capita (in thousands of dollars) for countries in North America and Oceania in 2023, showing the bimodality captured by the fitted curve.}
\label{fig:hist-pib-with-curve}
\end{figure}

Based on the fitted mixture model, empirical estimates of selected inequality indices were obtained. For each index, we computed the standard estimator $\hat{\theta}$, the estimated analytical bias $\widehat{\mathrm{Bias}}$, evaluated at the fitted model parameters, and the bias-corrected estimator $\hat{\theta}_{\mathrm{corr}} = \hat{\theta} - \widehat{\mathrm{Bias}}$. The results are reported in Table~\ref{tab:empirical-bias}, whose joint interpretation allows for a comprehensive assessment of the observed inequality.

\begin{table}[h]
\centering
\caption{Empirical estimates of inequality indices and analytical bias corrections under a fitted gamma mixture model based on the GDP data set.}
\label{tab:empirical-bias}
\begin{tabular}{lrrr}
\toprule
\textbf{Index} & $\widehat{\theta}$ & $\widehat{\text{Bias}}$ & $\widehat{\theta}_{corr}$ \\
\midrule
Theil-T & $0.5282$ & $-0.0246$ & $0.5529$ \\
Theil-L & $0.6354$ & $-0.0364$ & $0.6718$ \\
Atkinson ($\varepsilon=1$) & $0.4703$ & $-0.0233$ & $0.4936$ \\
Atkinson ($\varepsilon\to\infty$) & $0.8898$ & $-0.1002$ & $0.9899$ \\
VMR & $28.2871$ & $-0.3068$ & $28.5939$ \\
\bottomrule
\end{tabular}
\end{table}

From Table~\ref{tab:empirical-bias}, we observe that among the measures analyzed, the indices Theil-T and Theil-L again stand out, with corrected values of $0.5529$ and $0.6718$, respectively. Their relative proximity indicates that most inequality is explained by the lower ranges of the income distribution, dominated by low- and middle-income countries. The index Theil-L, which assigns greater weight to lower incomes, shows a slightly higher value than Theil-T, which measures divergence around the mean and captures differences among middle and upper strata as well. The Atkinson index with $\varepsilon = 1$ has a corrected value of $0.4936$, implying that a perfectly equal distribution preserving the same level of social welfare would correspond to about 50.6\% of the global average income. In the limit case $\varepsilon \to \infty$, the corrected value of $0.9899$ suggests an equally distributed equivalent income near 1\% of the mean. Although extreme in magnitude, this result follows directly from the infinite sensitivity to the lowest incomes and is consistent with the strong asymmetry of the sample: the median (9.68 thousand dollars) is about one-third of the mean (22.99 thousand dollars), and the poorer countries cluster near the lower bound of the fitted distribution. Thus, the high value of $A(\infty)$ does not indicate model inconsistency; rather, it reflects the severe penalization of the lower tail under extreme inequality aversion.

The bias estimates shown in Table~\ref{tab:empirical-bias} were obtained from analytical corrections evaluated at the fitted mixture parameters—that is, they are plug-in corrections: the estimated parameters are substituted for the true ones in the theoretical bias expression. Although this procedure is consistent and widely used, it does not propagate the additional uncertainty arising from model fitting. Therefore, the corrected value should be interpreted as an approximation to the unbiased estimator under the fitted model, rather than as an exact correction in the strict frequentist sense.

The estimated dispersion index reinforces the conclusions drawn from the entropy-based measures. However, interpreting $\widehat{\mathrm{VMR}}_{\text{corr}} = 28.5939$ requires care: unlike entropy-based indices, the VMR is not scale-invariant. Since $\mathrm{Var}(X)$ and $\mathbb{E}[X]$ carry different units, the ratio $\mathrm{Var}(X)/\mathbb{E}[X]$ inherits the unit of $X$. This means that simply changing the unit from “thousand dollars” to “dollars” would multiply all VMR estimates by 1,000, without altering any substantive feature of the distribution but changing its numerical magnitude. Thus, the observed value reflects the chosen monetary unit and should not be directly compared with results expressed on different scales.

Finally, the analytical bias corrections were on the order of $10^{-2}$, showing that the empirical estimators are close to their unbiased counterparts under the fitted model; nevertheless, applying the corrections proved essential for removing residual bias and empirically validating the theoretical expressions derived in Section~\ref{Deriving estimator biases}.

\section{Concluding remarks}\label{concluding_remarks}

This study investigated the behavior of the Theil, Atkinson, and VMR inequality estimators under population heterogeneity modeled by finite mixtures of gamma distributions with a common rate parameter. Closed-form expressions for the expected value of plug-in estimators were obtained using Mosimann’s proportion–sum independence theorem and core properties of the Dirichlet distribution, allowing us to formally characterize how bias arises from the interaction between mixture components. Monte Carlo simulation results have shown that, in general, the absolute bias decreases as sample size grows, confirming the consistency of the estimators. However, the effect of heterogeneity is not uniform across inequality measures. Even so, the indices examined here exhibited predominantly negative bias, whereas the Gini coefficient estimator, according to \citet{VilaSaulo2025GiniMixture}, tends to display positive bias under mixture structures. The proposed corrections reduced the bias, especially in small and moderate samples, while differences in mean squared error were modest and diminished as $n$ increased. The empirical application to global per capita GDP data confirmed both the adequacy of gamma mixtures for capturing structural heterogeneity and the practical relevance of bias corrections. Important limitations include the assumption of a common rate parameter, the plug-in nature of the corrections, and the scale dependence of the VMR index. As part of future work, it would be of interest to extend the methodology to mixtures with distinct rate parameters, to develop bootstrap-based bias corrections, and to explore multivariate generalizations.  Work on these problems is currently under progress and we hope to report the findings in a future paper.

\section*{Acknowledgments}

This study was financed in part by the Coordenação de Aperfeiçoamento de Pessoal de Nível Superior (CAPES), Brazil – Finance Code 001. Helton Saulo gratefully acknowledges financial support from: (ii) the University of Brasília; and (ii) the Brazilian National Council for Scientific and Technological Development (CNPq), through Grant 304716/2023-5.

\section*{Declaration}

There are no conflicts of interest to disclose.

%\bibliographystyle{apalike}
%\bibliography{paper-ref}

%\appendix
%\section{Appendix}

%\typeout{}
%\bibliography{ref}

\end{document}